\newtheorem{theorem}{Theorem}
\newtheorem{definition}{Definition}
\newtheorem{lemma}{Lemma}
\newtheorem{remark}{Remark}
\newtheorem{proposition}{Proposition}
\newtheorem{corollary}{Corollary}
\newtheorem{problem}{Problem}
\newcommand{\ess}{\mathrm{ess}}
\def\BibTeX{{\rm B\kern-.05em{\sc i\kern-.025em b}\kern-.08em
		T\kern-.1667em\lower.7ex\hbox{E}\kern-.125emX}}
\begin{document}

	\title{Diversity and Interaction Quality of a Heterogeneous Multi-Agent System Applied to a Synchronization Problem }
	
	\author{Xin Mao, Dan Wang, Wei Chen, Li Qiu, \IEEEmembership{Fellow, IEEE}
		\thanks{This work was partially supported by the Research Grants Council of Hong Kong Special Administrative Region, China, under the General Research
			Fund 16203922.}
		\thanks{X.\ Mao is with the Department of Electronic and Computer Engineering,
			The Hong Kong University of Science and Technology, Clear Water Bay, Kowloon, Hong Kong S.A.R., China. E-mails: xmaoaa@connect.ust.hk.}
		\thanks{D. Wang is with the Division of Decision and Control Systems, KTH Royal Institute of Technology, Stockholm, Sweden. Email: danwang@kth.se.}
		\thanks{W. Chen is with the Department of Mechanics and Engineering Science,
			Peking University, Beijing 100871, China. Email: w.chen@pku.edu.cn.}
   \thanks{L. Qiu is with the School of Science and Engineering, The Chinese University of Hong Kong, Shenzhen, China. Email: qiuli@cuhk.edu.cn.}
	}
	
	\maketitle

	\begin{abstract}
		In this paper, scalable controller design to achieve output synchronization for a heterogeneous discrete-time nonlinear multi-agent system is considered. The agents are assumed to exhibit potentially nonlinear dynamics but share linear common oscillatory modes. In a distributed control architecture, scalability is ensured by designing a small number of distinguished controllers, significantly fewer than the number of agents, even when agent diversity is high. Our findings indicate that the number of controllers required can be effectively determined by the number of strongly connected components of the underlying graph. The study in this paper builds on the recently developed phase theory of matrices and systems. First, we employ the concept of matrix phase, specifically the phase alignability of a collection of matrices, to quantify agent diversity. Next, we use matrix phase, particularly the essential phase of the graph Laplacian, to evaluate the interaction quality among the agents. Based on these insights, we derive a sufficient condition for the solvability of the synchronization problem, framed as a trade-off between the agent diversity and the interaction quality. In the process, a controller design procedure based on Lyapunov analysis is provided, which produces low gain, component-wise synchronizing controllers when the solvability condition is satisfied. Numerical examples are given to illustrate the effectiveness of the proposed design procedure. Furthermore, we consider cases where the component-wise controller design problem is unsolvable. We propose alternative strategies involving the design of a small inventory of controllers, which can still achieve synchronization effectively by employing certain clustering methods to manage heterogeneity.		
	\end{abstract}
	
	\begin{IEEEkeywords}
		Synchronization,  multi-agent system, diversity, strongly connected components, low gain controller
	\end{IEEEkeywords}
	
	\section{Introduction}
	Synchronization is a phenomena that the behaviours of coordinated agents converge to a common trajectory over time. This process is ubiquitous in both nature and engineering, with wide-ranging applications, particularly in  power systems \cite{dorfler2013synchronization}, biological systems \cite{strogatz1990biology}, parallel/distributed computing and social network \cite{arenas2008synchronization}. The extensively studied topics including consensus, flocking, swarming, formation, rendezvous in recently years can be unified in the output synchronization framework \cite{cucker2007emergent,jadbabaie2003coordination,ren2005consensus}.
	It is noteworthy that the majority of works in the literature focuses on continuous-time multi-agent systems. However, many practical applications naturally involve problems in a discrete-time setup. For example, early studies on the consensus problem originated from the field of distributed computing in the computer science community. In this context, distributed processors collectively solve a computational task, with each processor storing a discrete-time variable that evolves according to a specific iterative strategy \cite{bertsekas1989parallel}. In social networks, the seminal work \cite{degroot1974reaching} introduces a discrete-time updating rule, now known as the DeGroot model, to describe opinion dynamics. Here, a group of individuals act together as a team, each holding an opinion represented by a probability density function. After communication, beliefs are updated using a row-stochastic opinion update matrix in the next step. The investigation focuses on whether consensus can be reached on each issue. Additionally, the well-known Vicsek model is a simple yet powerful discrete-time model in the flocking literature \cite{vicsek1995novel}. In this model, each agent moves at the same speed but with different headings, following a nearest neighbor interaction rule. Over time, all agents align and move in the same direction. These examples motivate the study of discrete-time complex networks.
	
	Early research on synchronization in multi-agent systems often assumed that the dynamics of agents were either first-order or second-order, both in continuous and discrete time domains \cite{jadbabaie2003coordination, ren2008distributed}. Initial studies focused on consensus under fixed directed topologies, where graph theory played a pivotal role. A well-known result from this line of research is that a necessary and sufficient condition for the consensus of a group of agents modeled by a single integrator is the presence of a spanning tree in the communication graph \cite{olfati2007consensus}. However, practical applications often involve scenarios far more complex and less idealized. Constraints in communication or sensing significantly increase the control challenges. As a result, researchers began to account for more intricate edge dynamics, including linear time-invariant (LTI) dynamic topologies, time-varying topologies, noise, time delays, and packet drops in communication channels \cite{ olfati2007consensus, qi2016mas, ren2005consensus}.

Another significant research direction involves investigating more complex agent dynamics from a dynamic systems perspective. Early studies focused on homogeneous networks, where agents share identical LTI dynamics and properties \cite{scardovi2009synchronization, you2011network}. For such networks, employing a uniform controller across all agents is both natural and efficient. This approach simplifies control
design and analysis by uniformly applying a single controller to all agents, ensuring consistent behavior and facilitating synchronization. Prior work has demonstrated that a low-gain constant controller can effectively solve the synchronization problem for discrete-time integrators \cite{olfati2007consensus, ren2008distributed}, and the homogeneous case has been explored with uniform controllers \cite{ gu2011consensusability,li2009consensus,you2011network}. Over the last decade, however, attention has increasingly shifted toward heterogeneous networks, where agents exhibit diverse dynamics, including time-varying and nonlinear behaviors \cite{alvergue2015output, arcak2007passivity, lestas2006scalable,pates2016scalable}. This shift highlights the complex interplay between agent dynamics, interaction protocols, and controllers, especially in general networks. It is well established that the synchronization problem can be framed as a simultaneous stabilization problem \cite{ren2008distributed}, enabling the application of various stability theory techniques. Given the large-scale nature of these networks, scalability is a critical consideration in control strategy design, which is often distributed due to the limited availability of global information.

In making a large-scale heterogeneous network work properly under a small number of control actions, we have the intuition that the difficulty relies on the degree of heterogeneity of the agents. We call such a degree the agent diversity. We also have the intuition that the quality of interaction of the underlining graph contributes to the difficulty. In this paper, we will precisely define such concepts using recent advances in matrix phase analysis \cite{wd2019,wd2023}. Our main result gives a solvability condition to the synchronizing controller design problem mentioned above, which shows a simple trade-off between the agent diversity and interaction quality.

The use of phase analysis in multi-agent network synchronization analysis and synthesis has been reported in \cite{ mao2024robust,wang2024synchronization}. Building on these foundations, we investigate networks with possibly nonlinear agent dynamics and derive quantitative phase-type conditions for synchronizing controller design. The nonlinear dynamics are assumed to be linear in oscillatory modes and nonlinear in transient modes. Our primary focus is on designing component-wise controllers based on the network topology, where the number of controllers is equal to the number of strongly connected components of a graph. This is in contrast with the common practice in the literature of adopting agent-based controllers where each agent calls for a specially designed controller. Such a practice has no scalability but excellent solvability. This is also in contrast to a uniform control approach \cite{wang2024synchronization} where an identical controller is utilized for all agents. Such an approach tends to have a too strong scalabilty with sacrified solvability. The component-wise controller approach is motivated by our recent findings that demonstrate its effectiveness in achieving a balance between scalability and performance. This hierarchical strategy enables the assignment of controllers to different strongly connected components, simplifying the control design while maintaining the flexibility to accommodate agent diversity.

 In this paper, we pay great attention to the solvability issue, i.e., whether a suitable set of controllers can be designed to enforce synchronization. When the solvability condition is satisfied, we provide a systematic design procedure for constructing viable component-wise controllers. Component-wise controllers offer the additional benefit of addressing a certain level of agent diversity by employing robust control techniques that focus on worst-case scenarios. To further enhance the balance of scalability and solvability, in case when the component-wise controller architecture does not suffice, we propose to partition the agents into clusters based on our physical understanding and even possibly artificial intelligence clustering techniques and use one dedicated specific controller for each intersection of a cluster and a component. This component and cluster combined approach provides an effective way to manage heterogeneity in scenarios where component-wise control is inadequate.

The remainder of the paper is organized as follows. Section~\ref{sec:preliminaries} provides the necessary background and preliminaries on graph theory and nonlinear system stability. The synchronization problem formulation is presented in Section~\ref{sec:synformulation}. In Section~\ref{sec:diversity}, we introduce matrix diversity and network symmetry, focusing on aspects related to matrix phases, simultaneous alignment, and network symmetry. Results on synchronization synthesis are discussed in Section~\ref{sec:synthesis}, covering both component-wise and component-cluster combined controller designs. Section~\ref{sec:simulation} provides simulation results, and the paper concludes in Section~\ref{sec:conclusion}.
	
Notation used in this paper is mostly standard. Let $\mathbb{R}$ and $\mathbb{C}$ be the set of real and complex numbers, respectively. Denote the set of integers by $\mathbb{Z}$. For a matrix $A\in\mathbb{C}^{n\times n}$, $A^*$ denotes its complex conjugate transpose. Let $\mathcal R(A)$ be the range of $A$. Let $\lambda(A)$ and $\angle\lambda(A)$ be the sets of eigenvalues and their angles of $A$. The Kronecker product of two matrices $A$ and $B$ is denoted by $A\otimes B$. The identity matrix is denoted by $I$. For a vector $x\in\mathbb{C}^n$, $x^*$ denotes its complex conjugate transpose. We use $\mathbf{1}$ to denote the vectors with all entries equal to $1$. The Euclidean norm is denoted by $\| \cdot \|_2$. Let $\mathcal{R}^{m\times m}$ be the set of $m\times m$ real rational transfer matrices and let $\mathcal{RH}_\infty^{m\times m}\subset\mathcal{R}^{m\times m}$ contain all its proper stable elements. In this paper, we will adopt the $z$-transform in discrete time. Therefore, $\mathcal{RH}_\infty^{m\times m}$ is the set of real rational transfer matrices with poles in the open unit disk. This is different from the definition in the complex function theory in which $\mathcal{RH}_\infty$ usually means the set of real rational functions analytic on the closed unit disk.

	\section{Preliminaries}\label{sec:preliminaries}
	\subsection{Graph theory}
	Consider a directed graph $\mathcal{G}=(\mathcal{V}, \mathcal{E})$ with a set of vertices $\mathcal{V}=\{v_1, \dots, v_n\}$ and a set of directed edges $\mathcal{E}\subseteq\mathcal{V}\times\mathcal{V}$. A sequence of edges $(v_1,v_2), (v_2,v_3),\dots,(v_{k-1},v_k)$ with $(v_{j-1},v_j)\in\mathcal{E}$, $j=\{2,\dots,k\}$ is called a directed path from node $v_1$ to node $v_k$. Here, we assume that the directed graph does not have self-loops. A directed graph $\mathcal{G}$ is said to be strongly connected if every vertex has paths to every other vertex. It is said to have a spanning tree if at least one node exists, called a root, that has directed paths to all other nodes. Obviously, the existence of a spanning tree is a weaker condition than being strongly connected. The graph $\mathcal{G}$ is called an undirected graph if all the edges are bidirectional. For an undirected graph, the strongly connectedness is simply termed connectedness. Furthermore, the existence of a spanning tree is equivalent to being connected. A weighted graph is a graph in which a weight is assigned to each edge. A strongly connected graph is (weight) balanced if for each node, the total coming weights are equal to the total leaving weights.
	
	The weighted adjacency matrix is defined as $A=[a_{ij}]$, where $a_{ji}$ is a positive real number (the weight) if $(v_i, v_j)\in\mathcal{E}$ and $a_{ji}=0$ otherwise. The indegree matrix is defined as $D=\text{diag}\{\sum_{j=1}^n a_{1j}, \dots, \sum_{j=1}^n a_{nj}\}$. The Laplacian matrix is defined as $L=D-A$, whose row sums are equal to zero. A Laplacian matrix always has a zero eigenvalue with a corresponding right eigenvector $\mathbf{1}_n$. A necessary and sufficient condition for $\mathbf{1}_n$ being also a left eigenvector corresponding to zero eigenvalue is that the graph is balanced \cite{olfati2007consensus}. Furthermore, zero is a simple eigenvalue if and only if the graph has a spanning tree \cite{ren2005consensus}. The Laplacian matrix of a strongly connected graph is irreducible, i.e., not similar via permutation to a block upper triangular matrix. 

\subsection{Nonlinear systems and feedback stability} 
A signal of $m$-dimension is a bilateral sequence 
\[
u=\{\dots,u(-2),u(-1), \mid u(0), u(1), u(2), \dots \}
\]
where $u(t)\in\mathbb{R}^m$. The underlying time axis is the set of integers $\mathbb{Z}$. The vertical line marks the zero-time index, providing a reference point for the sequence. The set of all signals is denoted by $\ell(\mathbb{Z})$. A signal $u\in\ell(\mathbb Z)$ is said to belong to $\ell_2(\mathbb Z)$ if 
\[
\sum_{n=-\infty}^{\infty}\|u(t)\|_2^2<\infty.
\]
The causal subspace $\ell_2(\mathbb Z_+)$ is defined as
\[
\ell_2(\mathbb Z_+)=\left\{u\in\ell_2(\mathbb Z): u(t)=0\ \mathrm{for}\ t<0\right\}.
\]
	
Consider a discrete-time dynamical system characterized by the input-output relation 
	\[
	y=\bm Pu,
	\]
	where $\bm P$ is operator mapping signals from one $\ell_2(\mathbb Z_+)$ space to another. It is possibly nonlinear. For $T\in\mathbb{R}$, define the truncation $\mathbf \Gamma_T$ on all $u$ by
 \begin{equation*}
(\mathbf\Gamma_Tu)(t)=\begin{cases}u(t),\ &t\leq T,\\
0,\ &t>T.
\end{cases}
 \end{equation*}
 The operator $\bm P$ is said to be causal if $\mathbf\Gamma_T\mathbf P = \mathbf\Gamma_T\mathbf P\mathbf\Gamma_T$
for all $T\in\mathbb Z$ and is said to be noncausal if it is not causal. Assume that $\bm P$ is a causal operator, and it always maps the zero signal to itself, i.e., $\bm P0=0$. In addition, we restrict our attention to nontrivial systems with an equal number of inputs and outputs.

The $\mathcal H_\infty$ norm of the system $\bm P$ is defined as
   	\[
    \|\bm P\|_\infty=\sup_{u\in \ell_2(\mathbb Z_+), u\neq 0}\frac{\|\bm Pu\|_2}{\|u\|_2}.
    \]	
    This gain measures the maximum amplification of the energy of any input signal by the system. Here we consider the input-output stability of the system.
	A system $\bm P$ is said to be stable if $\|\bm P\|_\infty<\infty$.

	Consider a standard closed-loop system with a negative feedback configuration as illustrated in Fig.~\ref{fig:closedsys}, where $\bm P$ and $\bm C$ represent stable nonlinear systems. The feedback interconnection of these systems, denoted as $\bm P\#\bm C$, is said to be stable if the inverse of $\begin{bmatrix}\bm I & \bm C\\
	-\bm P & \bm I  \end{bmatrix}$ exists and is both causal and stable, where $\bm I$ denotes the identity system.
	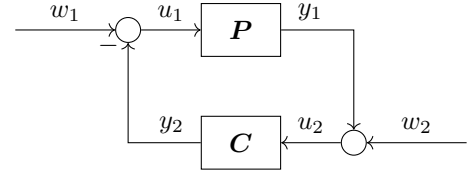
\begin{figure}[htbp]
		\centering
		\tikzstyle{block} = [draw, rectangle,
		minimum height=2em, minimum width=3em]
		\tikzstyle{sum} = [draw, circle, node distance=1.5cm]
		\tikzstyle{input} = [coordinate]
		\tikzstyle{output} = [coordinate]
		\tikzstyle{pinstyle} = [pin edge={to-,thin,black}]
		
		\begin{tikzpicture}[auto, node distance=1.5cm]
			\node [input, name=input] {};
			\node [sum, right of=input] (sum1) {};
			\node [block, right of=sum1] (P) {$\bm P$};
			\node [block, below of=P] (C) {$\bm C$};
			\node [sum, right of=C] (sum2) {};
			\node [output, right of=sum2] (output) {};
			\draw [draw,->] (input) -- node {$w_1$} (sum1);
			\draw [->] (sum1) -- node {$u_1$} (P);
			\draw [->] (P) -| node [pos=0.2] {$y_1$} (sum2);
			\draw [<-] (C) -- node {$u_2$} (sum2);
			\draw [<-] (sum2) -- node {$w_2$} (output);
			\draw [->] (C) -| node[pos=0.99] {$-$}
			node [pos=0.2, above] {$y_2$} (sum1);
		\end{tikzpicture}
		\caption{\label{fig:closedsys}Negative feedback interconnection of $\bm P$ and $\bm C$.}
	\end{figure}
 A fundamental result in robust control theory, particularly within the context of $\mathcal H_\infty$ control, is the nonlinear small gain theorem \cite{van2000l2,desoer2009feedback}. This theorem provides a sufficient condition for the stability of the closed-loop system $\bm P\#\bm C$.
	\begin{lemma}[Nonlinear Small Gain Thoerem]
		The closed-loop system is stable if $\|\bm P\|_\infty\|\bm C\|_\infty<1$.
	\end{lemma}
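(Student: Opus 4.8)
The plan is to read off the loop equations from Fig.~\ref{fig:closedsys} and to show that the map sending the external inputs $(w_1,w_2)$ to the internal signals $(u_1,u_2)$ is well defined, causal, and finite-gain bounded on $\ell_2(\mathbb{Z}_+)$. From the interconnection one has $u_1=w_1-\bm Cu_2$ and $u_2=w_2+\bm Pu_1$, so inverting the operator $\begin{bmatrix}\bm I & \bm C\\ -\bm P & \bm I\end{bmatrix}$ amounts exactly to solving these two coupled relations for $(u_1,u_2)$ in terms of $(w_1,w_2)$. Establishing that this inverse exists and is causal and stable is then the three things I must verify.

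First I would address well-posedness, i.e. existence and causality. Substituting one equation into the other gives the fixed-point relation $u_1=w_1-\bm C(w_2+\bm Pu_1)$, which I would try to solve horizon by horizon: because a causal operator cannot let future data influence the present, the causality identities $\mathbf\Gamma_T\bm P=\mathbf\Gamma_T\bm P\mathbf\Gamma_T$ and $\mathbf\Gamma_T\bm C=\mathbf\Gamma_T\bm C\mathbf\Gamma_T$ show that $u_1$ on $[0,T]$ is determined by the inputs on $[0,T]$ alone, so a solution can in principle be produced by a fixed-point iteration on successive finite horizons. I expect this to be the main obstacle: the hypothesis controls only the gain relative to the zero input (recall $\bm P0=0$) rather than an incremental/Lipschitz gain, so the iteration need not contract. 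To secure existence, uniqueness, and causality one must therefore either strengthen the margin to an incremental small-gain condition or assume well-posedness at the outset, as is standard in the nonlinear treatments of \cite{van2000l2,desoer2009feedback}.

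The core of the proof is an a priori bound obtained by truncation. Applying $\mathbf\Gamma_T$ to both loop equations and using causality together with the gain inequalities $\|\mathbf\Gamma_T\bm Pu_1\|_2\le\|\bm P\|_\infty\|\mathbf\Gamma_Tu_1\|_2$ and $\|\mathbf\Gamma_T\bm Cu_2\|_2\le\|\bm C\|_\infty\|\mathbf\Gamma_Tu_2\|_2$, I would derive
\[
\|\mathbf\Gamma_Tu_1\|_2\le\|\mathbf\Gamma_Tw_1\|_2+\|\bm C\|_\infty\bigl(\|\mathbf\Gamma_Tw_2\|_2+\|\bm P\|_\infty\|\mathbf\Gamma_Tu_1\|_2\bigr).
\]
Collecting the $\|\mathbf\Gamma_Tu_1\|_2$ terms and dividing by $1-\|\bm P\|_\infty\|\bm C\|_\infty>0$, which is exactly where the hypothesis enters, yields a bound uniform in $T$,
\[
\|\mathbf\Gamma_Tu_1\|_2\le\frac{\|\mathbf\Gamma_Tw_1\|_2+\|\bm C\|_\infty\|\mathbf\Gamma_Tw_2\|_2}{1-\|\bm P\|_\infty\|\bm C\|_\infty},
\]
and symmetrically for $u_2$.

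Finally I would let $T\to\infty$. Since $\|\mathbf\Gamma_Tu_1\|_2$ increases monotonically to $\|u_1\|_2$ while the right-hand side stays bounded by the $\ell_2$ norms of $w_1$ and $w_2$, monotone convergence forces $u_1,u_2\in\ell_2(\mathbb{Z}_+)$ with the same bound, so the inverse is finite-gain stable. Causality of the inverse is inherited from the horizon-by-horizon construction, since the solution on $[0,T]$ depends only on inputs on $[0,T]$. Together these establish that $\bm P\#\bm C$ is stable in the sense defined above.
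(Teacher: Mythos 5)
The paper does not prove this lemma at all: it is imported verbatim from the cited references \cite{van2000l2,desoer2009feedback}, so there is no in-paper argument to compare against. Your proposal is, in substance, the classical truncation proof from exactly those sources, and its core is correct: the loop equations $u_1=w_1-\bm Cu_2$, $u_2=w_2+\bm Pu_1$ do correspond to inverting $\begin{bmatrix}\bm I & \bm C\\ -\bm P & \bm I\end{bmatrix}$, the causality identities $\mathbf\Gamma_T\bm P=\mathbf\Gamma_T\bm P\mathbf\Gamma_T$ justify applying the gain bounds to truncated signals, and the uniform-in-$T$ estimate followed by monotone convergence of $\|\mathbf\Gamma_Tu_i\|_2\nearrow\|u_i\|_2$ is exactly where the hypothesis $\|\bm P\|_\infty\|\bm C\|_\infty<1$ is spent. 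You are also right to flag, rather than paper over, the well-posedness issue: under this paper's definition, existence and causality of the inverse are part of the \emph{conclusion}, and the non-incremental gain condition (which only bounds amplification relative to $\bm P0=0$) cannot produce them by contraction; the standard nonlinear statements either assume well-posedness of the interconnection outright or strengthen the hypothesis to incremental ($\ell_2$-Lipschitz) gains. One observation that would let you close the gap in the setting where the lemma is actually used here: in discrete time, if at least one of the two operators in the loop is strictly causal, the loop equations can be solved uniquely by forward recursion in $t$, so well-posedness is automatic; this holds in the paper's application, where the persistent plant modes $\bar P_i(z)$ are strictly proper. With that remark added (or with well-posedness stated as a standing assumption, matching the cited references), your argument is a complete and faithful proof of the lemma as the literature states it.
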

This theorem underscores the critical importance of ensuring that the product of the system gains remains below unity to guarantee the stability of an interconnected system. It has become a foundational principle in the analysis and design of robust control systems, significantly influencing the development of $\mathcal H_\infty$ control theory and other advanced control strategies. The application of this theorem is particularly useful in our work, where maintaining robust stability in the face of uncertainties is paramount.

\section{Problem formulation}\label{sec:synformulation}
Consider a complex discrete-time dynamic network of heterogeneous agents. Each agent $\bm P_i$ consists of two components: a persistent mode $\bar{\bm P}_i$ and a stable mode $\bm\Delta_i$ such that 
\begin{align}\label{eq:agent}
\bm P_i=\bar{\bm P}_i+\bm\Delta_i.    
\end{align}
Assume that the persistent plant mode $\bar{\bm P}_i$ is linear time-invariant, characterized by strictly proper rational transfer function $\bar P_{i}(z)\in\mathcal{R}^{m\times m}$, and is represented in the state space form as:
	\[
	\bar P_{i}(z)=\left[\begin{array}{c|c}
		A_i&B_i\\\hline
		C_i&0
	\end{array}\right]=C_i(zI-A_i)^{-1}B_i
	\]
  for $ i=1,2,\dots,n$, where $A_i, B_i, C_i$ are the state, input and output matrices respectively. The dimensions of the inputs and outputs of all the agents are $m$. It is assumed that $(A_i, B_i)$ are controllable, $(C_i, A_i)$ are observable for each agent.
	
The agents are considered to be semi-stable, meaning that $\bar P_{i}(z)$ is semi-stable and $\bm\Delta_i$ is stable. Assume these agents share common persistent internal modes, i.e., the eigenvalues of $A_i$ are the same and lie on the unit circle, allowing them to autonomously generate the same common oscillatory outputs. We denote this set of modes by $e^{j\Omega}=\{e^{j0},e^{\pm j\omega_1},\dots, e^{\pm j\omega_q}, e^{j\pi}\}$, with $0<\omega_1<\dots<\omega_q<\pi$. Assume that for each persistent mode, its geometric multiplicity and algebraic multiplicity are the same, which is equal to $m$. Despite the commonality, the agents may differ significantly in their stable modes and system orders.   The partial fractional expansion of $\bar P_i(z)$ can be written in the form
	{\begin{multline}
			\bar P_i(z)=\frac{N_{0i}}{z-1}+\frac{N_{\pi i}}{z+1}+\frac{N_{1i}}{z-e^{j\omega_1}}+\frac{\bar{N}_{1i}}{z-e^{-j\omega_1}}\\+\cdots+\frac{N_{qi}}{z-e^{j\omega_q}}+
			\frac{\bar{N}_{qi}}{z-e^{-j\omega_q}}, \label{eq: dtagents}
	\end{multline}}%
	\normalsize
	where $N_{0i}, N_{\pi i}\in\mathbb{R}^{m\times m}$ are the residues of $\bar P_i(z)$ at the pole $1$ and $-1$, $N_{li}\in\mathbb{C}^{m\times m}$ for $l=1,\dots,q,$ are the residues of $\bar P_i(z)$ at the pole $e^{j\omega_l}$.
	
Let $u_i(t)\in\mathbb{R}^m$ and $y_i(t)\in\mathbb{R}^m$ be the input and output of $i$-th agent, respectively. Assume that the agents in the network are coupled through a diffusive interaction. The dynamic behavior of the agents is given by $y_i=\bm P_iu_i$. Here $u_i=w_i+v_i$, where $w_i$ is a bias function accounting for the initial conditions of the agents and $v_i$ is the control input  determined by the relative output differences between connected agents. The input $v_i$ is captured by
	\[
v_i=\sum_{(i,j)\in\mathcal{E}}a_{ij}\bm C_i(y_j-y_i),
	\]
where $\bm C_i$ is the local controller assigned to agent $i$ and $a_{ij},i,j=1,\dots,n$, are the nonnegative edge weights of the underlying graph of the network, capturing the underlying graph's topology and interaction strengths. The weights $a_{ij}$ are assumed to be known {\em a priori}. Denote the Laplacian matrix of the directed graph with the static nonnegative edge weights $a_{ij}$ by $L$. Assume the controller $\bm C_i$ is linear time-invariant, with the transfer function given by $C_i(z)$.
A particular case of interest is when $\bm C_i$ is a static controller. The block diagram representing the system is illustrated in  Fig. \ref{fig: directednetwork}. Assuming the directed graph has a spanning tree, which ensures minimal connectivity, our goal is to achieve output synchronization in the network. The multi-agent system is said to reach output synchronization if  $\lim\limits_{t\rightarrow\infty}(y_i(t)-y_j(t))=0,\forall i,j\in\{1,2,\dots,n\}$ and all initial conditions. In other words,
	\[
	\lim\limits_{t\rightarrow\infty}(y_i(t)-y_0(t))=0, \forall i=1,\dots,n,
	\]
	where $y_0(t)$ is the synchronized output trajectory. 
 The synchronization synthesis problem is formulated as follows.
 \begin{problem}\label{problem1}
 Design controllers $\bm C_i$ such that the outputs of all the agents \eqref{eq:agent} asymptotically synchronize regardless of initial conditions.
 \end{problem}
	\begin{figure}[htbp]
		\begin{center}			\includegraphics[width=8cm]{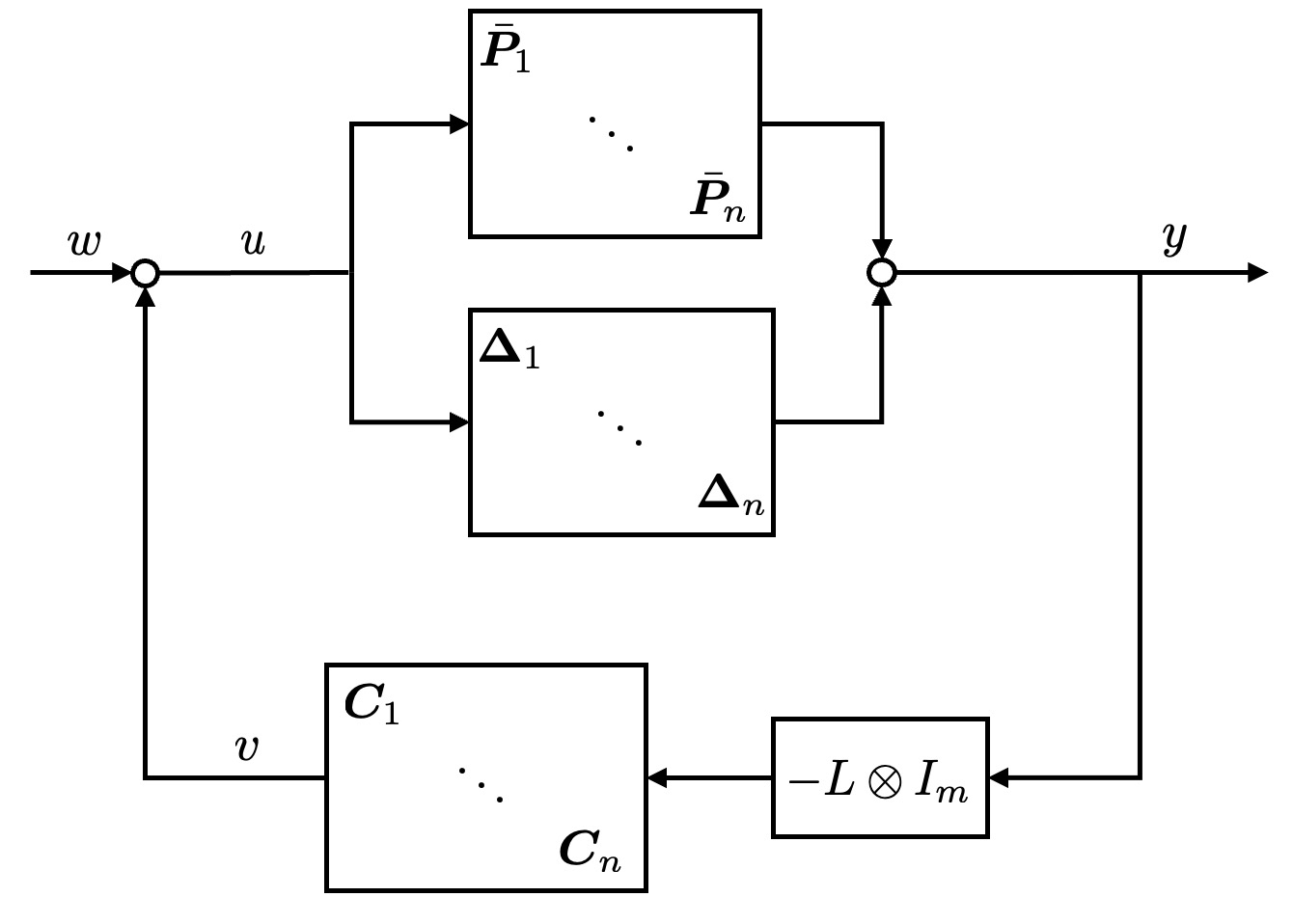}    \caption{\label{fig: directednetwork}Block diagram of a directed graph with agent-based controllers.}
		\end{center}
	\end{figure}
	
 To address the synchronization problem, define the dynamic Laplacian matrix $\bm L$ as follows:
	\begin{align*}
		\bm L_{ij}=\begin{cases}
			-a_{ij}\bm C_{i}& i\neq j,\\
			\sum_{j\neq i}a_{ij}\bm C_{i}& i=j.
		\end{cases}
	\end{align*}
 Consequently, the dynamic Laplacian matrix can be express as  $\bm L=\bm C(L\otimes I_m)$, where $\bm C=\mathrm{diag}\{\bm C_1,\dots,\bm C_n\}$.
	By defining the vectors $w=\begin{bmatrix}w_1' &\cdots &w_n'\end{bmatrix}'$, $v=\begin{bmatrix}v_1' &\cdots &v_n'\end{bmatrix}'$, $u=\begin{bmatrix}u_1' &\cdots &u_n'\end{bmatrix}'$ and $y=\begin{bmatrix}y_1' &\cdots &y_n'\end{bmatrix}'$, the network dynamics are written as
	\begin{align*}
		y&=\bm P(v+w), \\
		v&=-\bm Ly,
	\end{align*}
	where \begin{align*}
		\bm P&=\mathrm{diag}\{\bm P_1,\dots,\bm P_n\}.
	\end{align*}
	This leads to
	\begin{align}\label{haty}
		y=(I+\bm P\bm L)^{-1}\bm Pw.
	\end{align}
	The synchronization framework is depicted in Fig~\ref{fig: syndiagram}. To analyze synchronization, introduce the disagreement vector $e(z)=(J\otimes I_m)y(z)$, where $J=I_n-\frac{1}{n}\mathbf{1}_n\mathbf{1}'_n$. The matrix $J$ has a simple eigenvalue $0$ with a corresponding right eigenvector $\mathbf{1}_n$. 
 
	\begin{figure}[htbp]
		\centering
		\includegraphics[width=6cm]{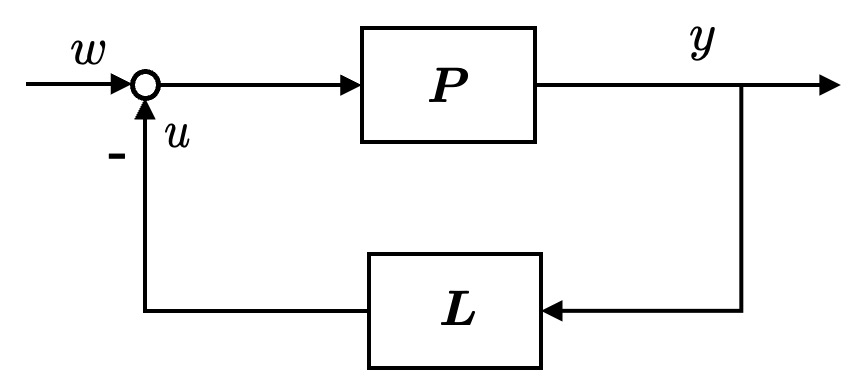}
		\caption{\label{fig: syndiagram}Block diagram of synchronization.}
	\end{figure}
	
Let $Q$ be an isometry whose columns form an orthogonal complement of $\mathrm{span}\{\mathbf{1}_n\}$. Define the matrix $U=\begin{bmatrix}Q&\frac{1}{\sqrt{n}}\mathbf{1}_n  \end{bmatrix}\otimes I_m$. The disagreement vector can then be expressed as
	\begin{align}\label{eq: e}
		e&=JUU'(I+\bm P\bm L)^{-1}UU'\bm Pw\\
		&=JU\begin{bmatrix}\bm S& 0\\ * & I\end{bmatrix}U'\bm Pw\\
		&=(Q\otimes I_m)\bm S(Q'\otimes I_m)\bm Pw,
	\end{align}
where $\bm S=(I_{nm-m}+(Q\otimes I_m)'\bm P\bm L(Q\otimes I_m))^{-1}$ and $*$ denotes irrelevant parts. Here $e$ can be treated as the tracking error of the reference signal $\bm Pw$. Achieving $\lim\limits_{t\rightarrow\infty}e(t)=0$ 
 is equivalent to achieving synchronization. The problem can be transformed to feedback stabilization problem. 

Achieving $\lim\limits_{t\rightarrow\infty}e(t)=0$ requires $S$ to be stable and the internal model of $\bm Pw$ to be contained in the loop transfer matrices. The latter is naturally satisfied since the internal model of $\bm P$ corresponds to the agent dynamics. Let 
	\begin{align*}
		\tilde{\bm P}&=(Q'\otimes I_m)(\bm P\bm C),\\ \tilde{L}&=(LQ)\otimes I_m.
	\end{align*}
	Since $Q'Q=I_{n-1}$, and $QQ'=I_n-\frac{1}{n}\mathbf{1}_n \mathbf{1'}_n$, it can be derived that $\bm S=(I+\tilde{\bm P}\tilde{L})^{-1}.$
The stability of $\bm S$ is equivalent to the stability of the feedback system shown in Fig~\ref{fig: synstability}.
	\begin{figure}[htbp]
		\centering
		\includegraphics[width=6cm]{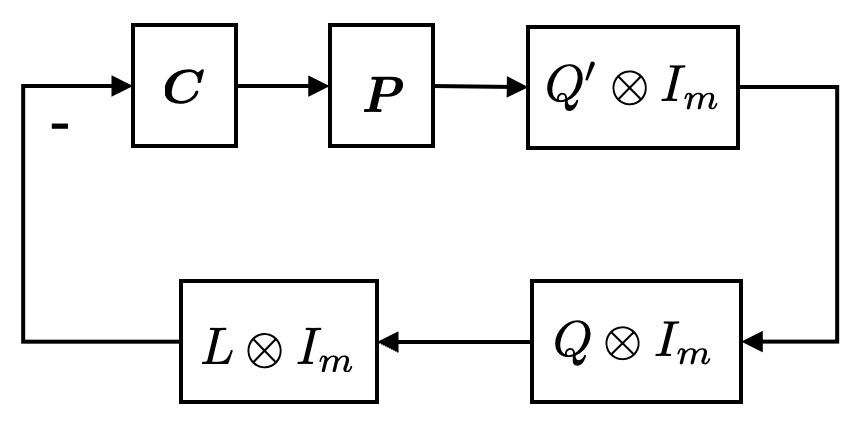}
		\caption{\label{fig: synstability}Block diagram of equivalent stability problem.}
	\end{figure}
The techniques dealing with stability problem can be naturally applied. Mathematically, allowing the distributed controller $\bm C_i$ to be different ensures that the synchronization problem is always solvable, effectively eliminating the issue of synchronizability. However, this comes with a substantial increase in design and implementation costs. Additionally, the design lacks scalability with respect to network size. In the rest of this paper, we consider the cases of component-wise controllers and component-cluster combined controllers. 


 \section{Matrix diversity and network symmetry}\label{sec:diversity}
 In this section, we propose the simultaneous alignment problem and define a measure of the diversity for a set of matrices based on the phases of complex matrices, which were defined in \cite{wd2019,wd2023}. Moreover, we review the notion of essential phase for a Laplacian matrix introduced in \cite{wd2023} and propose to use it as a measure of the network symmetry.
 
	\subsection{Matrix phases}
	Given a matrix $A\in\mathbb{C}^{m\times m}$, the numerical range of $A$ is defined to be
	\[
	W(A)=\{x^*Ax : x\in\mathbb{C}^m,\|x\|_2=1\}.
	\]
	This is a convex and compact subset of the complex plane \cite[Section 1.2]{horntopics} and contains the spectrum of $A$.
	Moreover, the angular numerical range of $A$ is defined to be
	\[
	W'(A)=\{x^*Ax : x\in\mathbb{C}^m,x\neq 0\}.
	\]
	The matrix $A$ is said to be semi-sectorial if the origin is not in the interior of $W(A)$. A semi-sectorial matrix is said to be quasi-sectorial if the origin is not on the smooth boundary of $W(A)$. Furthermore, it is said to be sectorial if the origin is not contained in $W(A)$. We call $A\in\mathbb{C}^{m\times m}$ rotationally indefinite Hermitian if there exists an $\alpha$ such that $e^{j\alpha}A$ is indefinite Hermitian. Rotationally indefinite Hermitian $A$ belongs to the class of semi-sectorial matrices.  Let $\mathrm{Re}(A)=(A+A^*)/2, \mathrm{Im}(A)=(A-A^*)/(2j)$. The matrix $A$ is said to be accretive if $\mathrm{Re}(A)$ is positive semidefinite, and is said to be quasi-strictly accretive if it is both accretive and quasi-sectorial.
	
	For a nonzero semi-sectorial matrix $A$, there exists a closed half plane which contains $W(A)$. Assume the half plane is given by $[\theta(A)-\pi/2, \theta(A)+\pi/2]$, where $\theta(A)\in[-\pi,\pi)$. The largest and smallest phases of $A$ are defined as
	\begin{align*}
		\overline\phi(A)&=\sup\limits_{ x\neq 0,\ x^*Ax\neq 0}\angle x^*Ax,\\
		\underline\phi(A)&=\inf\limits_{ x\neq 0,\ x^*Ax\neq 0}\angle x^*Ax,
	\end{align*}
    taking values in $[\theta(A)-\pi/2, \theta(A)+\pi/2]$.
	The phase interval of $A$ is defined to be
 \[\Phi(A)=[\underline\phi(A),\overline\phi(A)].\]
  The phase interval can be computed by solving linear matrix inequalities (LMIs) obtained from Lemma~\ref{lem: phaselmi}. 
\begin{lemma}\label{lem: phaselmi}
Let $\alpha\in [0,\frac{\pi}{2})$ and $A\in\mathbb C^{m\times m}$. Then $\Phi(A)\subset[-\alpha,\alpha]$ if and only if $\mathrm{Re}(A)\geq 0$ and 
\begin{align}\label{eq: tanalpha}
-\tan\alpha\ \mathrm{Re} (A)\leq\mathrm{Im} (A)\leq\tan\alpha\ \mathrm{Re} (A).
\end{align}
\end{lemma}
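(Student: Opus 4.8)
The plan is to reduce the matrix-level equivalence to a family of scalar statements indexed by the nonzero vectors $x$, by examining the single complex number $x^*Ax$. The bridge is provided by two elementary identities. Since $\mathrm{Re}(A)=(A+A^*)/2$ and $\mathrm{Im}(A)=(A-A^*)/(2j)$ are Hermitian, the quadratic forms $x^*\mathrm{Re}(A)x$ and $x^*\mathrm{Im}(A)x$ are real, and a direct computation using $\overline{x^*Ax}=x^*A^*x$ gives $\mathrm{Re}(x^*Ax)=x^*\mathrm{Re}(A)x$ and $\mathrm{Im}(x^*Ax)=x^*\mathrm{Im}(A)x$. Thus the real and imaginary parts of $x^*Ax$ are exactly the two quadratic forms appearing in the LMI.

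First I would settle the scalar geometry. Writing $z=a+jb$ with $a,b$ real and $z\neq 0$, and using $\alpha\in[0,\frac{\pi}{2})$, the angle condition $\angle z\in[-\alpha,\alpha]$ is equivalent to $a>0$ together with $|b|\le a\tan\alpha$: the bound $\alpha<\frac{\pi}{2}$ forces the cosine of the angle to be positive, hence $a>0$, after which $\angle z\in[-\alpha,\alpha]$ reads $|b/a|\le\tan\alpha$ because $\tan$ is increasing on $(-\frac{\pi}{2},\frac{\pi}{2})$. Substituting $a=x^*\mathrm{Re}(A)x$ and $b=x^*\mathrm{Im}(A)x$, the membership $\angle x^*Ax\in[-\alpha,\alpha]$ becomes the pair of scalar inequalities $x^*\mathrm{Re}(A)x>0$ and $-\tan\alpha\,x^*\mathrm{Re}(A)x\le x^*\mathrm{Im}(A)x\le\tan\alpha\,x^*\mathrm{Re}(A)x$.

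With this in hand, both directions are a matter of quantifying over $x$ and invoking the standard correspondence between a quadratic-form inequality holding for all $x$ and the associated Hermitian matrix being positive semidefinite. For the ``only if'' part, I would take any $x$ with $x^*Ax\neq 0$, note that $\angle x^*Ax$ lies in $\Phi(A)\subset[-\alpha,\alpha]$ by the definition of $\overline\phi,\underline\phi$ as a supremum and infimum, and read off the scalar inequalities above; quantifying over all such $x$ yields $\mathrm{Re}(A)\ge 0$ and \eqref{eq: tanalpha}. For the ``if'' part, assuming $\mathrm{Re}(A)\ge 0$ and \eqref{eq: tanalpha}, I would show every $x$ with $x^*Ax\neq 0$ satisfies $x^*\mathrm{Re}(A)x>0$ (strictness follows because $|b|\le a\tan\alpha$ with $a=0$ would force $z=0$), hence $\angle x^*Ax\in[-\alpha,\alpha]$; taking the supremum and infimum gives $\Phi(A)\subset[-\alpha,\alpha]$, and the inclusion of $W(A)$ in the closed sector of half-angle $\alpha<\frac{\pi}{2}$ simultaneously certifies that $A$ is semi-sectorial, so that $\Phi(A)$ is well defined.

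The one point demanding care, and the main obstacle, is the degenerate set of vectors with $x^*Ax=0$: these are deliberately excluded from the definition of the phases but are not excluded from the LMI, which must hold for \emph{every} $x$. The resolution is that $x^*Ax=0$ forces both real quadratic forms to vanish, so the inequalities of \eqref{eq: tanalpha} reduce to $0\le 0$ and hold automatically; conversely, under $\mathrm{Re}(A)\ge 0$ and \eqref{eq: tanalpha}, a vector with $x^*\mathrm{Re}(A)x=0$ necessarily has $x^*\mathrm{Im}(A)x=0$, so it contributes no phase and does not affect the phase interval. Reconciling these two descriptions of the vectors with $x^*Ax=0$ is what makes the pointwise equivalence lift cleanly to the matrix inequalities.
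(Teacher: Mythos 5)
Your proof is correct, but it takes a genuinely different route from the paper. You reduce everything to the scalar identities $\mathrm{Re}(x^*Ax)=x^*\mathrm{Re}(A)x$ and $\mathrm{Im}(x^*Ax)=x^*\mathrm{Im}(A)x$ and quantify over $x$, while the paper argues through the canonical forms of \cite{mao2022phases}: for necessity it writes the (quasi-sectorial) $A$ as $T^*DT$ with $D$ diagonal carrying entries $1+j\tan\phi_i$ and reads the LMI off the congruence; for sufficiency it invokes the generalized semi-sectorial decomposition $A=T^*\mathrm{diag}\{0,D_1,D_2\}T$ and rules out the Jordan-type blocks $D_2$ by noting that $W\left(\begin{bmatrix}1&2\\0&1\end{bmatrix}\right)$ is the disk of radius $1$ centered at $1$, which is incompatible with \eqref{eq: tanalpha} for $\alpha<\pi/2$. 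Your argument is more elementary and self-contained: it needs no structure theorem, it makes explicit the treatment of the degenerate vectors with $x^*Ax=0$ (which the paper's congruence argument absorbs silently into the kernel block $0_{m-r}$), and in the sufficiency direction it directly certifies semi-sectoriality of $A$ from $\mathrm{Re}(A)\geq 0$, so $\Phi(A)$ is well defined — a point the paper also uses. It even sidesteps a small gap in the paper's necessity step, which asserts quasi-sectoriality of $A$ without justifying that $\Phi(A)\subset[-\alpha,\alpha]$ with $\alpha<\pi/2$ excludes the non-quasi-sectorial case. What the paper's route buys in exchange is structural information reused elsewhere (e.g., rank and range facts such as $\mathcal R(\mathrm{Re}(AK))=\mathcal R(AK)$ in Proposition~\ref{prop:LMI} follow immediately from the same decomposition), whereas your pointwise argument yields only the phase-interval equivalence itself. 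One pedantic caveat: when you pass from the scalar memberships to $\Phi(A)\subset[-\alpha,\alpha]$, you are implicitly fixing the half-plane reference $\theta(A)=0$ so that the angles are read as principal values in $[-\pi/2,\pi/2]$; this is consistent with the paper's convention and worth one sentence, but it is not an error.
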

\begin{proof}
We first show the necessity.
Assume the rank of $A$ is $r$. Since $A$ is quasi-sectorial, it can be 
written as $A=T^*DT$, where $T$ is nonsingular and 
\begin{align*}
D=\begin{bmatrix}
0_{m-r}&  &  &    \\
 & 1+j\tan\underline\phi(A)\\
 &  &\ddots &\\
 & & & 1+j\tan\overline\phi(A)
\end{bmatrix}
\end{align*}
according to \cite{mao2022phases}. It follows that 
\begin{align*}
\mathrm{Re}(A)&=T^*\mathrm{diag}\{0_{m-r},1,\dots,1\}T,\\ \mathrm{Im}(A)&=T^*\text{diag}\{0_{m-r},\tan\underline\phi(A),\dots,\tan\overline\phi(A)\}T.
\end{align*}
Hence $\mathrm{Re}(A)\geq 0$. Since 
$$-\tan\alpha\ I\leq\text{diag}\{\tan\underline\phi(A),\dots,\tan\overline\phi(A)\}\leq\tan\alpha\ I,$$ it follows that (\ref{eq: tanalpha}) holds.

We then show the sufficiency. Since $\mathrm{Re}(A)\geq 0$, it follows that $A$ is semi-sectorial. Then, $A$ can be decomposed as
$A=T^*\text{diag}\{0_{m-r},D_1,D_2\}T
$, where
\begin{align*}
D_1&=\text{diag}\{e^{j\theta_1},\dots,e^{\theta_{t}}\},\\
D_2&=\text{diag}\left\{\begin{bmatrix}1&2\\0&1\end{bmatrix},\dots,\begin{bmatrix}1&2\\0&1\end{bmatrix}\right\},
\end{align*}
with $\theta_1,\dots,\theta_{t}\in[-\pi/2,\pi/2]$ \cite{mao2022phases}.
From \eqref{eq: tanalpha}, we have 
$$-\tan\alpha\ \mathrm{Re}(D_1)\leq\mathrm{Im}(D_1)\leq\tan\alpha\ \mathrm{Re}(D_1).$$ 
Hence $\theta_1, \dots, \theta_t \subset [-\alpha,\alpha]$. Moreover, note that $$-\tan\alpha\ \mathrm{Re}(D_2)\leq\mathrm{Im}(D_2)\leq\tan\alpha\ \mathrm{Re}(D_2)$$ can not hold with $\alpha<\pi/2$ since $W(D_2)$ is a disk centered at 1 with radius 1. Therefore, $D_2$ does not exist. The proof is completed.
\end{proof}

	The phases defined above have many nice properties. One may refer to \cite{wd2019,wd2023} for more details. Here we present several properties that are useful in later developments. The next lemma characterizes the relationship between the phases of a sectorial matrix and its compression.
	\begin{lemma}[\cite{wd2023}]\label{lem:compression}
		Let $A\in\mathbb{C}^{m\times m}$ be a nonzero semi-sectorial matrix and $\tilde{A}\in\mathbb{C}^{(m-k)\times(m-k)}$ be a nonzero compression of $A$. Then $\tilde{A}$ is semi-sectorial and
$\Phi(\tilde A)\subset\Phi(A)$.
	\end{lemma}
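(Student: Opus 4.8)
The plan is to reduce everything to the single structural fact that the numerical range of a compression is contained in the numerical range of the original matrix. Recall that a compression $\tilde A$ of $A$ is obtained as $\tilde A = V^* A V$ for some isometry $V \in \mathbb{C}^{m \times (m-k)}$ with $V^* V = I_{m-k}$. The cornerstone of the argument is the pair of inclusions $W(\tilde A) \subseteq W(A)$ and $W'(\tilde A) \subseteq W'(A)$; once these are in hand, both the semi-sectoriality of $\tilde A$ and the phase-interval containment follow by elementary manipulations.

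First I would establish the numerical-range inclusions. For any unit vector $y \in \mathbb{C}^{m-k}$, the vector $Vy$ satisfies $\|Vy\|_2^2 = y^* V^* V y = \|y\|_2^2 = 1$, so $Vy$ is a unit vector in $\mathbb{C}^m$. Hence $y^* \tilde A y = (Vy)^* A (Vy) \in W(A)$, giving $W(\tilde A) \subseteq W(A)$. Dropping normalization and noting that $Vy \neq 0$ whenever $y \neq 0$ (since $V$ is injective) yields $W'(\tilde A) \subseteq W'(A)$ by the same computation.

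Next I would deduce that $\tilde A$ is semi-sectorial. Because the interior operation is monotone under set inclusion, $W(\tilde A) \subseteq W(A)$ forces $\mathrm{int}\, W(\tilde A) \subseteq \mathrm{int}\, W(A)$. As $A$ is semi-sectorial, the origin does not lie in $\mathrm{int}\, W(A)$, so it cannot lie in the smaller set $\mathrm{int}\, W(\tilde A)$ either; thus $\tilde A$ is semi-sectorial. Since $\tilde A$ is assumed nonzero, its phases $\underline\phi(\tilde A)$ and $\overline\phi(\tilde A)$ are well defined.

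Finally I would compare the phase intervals. Since $W(\tilde A) \subseteq W(A)$ lies inside the closed half plane $[\theta(A) - \pi/2,\, \theta(A) + \pi/2]$ of width $\pi$ attached to $A$, the angle $\angle w$ of every nonzero $w \in W'(\tilde A)$ is unambiguously pinned down in this common branch and coincides with the intrinsic angle used to define the phases of $\tilde A$. As each such $w$ also belongs to $W'(A)$, its angle lies in $[\underline\phi(A),\, \overline\phi(A)]$; taking the supremum and infimum over admissible $y$ then gives $\underline\phi(A) \leq \underline\phi(\tilde A)$ and $\overline\phi(\tilde A) \leq \overline\phi(A)$, i.e.\ $\Phi(\tilde A) \subseteq \Phi(A)$. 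The one place demanding care, and the main obstacle, is precisely this branch-consistency step: one must check that the phase of $\tilde A$ measured with its own reference direction $\theta(\tilde A)$ agrees with the phase measured in $A$'s half plane. This is guaranteed because $W(\tilde A)$, being trapped inside $A$'s half plane, cannot wrap around by more than $\pi$, so the angle assignment is unique and the two supremum/infimum computations are directly comparable.
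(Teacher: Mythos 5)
Your proof is correct, and in fact there is no in-paper argument to compare it against: the paper states this lemma without proof, importing it from \cite{wd2023}, and your route --- writing the compression as $\tilde A = V^*AV$ with $V^*V=I_{m-k}$, deducing $W(\tilde A)\subseteq W(A)$ and $W'(\tilde A)\subseteq W'(A)$ from $\|Vy\|_2=\|y\|_2$, and getting semi-sectoriality from monotonicity of the interior --- is precisely the standard argument used in that reference. The one step that deserves to be made fully precise is the branch-consistency issue you flag at the end, because as written your justification (``the angle assignment is unique'') is not quite enough: the paper's definition leaves the half plane of $\tilde A$ free, subject only to $\theta(\tilde A)\in[-\pi,\pi)$, and with that normalization the containment $\Phi(\tilde A)\subseteq\Phi(A)$ can literally fail by a $2\pi$ shift. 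For instance, if $W(A)$ sits in the half plane $[\pi/4,\,5\pi/4]$ with $\overline\phi(A)=6\pi/5$, and $W'(\tilde A)$ is the ray at angle $6\pi/5$, then $\theta(\tilde A)=-4\pi/5$ is an admissible normalized choice and yields $\Phi(\tilde A)=\{-4\pi/5\}$, disjoint from $\Phi(A)$. The clean fix is one line: since $W(\tilde A)\subseteq W(A)$, the half plane of $A$ itself is an admissible choice for $\tilde A$, so take $\theta(\tilde A)=\theta(A)$; then every nonzero element of $W'(\tilde A)$ carries the same angle in the definitions of the phases of $\tilde A$ and of $A$, and your supremum/infimum comparison closes the proof with no remaining ambiguity.
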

	Another important property is about the matrix product.
	\begin{lemma}[\cite{wd2023}]
 \label{lem:major}
		Let $A, B\in\mathbb{C}^{m\times m}$ be semi-sectorial and sectorial. Then the number of nonzero eigenvalues of $AB$ is equal to the rank of $A$, and the inequality
		\begin{align}\label{eq:matrixspt}
	\angle\lambda_i(AB)\in\Phi(A)+\Phi(B)
		\end{align}
		is satisfied if $\angle \lambda_i(AB)$ take values in $(\theta(A)+\theta(B)-\pi, \theta(A)+\theta(B)+\pi)$.
	\end{lemma}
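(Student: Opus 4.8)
The plan is to prove the two assertions separately, first the eigenvalue count and then the phase inclusion, using the canonical form of a semi-sectorial matrix together with a short quadratic-form computation on an eigenvector.

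For the count of nonzero eigenvalues I would start from the decomposition $A = T^*\,\mathrm{diag}\{0_{m-r},\hat D\}\,T$ with $T$ nonsingular and $\hat D\in\mathbb{C}^{r\times r}$ nonsingular, which is the canonical form of a semi-sectorial matrix already used in the proof of Lemma~\ref{lem: phaselmi} (here $r=\mathrm{rank}(A)$, collecting the $D_1,D_2$ blocks into $\hat D$). Since $B$ is sectorial it is nonsingular, and the nonzero eigenvalues of $AB=T^*\,\mathrm{diag}\{0_{m-r},\hat D\}\,TB$ coincide with those of $\mathrm{diag}\{0_{m-r},\hat D\}\,\tilde B$, where $\tilde B = TBT^*$; this uses only that $XY$ and $YX$ share nonzero eigenvalues. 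Partitioning $\tilde B$ conformably as $[\tilde B_{ij}]$, the product is block lower triangular with a zero $(1,1)$ block, so its characteristic polynomial factors as $\lambda^{m-r}\det(\lambda I_r-\hat D\tilde B_{22})$. Hence the number of nonzero eigenvalues equals $r$ exactly when $\hat D\tilde B_{22}$ is nonsingular. The step where I would spend the effort is justifying this nonsingularity: $\tilde B = TBT^*$ is a congruence of $B$, so $W'(\tilde B)=W'(B)$ and $\tilde B$ is again sectorial; its principal submatrix $\tilde B_{22}$ is a compression, so $W(\tilde B_{22})\subset W(\tilde B)$ excludes the origin and $\tilde B_{22}$ is sectorial, hence nonsingular. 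Together with $\hat D$ nonsingular this gives exactly $r=\mathrm{rank}(A)$ nonzero eigenvalues. This is the only place where sectoriality of $B$, not merely semi-sectoriality, is essential.

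For the phase inclusion I would argue directly on an eigenvector. Let $\lambda\ne 0$ be a nonzero eigenvalue with $ABx=\lambda x$, $x\ne 0$, and set $y=Bx$, which is nonzero since $B$ is nonsingular. Then $Ay=ABx=\lambda x$, so
\[
y^*Ay = (Bx)^*(ABx) = \lambda\,x^*B^*x = \lambda\,\overline{x^*Bx}.
\]
Because $B$ is sectorial, $x^*Bx\ne 0$, and since $\lambda\ne 0$ this forces $y^*Ay\ne 0$; thus both quadratic forms have well-defined angles lying in the respective phase intervals, $\angle(y^*Ay)\in\Phi(A)$ and $\angle(x^*Bx)\in\Phi(B)$. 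Taking angles in the displayed identity yields $\angle\lambda \equiv \angle(y^*Ay)+\angle(x^*Bx)\pmod{2\pi}$, that is $\angle\lambda\in\Phi(A)+\Phi(B)$ modulo $2\pi$. Note that this argument never needs the diagonalizability of $A$, so it applies verbatim to semi-sectorial $A$.

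The remaining, and genuinely delicate, point is the branch bookkeeping, which is exactly what the hypothesis on $\angle\lambda_i(AB)$ controls, and I expect it to be the main obstacle to state cleanly. Since $\angle(y^*Ay)\in[\theta(A)-\pi/2,\theta(A)+\pi/2]$ and $\angle(x^*Bx)\in[\theta(B)-\pi/2,\theta(B)+\pi/2]$, their sum lies in $[\theta(A)+\theta(B)-\pi,\theta(A)+\theta(B)+\pi]$, an interval of length $2\pi$ centered at $\theta(A)+\theta(B)$. Choosing, as permitted by hypothesis, the representative of $\angle\lambda$ in the open interval $(\theta(A)+\theta(B)-\pi,\theta(A)+\theta(B)+\pi)$ pins down the ambiguous multiple of $2\pi$, so the congruence becomes an equality of real numbers and $\angle\lambda_i(AB)\in\Phi(A)+\Phi(B)$ as stated. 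The subtlety is precisely that one must rule out an eigenvalue angle being pushed a full turn outside $\Phi(A)+\Phi(B)$, and this is resolved by the branch restriction alone rather than by any further property of $A$ or $B$.
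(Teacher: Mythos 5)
Your proof is correct; note first that the paper itself offers no proof of this lemma at all---it is imported verbatim from \cite{wd2023}---so there is no in-paper argument to compare against, and your attempt must stand on its own, which it does. Your two-step route is in fact the standard one in the matrix-phase literature: for the eigenvalue count, the canonical form $A=T^*\,\mathrm{diag}\{0_{m-r},\hat D\}\,T$ (the same decomposition the paper invokes in the proof of Lemma~\ref{lem: phaselmi}) combined with similarity---since $T^*$ is invertible, $AB$ is outright similar to $\mathrm{diag}\{0_{m-r},\hat D\}\,\tilde B$ with $\tilde B=TBT^*$, so you do not even need the $XY$/$YX$ trick---gives the factorization $\lambda^{m-r}\det(\lambda I_r-\hat D\tilde B_{22})$, and your justification of the key nonsingularity is sound: congruence preserves the angular numerical range, so $\tilde B$ is sectorial, and its principal submatrix $\tilde B_{22}$ is a compression, hence sectorial by the fact underlying Lemma~\ref{lem:compression}, hence nonsingular; this correctly isolates where sectoriality (not mere semi-sectoriality) of $B$ enters. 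For the phase inclusion, the eigenvector identity $y^*Ay=\lambda\,\overline{x^*Bx}$ with $y=Bx$ is exactly the quadratic-form argument used in \cite{wd2023}, and your branch bookkeeping is handled properly: the sum $s=\angle(y^*Ay)+\angle(x^*Bx)$ lies in the closed interval $[\theta(A)+\theta(B)-\pi,\theta(A)+\theta(B)+\pi]$, an open interval of length $2\pi$ contains at most one representative of each residue class modulo $2\pi$, and in the degenerate case where $s$ sits at an endpoint no representative of $\angle\lambda$ lies in the open interval, so the lemma's hypothesis fails and nothing is claimed---precisely the vacuity you implicitly rely on. The only point worth making explicit is that last endpoint case, which you gesture at but do not spell out; with that sentence added, the argument is complete and, as you observe, never requires diagonalizability of $A$, so it covers the semi-sectorial case (including the nontrivial-Jordan-block situation where $W(A)$ touches the origin) uniformly.
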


	\subsection{Simultaneous alignment and matrix diversity}

In various applications, it is desirable that phases of a set of matrices are similar or even contained in the same interval. However, in practice, this is not often the case, and some matrices may not even be semi-sectorial. To overcome this, we introduce a matrix $K$ to transform the set of matrices so that they become semi-sectorial and share similar phase properties. 
In this section, we propose the problem of matrix simultaneous alignment and define a measure of matrix diversity, which is foundational for addressing the synchronization problem. 
	
Let $\mathcal A=\{A_i \in\mathbb{C}^{m\times m}: i=1,\dots,n\}$. Consider $\mathrm{rank}(A_i)$ as the vitality of $A_i$.

\begin{definition}
Given $\alpha \in [0,\frac{\pi}{2})$, the set $\mathcal{A}$ is said to be simultaneously $\alpha$-alignable if there exists a $K\in\mathbb{C}^{m\times m}$ such that $\mathrm{rank}(A_iK)=\mathrm{rank}(A_i)$ and
		$
		\Phi(A_iK)\subset [-\alpha,\alpha]
		$
		for $i=1,\dots, n$.
	\end{definition}

Note that the rank condition is needed to make the definition meaningful. Without it $K=0$ would be able to set any set of matrices into zero matrices, perfectly aligning them in phase but destroying all the vitality. Obviously, if $\mathcal{A}$ is $\alpha$-alignable for some $\alpha \in [0, \frac{\pi}{2})$, then it is $\beta$-alignable for all $\beta \in (\alpha, \frac{\pi}{2})$. 

The following lemma is useful in verifying the feasibility of simultaneous alignment.
\begin{lemma}\label{lem: singlematrixlmi}
Let $A\in\mathbb{C}^{m\times m}$ and $\alpha\in[0,\frac{\pi}{2})$. There exists a $K\in\mathbb{C}^{m\times m}$ such that $\Phi(AK)\subset [-\alpha,\alpha]$ and $\mathrm{rank}(AK)=\mathrm{rank}(A)$ if and only if there exists a $K\in\mathbb{C}^{m\times m}$ such that $\mathrm{Re}(AK)\geq 0$, $\mathcal R(AK)=\mathcal R(A)$ and
\begin{align*}
-\tan\alpha\ \mathrm{Re} (AK)\leq\mathrm{Im} (AK)\leq\tan\alpha\ \mathrm{Re} (AK).
\end{align*}
\end{lemma}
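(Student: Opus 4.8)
The plan is to recognize that this statement is, at its core, the phase--LMI characterization of Lemma~\ref{lem: phaselmi} applied to the single matrix $AK$, supplemented by the elementary fact that the rank condition and the range condition are equivalent. Since both of these equivalences hold for one and the same $K$, the two existence statements will coincide, and no explicit construction of $K$ is needed.

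First I would dispose of the rank/range bookkeeping. For every $K\in\mathbb{C}^{m\times m}$ one has the inclusion $\mathcal R(AK)\subseteq\mathcal R(A)$, since every vector $AKx$ lies in $\mathcal R(A)$. Consequently $\mathrm{rank}(AK)=\mathrm{rank}(A)$ forces $\dim\mathcal R(AK)=\dim\mathcal R(A)$, and an equal-dimensional subspace of $\mathcal R(A)$ must equal $\mathcal R(A)$; conversely $\mathcal R(AK)=\mathcal R(A)$ trivially gives equal ranks. Hence, for a fixed $K$, the conditions $\mathrm{rank}(AK)=\mathrm{rank}(A)$ and $\mathcal R(AK)=\mathcal R(A)$ are equivalent.

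Next I would apply Lemma~\ref{lem: phaselmi} with the matrix $AK$ in place of $A$. This yields, for a fixed $K$, the equivalence between $\Phi(AK)\subset[-\alpha,\alpha]$ and the pair of conditions $\mathrm{Re}(AK)\geq 0$ together with
\[
-\tan\alpha\ \mathrm{Re}(AK)\leq\mathrm{Im}(AK)\leq\tan\alpha\ \mathrm{Re}(AK).
\]
Combining this with the rank/range equivalence from the previous step shows that, for each fixed $K$, the clause ``$\Phi(AK)\subset[-\alpha,\alpha]$ and $\mathrm{rank}(AK)=\mathrm{rank}(A)$'' holds exactly when ``$\mathrm{Re}(AK)\geq 0$, $\mathcal R(AK)=\mathcal R(A)$, and the tangent inequalities hold.'' Quantifying over $K$ then gives the asserted equivalence of the two existence statements.

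The only points requiring care are well-definedness issues rather than genuine obstacles. In the reverse direction I must ensure $\Phi(AK)$ is meaningful: the hypothesis $\mathrm{Re}(AK)\geq 0$ makes $AK$ accretive, so $W(AK)$ lies in the closed right half plane and $AK$ is semi-sectorial, which is precisely the setting in which Lemma~\ref{lem: phaselmi} and the phase interval are defined. The degenerate case $A=0$ (hence $AK=0$) is trivial, with all conditions holding vacuously. Since all the real work is delegated to Lemma~\ref{lem: phaselmi}, I do not anticipate any substantive difficulty beyond this bookkeeping.
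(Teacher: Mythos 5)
Your proposal is correct and follows essentially the same route as the paper's own proof: invoke Lemma~\ref{lem: phaselmi} with $AK$ in place of $A$ for the phase--LMI equivalence, and reduce the rest to showing $\mathrm{rank}(AK)=\mathrm{rank}(A)$ iff $\mathcal R(AK)=\mathcal R(A)$ via the inclusion $\mathcal R(AK)\subseteq\mathcal R(A)$. Your treatment is in fact slightly more careful than the paper's, since you explicitly note that $\mathrm{Re}(AK)\geq 0$ guarantees semi-sectoriality so that $\Phi(AK)$ is well defined in the reverse direction.
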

\begin{proof}
In view of Lemma~\ref{lem: phaselmi}, it suffices to show the equivalence between $\mathcal R(AK)=\mathcal R(A)$ and $\mathrm{rank}(AK)=\mathrm{rank}(A)$.
 If $\mathcal R(AK)=\mathcal R(A)$, then $\mathrm{rank}(AK)=\mathrm{rank}(A)$. Since $\mathcal R(AK)\subset\mathcal R(A)$, then $\mathrm{rank}(AK)=\mathrm{rank}(A)$ implies $\mathcal R(AK)=\mathcal R(A)$. The proof is completed.
\end{proof}
The simultaneous alignment can be checked and the aligning $K$ can be determined by solving a set of LMIs.
\begin{proposition}\label{prop:LMI}
$\mathcal A$ is simultaneously $\alpha$-alignable if and only if the following LMIs are feasible
\begin{align*}
\mathrm{Re}(A_iK)&\geq A_iA_i^*,\\
-\tan\alpha\ \mathrm{Re} (A_iK)&\leq\mathrm{Im} (A_iK)\leq\tan\alpha\ \mathrm{Re} (A_iK),
\end{align*}
for $i=1,\dots,n$.
\end{proposition}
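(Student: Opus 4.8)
The plan is to deduce the result from the single-matrix characterization in Lemma~\ref{lem: singlematrixlmi}, applied simultaneously with one common $K$. By that lemma, $\mathcal A$ is simultaneously $\alpha$-alignable precisely when there is a single $K$ for which, for every $i$, one has $\mathrm{Re}(A_iK)\geq 0$, $\mathcal R(A_iK)=\mathcal R(A_i)$, and the two-sided tangent bound $-\tan\alpha\,\mathrm{Re}(A_iK)\leq \mathrm{Im}(A_iK)\leq \tan\alpha\,\mathrm{Re}(A_iK)$. It therefore suffices to reconcile this list with the single semidefinite inequality $\mathrm{Re}(A_iK)\geq A_iA_i^*$ together with the same tangent bound. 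An auxiliary fact I would establish first is that for any quasi-sectorial $M$ with $\Phi(M)\subset[-\alpha,\alpha]$ and $\alpha<\pi/2$ one has $\mathcal R(\mathrm{Re}(M))=\mathcal R(M)$; this follows from the congruence decomposition $M=T^*DT$ with $D=\mathrm{diag}\{0_{m-r},1+j\tan\phi_1,\dots,1+j\tan\phi_r\}$ used in the proof of Lemma~\ref{lem: phaselmi}, since then $\mathrm{Re}(M)=T^*\mathrm{diag}\{0_{m-r},I_r\}T$ shares the column space of $M$.

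For the sufficiency of the LMIs, I would start from a feasible $K$. The inequality $\mathrm{Re}(A_iK)\geq A_iA_i^*\geq 0$ immediately gives $\mathrm{Re}(A_iK)\geq 0$, and together with the tangent bound Lemma~\ref{lem: phaselmi} yields $\Phi(A_iK)\subset[-\alpha,\alpha]$, so $A_iK$ is quasi-sectorial. The auxiliary fact then gives $\mathcal R(A_iK)=\mathcal R(\mathrm{Re}(A_iK))$. Since for positive semidefinite matrices $P\geq Q$ forces $\ker P\subset\ker Q$ and hence $\mathcal R(P)\supseteq\mathcal R(Q)$, the inequality $\mathrm{Re}(A_iK)\geq A_iA_i^*$ yields $\mathcal R(\mathrm{Re}(A_iK))\supseteq\mathcal R(A_iA_i^*)=\mathcal R(A_i)$. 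Combined with the trivial inclusion $\mathcal R(A_iK)\subseteq\mathcal R(A_i)$, this produces $\mathcal R(A_iK)=\mathcal R(A_i)$, and Lemma~\ref{lem: singlematrixlmi} then certifies $\alpha$-alignability.

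For the necessity, I would pass to a scalar multiple of an aligning matrix. If $K_0$ aligns $\mathcal A$, then by Lemma~\ref{lem: singlematrixlmi} each $A_iK_0$ is quasi-sectorial and, by the auxiliary fact and the rank condition, $\mathcal R(\mathrm{Re}(A_iK_0))=\mathcal R(A_iK_0)=\mathcal R(A_i)=\mathcal R(A_iA_i^*)$. Thus $\mathrm{Re}(A_iK_0)$ and $A_iA_i^*$ are positive semidefinite with the same range, on which $\mathrm{Re}(A_iK_0)$ is positive definite; hence there is a scalar $c_i>0$ with $c_i\,\mathrm{Re}(A_iK_0)\geq A_iA_i^*$ (both forms vanish on the common kernel, so the comparison reduces to the shared range). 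Taking $c=\max_i c_i$ and setting $K=cK_0$, monotonicity of the left-hand side in $c$ gives $\mathrm{Re}(A_iK)\geq A_iA_i^*$ for every $i$, while the tangent bounds are homogeneous in $K$ and so are preserved under positive scaling. This $K$ is feasible for the stated LMIs.

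The main obstacle I anticipate is the range bookkeeping: both directions rest on the identity $\mathcal R(\mathrm{Re}(M))=\mathcal R(M)$ for quasi-sectorial $M$, which is what lets a positive semidefinite inequality control the range rather than merely positivity, and which legitimizes the uniform scaling in the necessity argument. Some care is also needed to confirm that a single scalar $c$ serves all $i$—finitely many matrices, so the maximum is attained—and that scaling leaves the tangent inequalities intact.
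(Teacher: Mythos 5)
Your proof is correct and follows essentially the same route as the paper's: a reduction via Lemma~\ref{lem: singlematrixlmi} to the equivalence between the range condition $\mathcal R(A_iK)=\mathcal R(A_i)$ and the inequality $\mathrm{Re}(A_iK)\geq A_iA_i^*$, exploiting the identities $\mathcal R(\mathrm{Re}(A_iK))=\mathcal R(A_iK)$ (quasi-strict accretivity) and $\mathcal R(A_i)=\mathcal R(A_iA_i^*)$ together with a positive-semidefinite range comparison. If anything you are more careful than the paper, which simply asserts the existence equivalence in the necessity direction, whereas you spell out the uniform rescaling $K=cK_0$ with $c=\max_i c_i$ that this assertion implicitly requires, and verify that the tangent bounds, being homogeneous in $K$, survive the scaling.
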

\begin{proof}
In view of Lemma~\ref{lem: singlematrixlmi}, it suffices to show $\mathcal R(A_iK)=\mathcal R(A_i)$ if and only if $\mathrm{Re}(A_iK)\geq A_iA_i^*$.
Since $\mathcal{R}(A_iK)\subset\mathcal R(A_i)$, it is equivalent to establish the condition that ensures $\mathcal{R}(A_i)\subset\mathcal R(A_iK)$. Note that $A_iK$ is quasi-strictly accretive, hence $\mathcal R(\mathrm{Re}(A_iK))=\mathcal R(A_iK)$. Moreover, it holds that $\mathcal R(A_i)=\mathcal R(A_iA_i^*)$. Both $\mathrm{Re}(A_iK)$ and $A_iA_i^*$ are positive semidefinite. The existence of a $K$ such that $\mathcal R(A_iK)\supset\mathcal{R}(A_i)$ is equivalent to the existence of a $K$ satisfying $\mathrm{Re}(A_iK)\geq A_iA_i^*$. The proof is completed.
\end{proof}
The aligning $K$ is not unique, for example, if $K$ is a solution, then any scaled version $\mu K$ with $\mu>1$ is also a solution.
By exploiting the simultaneous alignment, we have the following definition.
\begin{definition}
For a matrix set $\mathcal{A}$, let 
\[
\alpha(\mathcal{A}) = \big\{\alpha \in [0,\frac{\pi}{2}) : \text{$\mathcal{A}$ is simultaneously $\alpha$-alignable.} \big\} .
\]
The diversity of $\mathcal A$ is defined as
\[
\mathrm{div}(\mathcal{A}) = \begin{cases}
    \inf \alpha(\mathcal{A}) & \text{if $\alpha(\mathcal{A})$ is nonempty}, \\
    \frac{\pi}{2} & \text{otherwise.}
\end{cases}
\]    
\end{definition}

An interesting feature of this definition is that the diversity is not directly defined as the difference among the individuals. Rather, it is defined as how similar they can be made after some simultaneous uniform operation. The particular operation in this definition is right multiplication by $K$. It is easy to see the diversity has the following properties:
\begin{enumerate}
    \item If $n=1$, then $\mathrm{div} (\mathcal{A})=0$.
    \item If $A_i$ are all the same, then $\mathrm{div} (\mathcal{A}) = 0$.
    \item If $A_i \!=\mu_i A$ with $\mu_i\!>\!0, i\!=\!1,\dots,n$, then $\mathrm{div} (\mathcal{A})\!=\!0$.
    \item If $A_i$ are all positive semi-definite, then $\mathrm{div} (\mathcal{A}) = 0$.
    \item For nonzero matrix $A$, $\mathrm{div} \{A,-A\}=
        \pi/2.$
    \item If $\tilde{\mathcal{A}}\subset \mathcal{A}$, then $\mathrm{div} (\tilde{\mathcal{A}})\leq \mathrm{div} (\mathcal{A})$.
\end{enumerate}
One can easily device a bisection algorithm to approximately compute $\mathrm{div} \mathcal{A}$ with arbitrary precision. Initially we know that the diversity is in an interval $[0, \pi/2]$. By checking whether $\mathcal{A}$ is $\pi/4$-simultaneously alignable using Proposition \ref{prop:LMI}, one can determine whether $\mathrm{div} \mathcal{A}$ is in either of the two half sub-intervals. Continuing to halve the intervals this way, $\mathrm{div} \mathcal{A}$ can be approximated with an arbitrarily small error.

Too large a diversity of $\mathcal{A}$ may make the set difficult to handle. This observation motivates the matrix clustering problem: whether we can partition the set into a small number of clusters so that each cluster has a sufficiently small diversity. If this is possible, we may then deal with the clusters one by one with ease. The clustering problem does not have an easy solution. It usually falls into the type of problems that the machine learning community is interested in. The development of efficient matrix clustering algorithms is an important future research topic.

\subsection{Network symmetry}
A matrix may not be semi-sectorial but can be made semi-sectorial through diagonal similarity transformation. Laplacian matrices fall into this class. The essential phases have been defined for such a class of matrices in \cite{wd2023}. The essential phase of a Laplacian matrix is particularly useful in the study of networks. It provides a measure of the network symmetry.

Given $A\in\mathbb{C}^{m\times m}$, its largest and smallest essential phases are given by
\begin{align*}
    \overline\phi_{\mathrm{ess}}(A)=\inf_{D\in\mathcal{D}} \overline\phi(D^{-1}AD)
    \ \text{and}\
    \underline\phi_{\mathrm{ess}}(A)\!=\!\sup_{D\in\mathcal{D}} \underline \phi(D^{-1}AD),
\end{align*}
where $\mathcal{D}$ is the set of positive definite diagonal matrices such that $D^{-1}AD$ is semi-sectorial. 
If $A$ is a real matrix, then $\underline\phi_{\mathrm{ess}}(A)=-\overline\phi_{\mathrm{ess}}(A)$. In this case, we denote $\overline\phi_{\mathrm{ess}}(A)$ by $\phi_{\mathrm{ess}}(A)$ for notational simplicity.


We first review the essential phases of Laplacians corresponding to strongly connected graphs. The Laplacian is semi-sectorial if and only if the digraph is balanced \cite{wd2023}. For an unbalanced graph, an analytic expression for the essential phase of the Laplacian can be obtained. To be specific, let $v$ be a positive left eigenvector of $L$ corresponding to zero eigenvalue. Let $V=\mathrm{diag}\{v\}$ and $D_0=V^{-1/2}$. 
	\begin{lemma}[\cite{wd2023}]\label{lem: essentiallaplacian}
		For a Laplacian matrix $L$ corresponding to a strongly connected graph, it holds that 
		\begin{align*}
		\phi_{\mathrm{ess}}(L)\!=\!\overline\phi(D_0^{-1}LD_0).
		\end{align*}
		
	\end{lemma}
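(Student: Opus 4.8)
\section*{Proof plan}

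The plan is to show that the infimum defining $\phi_{\mathrm{ess}}(L)$ is attained at the specific scaling $D_0=V^{-1/2}$, i.e.\ that no positive definite diagonal similarity can push the largest phase of $L$ below $\overline\phi(D_0^{-1}LD_0)$. I would split the argument into a reduction, an easy feasibility bound, and the genuinely hard optimality bound.

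First I would simplify the objective. For a positive definite diagonal $D$ and the substitution $u=Dx$ in the angular numerical range, $x^*(D^{-1}LD)x = u^*(D^{-2}L)u$, so $W'(D^{-1}LD)=W'(D^{-2}L)$ and hence $\Phi(D^{-1}LD)=\Phi(D^{-2}L)$. Writing $E=D^{-2}$ (an arbitrary positive definite diagonal matrix), the essential-phase computation becomes $\phi_{\mathrm{ess}}(L)=\inf_{E}\overline\phi(EL)$, the infimum over all positive diagonal $E$ making $EL$ semi-sectorial. Under this reduction the candidate scaling $D_0=V^{-1/2}$ corresponds exactly to $E=V=\mathrm{diag}\{v\}$, so the statement to prove is precisely that $E=V$ minimizes $\overline\phi(EL)$. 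To certify feasibility and identify the value at $E=V$, I would use $L\mathbf 1_n=0$ to get $VL\mathbf 1_n=0$, and the left eigenvector relation $v^\top L=0$ together with $\mathbf 1_n^\top V=v^\top$ to get $\mathbf 1_n^\top(VL)=0$. Thus $VL$ has zero row sums and zero column sums, i.e.\ $VL$ is the Laplacian of a balanced graph, hence semi-sectorial by the cited characterization (balanced $\Leftrightarrow$ semi-sectorial). So $V$ is admissible and, by the reduction above, $\overline\phi(D_0^{-1}LD_0)=\overline\phi(VL)$, giving the easy inequality $\phi_{\mathrm{ess}}(L)\le\overline\phi(VL)$.

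It remains to prove $\overline\phi(EL)\ge\overline\phi(VL)$ for every admissible $E$, and this is the crux. The approach I would take is to turn it into a structured optimization: by Lemma~\ref{lem: phaselmi}, for fixed $\alpha$ the bound $\overline\phi(EL)\le\alpha$ is, since the realness of $L$ makes $\Phi(EL)$ symmetric about $0$, equivalent to the linear matrix inequalities $\mathrm{Re}(EL)\ge 0$ and $-\tan\alpha\,\mathrm{Re}(EL)\le\mathrm{Im}(EL)\le\tan\alpha\,\mathrm{Re}(EL)$, which are affine in $E$. Consequently $\inf_E\overline\phi(EL)$ is a quasi-convex program, and I would prove optimality of $E=V$ by exhibiting a complementary-slackness/dual certificate. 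The structural facts I would exploit are that, at $E=V$, the Hermitian part $\mathrm{Re}(VL)=\tfrac12(VL+L^\top V)$ is a genuine undirected-graph Laplacian whose kernel is exactly $\mathrm{span}\{\mathbf 1_n\}$ (the symmetrization inherits connectivity from strong connectivity of $\mathcal G$), and that the phase-extremal direction realizing $\overline\phi(VL)$ lies on the supporting line of $W(VL)$ at angle $\overline\phi(VL)$.

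I expect this certificate construction to be the main obstacle. A tempting shortcut is to lower-bound $\overline\phi(D^{-1}LD)$ by the largest eigenvalue angle $\max_i\angle\lambda_i(L)$ (the spectrum lies in the numerical range and is invariant under similarity), which would finish immediately if $\overline\phi(VL)=\max_i\angle\lambda_i(L)$; but $VL$ is in general non-normal, so its numerical range can protrude beyond the convex hull of its spectrum and this equality can fail. Hence the optimality cannot be reduced to an eigenvalue statement and genuinely requires the variational argument above, whose difficulty is to rule out that some asymmetric scaling $E\ne cV$ tilts $\mathrm{Im}(EL)$ and $\mathrm{Re}(EL)$ so as to strictly shrink the attainable $\alpha$. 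Once the certificate pins the minimizer at $E=V$, combining with the feasibility bound yields $\phi_{\mathrm{ess}}(L)=\overline\phi(VL)=\overline\phi(D_0^{-1}LD_0)$.
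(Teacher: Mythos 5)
The paper gives no proof of this lemma at all: it is imported verbatim from \cite{wd2023}, so the only meaningful question is whether your argument would stand on its own as a complete proof. The first half of it does. The substitution $u=Dx$ giving $x^*(D^{-1}LD)x=u^*(D^{-2}L)u$, hence $\Phi(D^{-1}LD)=\Phi(D^{-2}L)$, the reparametrization $E=D^{-2}$, and the verification that $VL$ has zero row sums (from $L\mathbf{1}_n=0$) and zero column sums (from $v^\top L=0$), hence is a balanced Laplacian and therefore semi-sectorial, correctly establish that $D_0=V^{-1/2}$ is admissible and that $\phi_{\mathrm{ess}}(L)\leq\overline\phi(D_0^{-1}LD_0)$. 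This reduction is clean and would be a sensible opening for any proof.

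The genuine gap is the reverse inequality $\overline\phi(EL)\geq\overline\phi(VL)$ for every admissible positive diagonal $E$, which is the entire mathematical content of the lemma, and your own text concedes it is not done: the ``complementary-slackness/dual certificate'' that would pin $E=V$ as the minimizer is never constructed --- you list ingredients (the kernel of $\mathrm{Re}(VL)$, the supporting line of $W(VL)$ at angle $\overline\phi(VL)$) but write down no certificate, no stationarity computation, and no argument excluding that some $E\neq cV$ strictly shrinks the attainable $\alpha$. Even the framework you invoke needs repair before it could carry the weight: quasi-convexity (convex sublevel sets in $E$, which you correctly get from the affineness of the LMIs in Lemma~\ref{lem: phaselmi}) does not make a local or KKT-type certificate globally optimal, since quasi-convex objectives admit non-global stationary points; and Lemma~\ref{lem: phaselmi} only characterizes $\Phi(EL)\subset[-\alpha,\alpha]$ for $\alpha<\pi/2$ and says nothing about attainment of the infimum, so you would separately need a normalization and compactness argument (e.g., fixing $\mathrm{tr}\,E$ and passing to a limit point, handling possible degeneration of $E$ on the boundary) before a ``minimizer'' exists to certify. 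Your remark that the eigenvalue shortcut fails because $VL$ is non-normal is correct and shows good judgment, but it only rules out one wrong route; it does not supply the right one. As it stands, the proposal proves one inequality and sketches a research program for the other, so it is not a proof of the lemma --- a structural fact worth pursuing if you attempt to complete it is the one this paper itself uses for the blocks $L_{jj}$ (citing \cite{li2002numerical}): at $D_0$ the zero eigenvalue has a common left and right eigenvector, making the origin a sharp point of $W(D_0^{-1}LD_0)$, which is the kind of rigidity an optimality argument can exploit.
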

\

The essential phase provides a measure of the network symmetry. One can see that $VL$ is a Laplacian matrix with $\mathbf{1}_n$ being a common left and right eigenvector corresponding to eigenvalue zero. This means that $VL$ is the Laplacian matrix of a balanced graph. Further, if $VL$ is symmetric, we say the graph
corresponding to $L$ is essentially undirected.
\begin{lemma}
There holds $\phi_{\ess}(L)=0$ if and only if the digraph is essentially undirected.   
\end{lemma}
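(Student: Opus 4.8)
The plan is to reduce the statement to a symmetry question about the single real matrix $M := D_0^{-1}LD_0 = V^{1/2}LV^{-1/2}$, which by Lemma~\ref{lem: essentiallaplacian} satisfies $\phi_{\ess}(L)=\overline\phi(M)$. Since $L$ and $V$ are real, $M$ is real, so its numerical range $W(M)$ is symmetric about the real axis: replacing a vector $x$ by its complex conjugate sends $x^*Mx$ to $\overline{x^*Mx}$. Consequently $\underline\phi(M)=-\overline\phi(M)$, and therefore the condition $\phi_{\ess}(L)=\overline\phi(M)=0$ forces the entire phase interval to collapse to a point, $\Phi(M)=\{0\}$. This observation is what allows a statement about only the \emph{largest} essential phase to pin down the full phase behaviour of $M$.

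First I would characterize $\Phi(M)=\{0\}$ algebraically. Applying Lemma~\ref{lem: phaselmi} with $\alpha=0$, the inclusion $\Phi(M)\subset[0,0]$ holds if and only if $\mathrm{Re}(M)\ge 0$ and $\mathrm{Im}(M)=0$; hence $\Phi(M)=\{0\}$ is equivalent to $M$ being symmetric and positive semidefinite. Thus the phase condition $\phi_{\ess}(L)=0$ becomes, for the forward direction, the purely algebraic statement that $M$ is symmetric (positive semidefiniteness then being automatic from the Laplacian structure, as noted below). Next I would transfer the symmetry of $M$ to that of $VL$. Writing $VL = V^{1/2}MV^{1/2}$ and $M = V^{-1/2}(VL)V^{-1/2}$, and using that $V$ is diagonal so $V^\top=V$, one checks directly that $M=M^\top$ if and only if $VL=(VL)^\top$, i.e. $VL=L^\top V$. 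Therefore $M$ symmetric is equivalent to $VL$ symmetric, which is exactly the definition of the digraph being essentially undirected. This settles the forward implication.

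For the converse I would start from $VL$ symmetric. Since $VL$ is already known to be the Laplacian of a balanced graph (with $\mathbf 1_n$ as common left and right null vector) and its off-diagonal entries $v_iL_{ij}$ are nonpositive, a symmetric $VL$ is the Laplacian of an undirected weighted graph and is therefore positive semidefinite. The matrix $M=V^{-1/2}(VL)V^{-1/2}$ is then symmetric and, by congruence through the nonsingular diagonal factor $V^{-1/2}$, also positive semidefinite. Hence $x^*Mx\ge 0$ for all $x$, so every phase of $M$ is zero, giving $\overline\phi(M)=0$ and thus $\phi_{\ess}(L)=0$ by Lemma~\ref{lem: essentiallaplacian}.

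The main obstacle is the forward implication: one must argue that vanishing of only the largest essential phase is enough to force full Hermitian (here symmetric) structure of $M$, rather than merely placing $W(M)$ in a half-plane. The step that unlocks this is precisely the reality of $M$, which makes $W(M)$ symmetric about the real axis and forces $\underline\phi(M)=-\overline\phi(M)$; this is what collapses $\Phi(M)$ to $\{0\}$ and lets the $\alpha=0$ case of Lemma~\ref{lem: phaselmi} apply. The remaining ingredients — the congruence identity relating the symmetry of $M$ and of $VL$, and the positive semidefiniteness of a symmetric balanced Laplacian — are routine verifications.
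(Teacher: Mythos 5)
Your proof is correct: the paper states this lemma without giving its own proof, and your argument---reducing to $M=D_0^{-1}LD_0$ via Lemma~\ref{lem: essentiallaplacian}, using the reality of $M$ to get $\underline\phi(M)=-\overline\phi(M)$ and hence $\Phi(M)=\{0\}$, and transferring symmetry between $M$ and $VL=V^{1/2}MV^{1/2}$ by diagonal congruence---is exactly the route the surrounding text sets up, so it matches the evidently intended approach. The only remark worth adding is that the forward direction does not even need Lemma~\ref{lem: phaselmi}: $\Phi(M)=\{0\}$ means $x^*Mx\geq 0$ for every complex $x$, which over $\mathbb{C}$ already forces $M$ to be Hermitian, hence (being real) symmetric.
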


The essential phase of a Laplacian corresponding to a graph that is not strongly connected but has a spanning tree has been studied in \cite{wd2023} as well. The Laplacian matrix in this case is reducible, and can be written in the Frobenius normal form.
\begin{lemma}[\cite{BrualdiRyser}] 
If the graph has a spanning tree, then under a proper permutation the Laplacian matrix can be written as a block lower triangular matrix
\begin{align}\label{eq: laplacian}
    L=\begin{bmatrix}
        L_{11} & 0&\dots &0 \\
        L_{21}&L_{22} &\dots&0\\
        \vdots&\vdots&\ddots&\vdots\\
        L_{\mu 1} &L_{\mu 2}&\dots&L_{\mu\mu}
    \end{bmatrix},
\end{align}
where $L_{11}$ is an irreducible Laplacian matrix or a zero matrix with dimension one and $L_{jj}, j=2,\dots,\mu$, is irreducible with at least one row having positive row sum.
\end{lemma}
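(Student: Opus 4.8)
The plan is to read the structure \eqref{eq: laplacian} directly off the strongly connected component (SCC) decomposition of the digraph $\mathcal{G}$, using the spanning tree hypothesis only to fix the ordering of the blocks and to pin down the properties of the diagonal blocks. First I would partition $\mathcal{V}$ into its SCCs $S_1,\dots,S_\mu$ and pass to the condensation digraph, in which each $S_j$ is contracted to a single node; this condensation is acyclic by construction. The spanning tree hypothesis supplies a root vertex that reaches every other vertex, so the SCC containing this root reaches every other SCC in the condensation. In an acyclic digraph a node that reaches all others must have in-degree zero and must be the \emph{unique} such source (any second source could not be reached), so the root's SCC is the unique source of the condensation. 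I relabel it $S_1$ and extend to a topological order $S_1,\dots,S_\mu$ in which every condensation edge runs from a lower to a higher index.

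Next I would translate this ordering into the matrix picture. Recalling that for $i\neq j$ the entry $L_{ij}=-a_{ij}$ is nonzero exactly when there is an edge $v_j\to v_i$, the ordering just constructed guarantees that an edge between two distinct SCCs always runs from a lower-indexed to a higher-indexed block; equivalently, no edge enters a lower-indexed block from a higher-indexed one. Grouping the vertices block-by-block in the order $S_1,\dots,S_\mu$ therefore forces every super-diagonal block $L_{jk}$ with $j<k$ to vanish, since such a block would encode edges from $S_k$ into $S_j$ with $k>j$. This yields the block lower triangular form \eqref{eq: laplacian}, with the diagonal block $L_{jj}$ supported on $S_j$.

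It then remains to identify the diagonal blocks via a row-sum computation. Using $D_{pp}=\sum_{q}a_{pq}$, the $p$-th row sum of $L_{jj}$ for $p\in S_j$ equals $D_{pp}-\sum_{q\in S_j,\,q\neq p}a_{pq}=\sum_{q\notin S_j}a_{pq}$, i.e.\ the total weight of edges entering $v_p$ from outside its own SCC. For the root block $S_1$, which is a source, there are no such incoming edges, so every row sum is zero and $L_{11}$ is exactly the Laplacian of the subgraph induced on $S_1$; if $|S_1|>1$ this subgraph is strongly connected and hence $L_{11}$ is an irreducible Laplacian, while if $|S_1|=1$ the self-loop-free root has in-degree zero and $L_{11}=[\,0\,]$. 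For $j\ge 2$, the internal off-diagonal pattern of $L_{jj}$ is that of the strongly connected subgraph on $S_j$, so $L_{jj}$ is irreducible (in the singleton case $L_{jj}$ is a positive scalar, trivially irreducible); moreover, since $S_1$ is the unique source, each $S_j$ with $j\ge 2$ is a non-source and thus has an incoming condensation edge from some $S_k$ with $k<j$, so at least one row sum $\sum_{q\notin S_j}a_{pq}$ is strictly positive.

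The one point requiring care — and the main obstacle — is the ordering step, specifically the role of the spanning tree in forcing a \emph{unique} source in the condensation. It is precisely this uniqueness that guarantees every block $L_{jj}$ with $j\ge 2$ is a non-source and therefore carries a strictly positive row sum; without the spanning tree, several sources could coexist, and the corresponding diagonal blocks would be genuine zero-row-sum Laplacians, contradicting the claimed characterization. The general existence of the block triangular decomposition is the Frobenius normal form of \cite{BrualdiRyser}; the remaining assertions then follow from the strong connectivity of each $S_j$ and the elementary bookkeeping above, so no further heavy machinery is needed.
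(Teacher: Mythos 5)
Your proof is correct. Note that the paper does not prove this lemma at all: it is quoted from \cite{BrualdiRyser} as an instance of the Frobenius normal form, with the spanning-tree refinements left implicit. What you have written is a complete, self-contained derivation: the SCC/condensation argument reproduces the Frobenius normal form in this special case (block lower triangularity from a topological order, irreducibility of $L_{jj}$ from strong connectivity of each SCC), and your row-sum bookkeeping, the $p$-th row sum of $L_{jj}$ equals $\sum_{q\notin S_j} a_{pq}$, cleanly separates the source block (all row sums zero, so $L_{11}$ is a genuine Laplacian, or $[\,0\,]$ in the singleton case) from the follower blocks (at least one strictly positive row sum). Your proof also makes explicit precisely what the citation hides: the spanning tree forces the condensation to have a \emph{unique} source, which is what rules out additional zero-row-sum diagonal blocks and, since the root SCC reaches every other SCC, automatically places $S_1$ first in any topological order. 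Two small points you gloss over but which are immediate: with the paper's convention $L_{ij}=-a_{ij}$ for $i\neq j$, the digraph associated with the nonzero off-diagonal pattern of $L_{jj}$ is the \emph{reversal} of the induced subgraph on $S_j$, and irreducibility follows because strong connectivity is invariant under edge reversal; and in the singleton case for $j\geq 2$ the scalar $L_{jj}=D_{pp}$ is strictly positive exactly because $S_j$ is not a source, which your uniqueness step already supplies.
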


The induced subgraph of $L_{jj}, j=1,\dots, \mu$ corresponds to a strongly connected component of the graph. We will call it a component in later development of this paper. 
The Laplacian $L$ can not be made semi-sectorial through diagonal similarity transformation with its left Frobenius eigenvector in this case. Notwithstanding, the essential phases of $L_{jj}, j=1,\dots,\mu,$ exist and take effect in the synchronization context. Note that $L_{11}$ is a Laplacian of a strongly connected component formed by all the roots, whose essential phase is given by Lemma \ref{lem: essentiallaplacian}. The matrices $L_{jj}, j=2,\dots,\mu$ are nonsingular M-matrices. The upper bounds of their essential phases are given by the following lemma. 
	
\begin{lemma}
    For a Laplacian matrix $L$ in the form (\ref{eq: laplacian}), there holds 
    \[
    \phi_{\mathrm{ess}}(L_{jj})\!\leq\!\overline\phi(D_{j}^{-1}L_{jj}D_{j})\leq \frac{\pi}{2}, j=2,\dots, \mu,
    \] 
    where $D_{j}=\mathrm{diag}(\sqrt{x_{j1}/y_{j1}},\dots,\sqrt{x_{jn}/y_{jn}})$, $x_j$ and $y_j$ are right and left eigenvector corresponding to the smallest real eigenvalue of $L_{jj}$.
\end{lemma}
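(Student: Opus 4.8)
The plan is to verify the two inequalities separately, since they are of quite different character. The first inequality, $\phi_{\ess}(L_{jj})\le\overline\phi(D_j^{-1}L_{jj}D_j)$, is essentially a matter of definition: $\phi_{\ess}(L_{jj})=\inf_{D\in\mathcal D}\overline\phi(D^{-1}L_{jj}D)$ is an infimum over all admissible positive diagonal scalings, so it suffices to exhibit that the particular $D_j$ in the statement is admissible, i.e.\ that $D_j^{-1}L_{jj}D_j$ is semi-sectorial. This will come for free once the second inequality is established, so the real content is to show $\overline\phi(D_j^{-1}L_{jj}D_j)\le\pi/2$, which by Lemma~\ref{lem: phaselmi} (equivalently, by placing the numerical range in the closed right half-plane) reduces to the single matrix inequality $\mathrm{Re}(D_j^{-1}L_{jj}D_j)\ge 0$.

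First I would exploit the M-matrix structure. Since $L_{jj}$ with $j\ge 2$ is irreducible with at least one positive row sum, it is an irreducible nonsingular M-matrix, so it can be written as $L_{jj}=sI-B$ with $B\ge 0$ irreducible and $s>\rho(B)$; its smallest real eigenvalue is then $\lambda_{\min}=s-\rho(B)>0$, and by Perron--Frobenius the right and left eigenvectors $x_j,y_j$ associated with $\lambda_{\min}$ (equivalently, the right and left Perron eigenvectors of $B$) are entrywise positive. This is exactly the regime in which the scaling $D_j=\mathrm{diag}(\sqrt{x_{j1}/y_{j1}},\dots,\sqrt{x_{jn}/y_{jn}})$ is well defined.

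The key observation, which I expect to be the crux of the argument, is that this particular scaling symmetrizes the Perron data. Writing $\tilde B=D_j^{-1}BD_j\ge 0$ and $w=(\sqrt{x_{j1}y_{j1}},\dots,\sqrt{x_{jn}y_{jn}})^{\top}$, a direct computation shows $w=D_j^{-1}x_j=D_j y_j$, whence $\tilde B w=\rho(B)\,w$ and $w^{\top}\tilde B=\rho(B)\,w^{\top}$ hold simultaneously; that is, the single positive vector $w$ is at once a right and a left Perron eigenvector of $\tilde B$. Consequently the symmetric part $C=\tfrac12(\tilde B+\tilde B^{\top})$ satisfies $Cw=\rho(B)\,w$ with $w>0$. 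Since $C$ is a nonnegative symmetric matrix possessing a positive eigenvector, Perron--Frobenius forces the corresponding eigenvalue to equal its spectral radius, so $\lambda_{\max}(C)=\rho(B)<s$.

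Finally I would assemble these pieces. Because $\mathrm{Re}(D_j^{-1}L_{jj}D_j)=sI-C\ge (s-\rho(B))I$ is positive definite, the scaled matrix is accretive, hence semi-sectorial, and its numerical range lies strictly inside the right half-plane, giving $\overline\phi(D_j^{-1}L_{jj}D_j)\le\pi/2$. This simultaneously certifies that $D_j$ belongs to the set $\mathcal D$ over which the infimum defining $\phi_{\ess}(L_{jj})$ is taken, so that $\phi_{\ess}(L_{jj})\le\overline\phi(D_j^{-1}L_{jj}D_j)$ as well, yielding both claimed inequalities. The one subtlety to guard against is that symmetrizing a nonnegative matrix can in general enlarge its spectral radius; the reason it does not here is precisely the common left/right Perron eigenvector produced by the $x_j/y_j$ scaling, which pins $\lambda_{\max}(C)$ down to $\rho(B)$ rather than to something larger than $s$.
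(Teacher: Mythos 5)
Your proof is correct, and it shares the paper's scaffolding up to the decisive step: like the paper, you write $L_{jj}=s I-B$ with $B\geq 0$ irreducible and $s>\rho(B)$, and you observe that the given scaling yields a common positive left/right Perron eigenvector $w=D_j^{-1}x_j=D_jy_j$ for $\tilde B=D_j^{-1}BD_j$. Where you genuinely diverge is in how this is exploited. The paper invokes a numerical-range result from the literature \cite{li2002numerical}: the common Perron vector makes $\rho(B)$ a \emph{sharp point} of $W(\tilde B)$, hence $s-\rho(B)$ is a sharp point of the numerical range of the scaled Laplacian (a translation of $-W(\tilde B)$), from which the bound follows. You instead give an elementary, self-contained argument: the symmetric part $C=\tfrac12(\tilde B+\tilde B^{\top})$ inherits $Cw=\rho(B)w$ with $w>0$, so Perron--Frobenius pins $\lambda_{\max}(C)=\rho(C)=\rho(B)$, whence $\mathrm{Re}(D_j^{-1}L_{jj}D_j)=sI-C\geq (s-\rho(B))I>0$. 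This buys you two things the paper leaves implicit: an explicit certificate that $D_j\in\mathcal D$ (the scaled matrix is in fact sectorial, not merely semi-sectorial), so the infimum defining $\phi_{\mathrm{ess}}$ legitimately applies and the first inequality is immediate; and a quantitative accretivity margin, namely $\mathrm{Re}(D_j^{-1}L_{jj}D_j)\geq \lambda_{\min}(L_{jj})I$ where $\lambda_{\min}(L_{jj})=s-\rho(B)$ is the smallest real eigenvalue. What the sharp-point route buys in exchange is finer geometric information---the numerical range has a genuine corner at its leftmost point---which is stronger than accretivity but not needed for the stated inequality $\overline\phi(D_j^{-1}L_{jj}D_j)\leq\pi/2$. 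Your closing caveat, that symmetrizing a nonnegative matrix can in general inflate the spectral radius and that the simultaneous left/right Perron eigenvector is exactly what prevents this, correctly identifies the same mechanism that underlies the paper's citation; your version simply proves it directly rather than importing it.
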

	\begin{proof}
		According to the Gershgorin theorem, the spectrum of $L_{jj}$ lies in the right half plane. Furthermore, the matrix $L_{jj}$ is real. Hence if $D^{-1}L_{jj}D$ is semi-sectorial for some $D$, it must also be accretive, which implies $\phi_{\mathrm{ess}}(L_{jj})\in[0,\pi/2]$. The next step is to show that $\overline\phi(D_{j}^{-1}L_{jj}D_{j})$ provides a less conservative upper bound for $\phi_{\mathrm{ess}}(L_{jj})$.
		
		Since $L_{jj}$ is an M-matrix, it can be written as $s_jI-A_j$, where $A_j$ is nonnegative and $s_j>\rho(A_j)$. It can be seen that $D_{j}^{-1}A_jD_{j}$ has a common left and right eigenvector corresponding to $\rho(A_j)$, hence $\rho(A_j)$ is a sharp point of $W(A_j)$ according to \cite{li2002numerical}. It follows that $s_j-\rho(A_j)$ is a sharp point of $W(L_{jj})$ since $W(L_{jj})$ is a translation of $-W(A_{j})$. As a consequence, $\overline\phi(D_{j}^{-1}MD_{j})$ provides an upper bound of $\phi_{\ess}(L_{jj})$.
	\end{proof}
	The essential phases of $L_{jj}$ in general do not have a closed form expression. Notwithstanding, an algorithm for the numerical computation has been derived. Details can be found in \cite{wd2023}.

	\section{Synchronization controller design}\label{sec:synthesis}
	In this section, we delve into the problem of synchronization synthesis, with the goal of designing the component controller and component-cluster combined controller architecture that enforce synchronization among heterogeneous agents.
 
 This problem is particularly challenging due to two major considerations. The first is the synchronizability of the network, i.e., whether there exist controllers capable of achieving synchronization across diverse agents. It will be delineated that the phase serves as a key characterization of the diversity of the agents. The second challenge is to provide a construction method of controllers for the synchronizable multi-agent system. 

 By properly labeling the agents, the Laplacian matrix can be written in the form (\ref{eq: laplacian}). Divide the agents into $\mu$ components according to (\ref{eq: laplacian}), denoted by $\mathcal P_1,\dots, \mathcal P_\mu$, each representing a strongly connected component. The size of each component corresponds to the dimension of $L_{jj}$. Notably, the first component contains all the roots of the graph, serving as the steering component. These agents have directed paths to all other nodes but do not receive information from other components. When $L_{11}$ is a zero matrix of dimension one, the multi-agent system has a single leader. The agents in $j$-th component for $j=2,\dots,\mu$ can be treated as followers. 
	
\subsection{Synchronization under component-wise controllers}
For large-scale networks, scalability becomes a critical concern. In this context, designing controllers based on the graph topology offers a practical and efficient approach. Specifically, we consider component-wise controllers design, where the number of controllers is equal to the number of strongly connected components. Denote the component-wise controllers by $C_j(z), j=1,\dots,\mu$.

Synchronizing a large group of heterogeneous agents using a component-wise controller framework is deeply connected to the simultaneous stabilization problem, a well-known challenge in control theory that has been extensively studied in works such as \cite{vidyasagar1985control, saeks1982fractional, cao1999simultaneous}. However, the systems in our study exhibit a unique structure: they are semi-stable and share common poles located on the unit circle. This distinctive property simplifies the problem, enabling a more efficient solution approach compared to the general case. Let 
 $
\mathcal{N}_{lj}=\{N_{li}: \bm P_i\in\mathcal P_j\}$, where $l=0,1,\dots,q,\pi$, $i=1,\dots, n$, $j=1,\dots,\mu.
 $
We have the following result. The proof is given in the Appendix. 
\begin{theorem}\label{thm:component}
Problem \ref{problem1} is solvable with component-wise controllers for each agent if 
 \[
\mathrm{div}(\mathcal{N}_{lj})+\phi_{\mathrm{ess}}(L_{jj})<\frac{\pi}{2},
\]
for $l=0,1,\dots,q,\pi$, and $j=1,\dots,\mu$.
\end{theorem}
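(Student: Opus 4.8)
The plan is to prove synchronization by certifying stability of the operator $\bm S=(I+\tilde{\bm P}\tilde L)^{-1}$, which the problem formulation already showed to be equivalent to $\lim_{t\to\infty}e(t)=0$. The first move is to exploit the block lower-triangular Frobenius form \eqref{eq: laplacian} of $L$: since the components $\mathcal P_1,\dots,\mathcal P_\mu$ are coupled only in a cascade, bounded invertibility and stability of the return difference $I+\tilde{\bm P}\tilde L$ on the unit circle decouples into a condition on each diagonal block. This lets me treat each component $j$ in isolation, pairing its dedicated controller $C_j(z)$ with the block $L_{jj}$ and the disagreement subspace that the $Q$-projection carves out of $\mathcal P_j$.

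Next, because the agents share the common persistent modes $e^{j\Omega}$ on the unit circle while the residuals $\bm\Delta_i$ are stable, I would split the loop into a resonant part, governed near each mode $e^{j\omega_l}$ by the residue matrices through $\bar P_i(z)\approx N_{li}/(z-e^{j\omega_l})$, and a stable remainder. The design idea is to choose $C_j$ as a \emph{low-gain} LTI controller whose value at each modal frequency, $C_j(e^{j\omega_l})$, matches a small multiple of an aligning matrix $K_{lj}$ for the set $\mathcal N_{lj}$; in general this requires a dynamic controller interpolating the $K_{lj}$ across the finitely many $\omega_l$. Since the hypothesis gives $\mathrm{div}(\mathcal N_{lj})<\pi/2-\phi_{\mathrm{ess}}(L_{jj})$, I can fix $\alpha_{lj}$ with $\mathrm{div}(\mathcal N_{lj})<\alpha_{lj}<\pi/2-\phi_{\mathrm{ess}}(L_{jj})$, and by the definition of diversity together with Proposition~\ref{prop:LMI} such a $K_{lj}$ exists, making each $N_{li}K_{lj}$ semi-sectorial with $\Phi(N_{li}K_{lj})\subset[-\alpha_{lj},\alpha_{lj}]$.

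The core phase-composition step combines the aligned residues with the network topology. Applying the diagonal similarity $D_j$ that realizes the essential phase of $L_{jj}$, the resonant loop matrix becomes a product of a semi-sectorial factor (the aligned residues, phases in $[-\alpha_{lj},\alpha_{lj}]$) and the scaled Laplacian block (phases in $[-\phi_{\mathrm{ess}}(L_{jj}),\phi_{\mathrm{ess}}(L_{jj})]$ since $L_{jj}$ is real). The matrix-product phase bound of Lemma~\ref{lem:major}, with the compression property Lemma~\ref{lem:compression} used to pass the bound down to the disagreement subspace, then places the relevant eigenvalue phases in $[-\alpha_{lj}-\phi_{\mathrm{ess}}(L_{jj}),\,\alpha_{lj}+\phi_{\mathrm{ess}}(L_{jj})]\subset(-\pi/2,\pi/2)$. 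Consequently the modal loop is sectorial, so the return difference is strictly accretive (passive-type), and I would turn this into an LMI/Lyapunov certificate via Lemma~\ref{lem: phaselmi}, yielding bounded invertibility of $I+\tilde{\bm P}\tilde L$ at every resonant frequency; the strict inequality is exactly what the $\pi/2$ threshold buys, since it forces positive real part rather than mere avoidance of the $-1$ point.

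Finally I would dispose of the stable residuals $\bm\Delta_i$ and all off-resonant behavior with the nonlinear small gain theorem: scaling the controllers down keeps $\|\bm\Delta\bm C\|_\infty$ below the margin left by the strict inequality, closing the loop by a combined phase-plus-gain argument and delivering the promised low-gain controllers. The main obstacle I anticipate is precisely this reconciliation. At the resonant modes the open-loop gain is unbounded, so stabilization must come from the accretiveness (phase) condition rather than from gain, while the low-gain scaling that suppresses $\bm\Delta$ must not disturb the alignment achieved at the modes; constructing a single storage/Lyapunov function that simultaneously certifies the sectorial condition at the modes and contraction of the transient part, and verifying that the diagonal scaling $D_j$ survives both the $Q$-transformation and the cascade coupling between components, is the delicate technical heart of the argument.
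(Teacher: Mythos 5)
Your proposal follows essentially the same route as the paper's proof: LMI-based simultaneous alignment of the residue sets $\mathcal N_{lj}$ with margin $\alpha_{lj}<\pi/2-\phi_{\mathrm{ess}}(L_{jj})$, a low-gain interpolating controller $C_j(z)=\epsilon K_j(z)$ matching $K_{lj}$ at the persistent modes, diagonal scaling realizing the essential phase combined with Lemmas~\ref{lem:compression} and~\ref{lem:major} to push the modal loop eigenvalues into the open right half plane, a Lyapunov/bounded-real certificate perturbed in $\epsilon$, the nonlinear small gain theorem against the stable residuals $\bm\Delta_i$, and a cascade treatment over the strongly connected components --- including the delicate single-Lyapunov-function construction you correctly identified as the technical heart. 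The only notable difference is that the paper does not decouple stability of $\bm S$ block-triangularly as you suggest (the global projection $Q$ mixes components); instead it synchronizes the steering component first and then treats each follower component as a stability-plus-tracking problem via Lemma~\ref{lem: laplaciancomb} and the internal model principle, which is precisely the ``cascade coupling'' subtlety you flagged.
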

Denote the aligning matrix in $\mathcal{N}_{lj}$ by $K_{lj}.$ The component-wise controllers are given by $C_j(z)=\epsilon K_j(z), j=1,\dots,\mu$, for all $\epsilon\in(0,\epsilon^*)$, where $K_j(z)\in\mathcal{RH}_\infty^{m\times m}$ satisfies the interpolating conditions $K_j(e^{j\omega_l})=e^{j\omega_l}K_{lj}$ and $\epsilon^*>0$ can be estimated from given data. 

	\begin{remark}
		When $L_{11}$ is a zero matrix with dimension one, indicating that the agents have a single leader, the conditions in Theorem~\ref{thm:component} only need to be satisfied within components of followers, i.e., for $j=2,\dots,\mu$. 
	\end{remark}
The theorem shows that in a network of heterogeneous individuals, whether a good collective behaviour can be reached depends on the trade-off between the individuals' diversity and the interaction quality. 
 
	Checking whether the conditions in Theorem~\ref{thm:component} are satisfied is an LMI feasibility problem. This can be achieved by either first computing the $\mathrm{div}(\mathcal{N}_{lj})$ using a bisection method and verifying if an upper bound of $\phi_{\ess}(L_{jj})$ is obtained at each step during the bisection process, or by directly computing $\phi_{\ess}(L_{jj})$ and verifying whether the agents can be simultaneously aligned to corresponding phase interval during the process of bisection search of $\phi_\ess(L_{jj})$. After the aligning matrices $K_{0j}, K_{\pi j}\in\mathbb{R}^{m\times m}, K_{lj}\in\mathbb{C}^{m\times m},l=1,\dots,q$, are obtained, we provide a design procedure of $K_j(z)\in\mathcal{RH}_\infty^{m\times m}$ such that 
	\begin{align*}
	K_j(e^{j\omega_l})&={e^{j\omega_l}}K_{lj}, l=0,1,\dots,q,\pi,\\
		K_j(e^{-j\omega_l})&=e^{-j\omega_l}\bar K_{lj}, l=1,\dots,q.
	\end{align*}
	Let $z_0=1,$ and $z_{2l-1}=e^{j\omega_l}, z_{2l}=e^{-j\omega_l}$ for $l=1,\dots,q$, $z_{2q+1}=-1$. With the aid of Lagrange polynomial, a $K_j(z)$ can be given by
	\begin{align*}
    K_j(z)&\!=\!\sum_{l=1}^{q}\left(\!K_{lj}\!\prod_{i=0,\atop i\neq 2l-1}^{2q+1}\frac{z^{-1}-z_i^{-1}}{z_{2l-1}^{-1}-z_i^{-1}}\!+\!\bar K_{lj}\prod_{i=0,\atop i\neq 2l}^{2q+1}\frac{z^{-1}-z_i^{-1}}{z_{2l}^{-1}-z_i^{-1}}\right)\\
&\!+\!K_{0j}\prod_{i=1}^{2q+1}\frac{z^{-1}-z_i^{-1}}{1-z_i^{-1}}+K_{\pi j}\prod_{i=0}^{2q}\frac{z^{-1}-z_i^{-1}}{-1-z_i^{-1}}.
	\end{align*}
	
	From the proof of Theorem~\ref{thm:component}, the synchronized output value is only determined by the initial state of agents in the first component. Other agents have no contributions to it. The problem can be separated into two parts: the synchronization in the steering component and tracking of other agents. When the graph is strongly connected, the number of strongly connected components equals $1$. In this case, a uniform controller can be applied, and Theorem~\ref{thm:component} simplifies accordingly.
	\begin{corollary}\label{cor: strongly connected}
	Problem \ref{problem1} is solvable with a uniform controller under a strongly connected directed graph if 
  \[
\mathrm{div}(\mathcal{N}_{l1})+\phi_{\ess}(L)<\frac{\pi}{2}
\]
for $l=0,1,\dots,q,\pi$. 
\end{corollary}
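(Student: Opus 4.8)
The plan is to obtain this corollary as the degenerate, single-component instance of Theorem~\ref{thm:component}. First I would observe that a strongly connected directed graph has exactly one strongly connected component, so the number of components is $\mu=1$. Consequently the Frobenius normal form in \eqref{eq: laplacian} collapses to a single diagonal block, and since the Laplacian of a strongly connected graph is irreducible (as recalled in the preliminaries), that block is the full matrix, i.e.\ $L_{11}=L$. All agents then belong to the unique component $\mathcal{P}_1$, so the residue set reduces to $\mathcal{N}_{l1}=\{N_{li}:i=1,\dots,n\}$ for each persistent-mode index $l\in\{0,1,\dots,q,\pi\}$, and $\phi_{\ess}(L_{11})=\phi_{\ess}(L)$, which in this strongly connected setting is moreover computable in closed form via Lemma~\ref{lem: essentiallaplacian} as $\overline\phi(D_0^{-1}LD_0)$.

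Next I would note that the component-wise controller architecture, which assigns one controller $C_j(z)$ to each component, reduces in the case $\mu=1$ to a single controller $C_1(z)$ applied identically to every agent; this is precisely a uniform controller. Hence the controller family produced by Theorem~\ref{thm:component} is already of the required form and no separate construction is needed. The low-gain scaling $C_1(z)=\epsilon K_1(z)$, the Lagrange-interpolation formula for $K_1(z)$, and the interpolating conditions $K_1(e^{j\omega_l})=e^{j\omega_l}K_{l1}$ all carry over verbatim, with the single aligning matrix $K_{l1}$ obtained from the simultaneous $\alpha$-alignment of $\mathcal{N}_{l1}$.

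Finally I would invoke Theorem~\ref{thm:component} directly. Its hypothesis, specialized to $\mu=1$ and $j=1$, reads $\mathrm{div}(\mathcal{N}_{l1})+\phi_{\ess}(L_{11})<\pi/2$ for every $l\in\{0,1,\dots,q,\pi\}$, which under the identifications above is exactly the stated condition $\mathrm{div}(\mathcal{N}_{l1})+\phi_{\ess}(L)<\pi/2$. Solvability then follows at once. There is no genuine analytical obstacle here, since everything rests on the already-proved theorem; the only point requiring care is the bookkeeping of the degenerate Frobenius form, namely verifying that the single block is indeed the full irreducible Laplacian and that \emph{component-wise} and \emph{uniform} controllers coincide when $\mu=1$, so that the general solvability condition specializes without loss.
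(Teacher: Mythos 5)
Your proposal is correct and takes essentially the same route as the paper, which presents this corollary as the immediate $\mu=1$ specialization of Theorem~\ref{thm:component}: the Frobenius form collapses to the single irreducible block $L_{11}=L$, the residue set becomes $\mathcal{N}_{l1}=\{N_{li}:i=1,\dots,n\}$, and the one component-wise controller is by construction a uniform controller. Your additional bookkeeping (irreducibility of $L$ and the closed-form essential phase via Lemma~\ref{lem: essentiallaplacian}) is consistent with the paper's preliminaries and introduces no deviation.
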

 If the graph is undirected, the essential phases of the corresponding Laplacian matrix is $0$. 
\begin{corollary}\label{cor: undirected}
	Problem \ref{problem1} is solvable with a uniform controller under a connected undirected graph if 
  \[
\mathrm{div}(\mathcal{N}_{l1})<\frac{\pi}{2}
\]
for $l=0,1,\dots,q,\pi$. 
\end{corollary}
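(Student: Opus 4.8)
The plan is to obtain this statement as an immediate specialization of Corollary~\ref{cor: strongly connected}, the entire content being the observation that the interaction-quality term $\phi_{\ess}(L)$ vanishes in the undirected case. First I would verify that Corollary~\ref{cor: strongly connected} actually applies. A connected undirected graph is strongly connected, since every edge is bidirectional and hence any directed path admits a reversal; thus the number of strongly connected components is $\mu=1$, the component-wise architecture reduces to a single uniform controller, and the hypotheses of Corollary~\ref{cor: strongly connected} are met with $L_{11}=L$ and $\mathcal{N}_{l1}$ the sole family of residues.

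The key step is to show $\phi_{\ess}(L)=0$. For an undirected graph the adjacency matrix satisfies $a_{ij}=a_{ji}$, so $L=D-A$ is real symmetric; being a Laplacian it is positive semidefinite, and therefore $W(L)$ lies on the nonnegative real axis. Thus $L$ is already semi-sectorial with $D=I$, and $\overline\phi(L)=\underline\phi(L)=0$. Taking $D=I$ in the definition $\overline\phi_{\ess}(L)=\inf_{D\in\mathcal{D}}\overline\phi(D^{-1}LD)$ gives $\phi_{\ess}(L)\le 0$, while the nonzero eigenvalues of $L$ are strictly positive reals (angle $0$) and lie in $W(D^{-1}LD)$ for every admissible $D$, forcing $\overline\phi(D^{-1}LD)\ge 0$ and hence $\phi_{\ess}(L)\ge 0$. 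Together these give $\phi_{\ess}(L)=0$. Equivalently, I could cite the characterization that $\phi_{\ess}(L)=0$ if and only if the digraph is essentially undirected, and note that an undirected graph is trivially essentially undirected, since $V=I$ and $VL=L$ is symmetric.

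Finally, substituting $\phi_{\ess}(L)=0$ into the solvability inequality $\mathrm{div}(\mathcal{N}_{l1})+\phi_{\ess}(L)<\pi/2$ of Corollary~\ref{cor: strongly connected} collapses it to $\mathrm{div}(\mathcal{N}_{l1})<\pi/2$ for $l=0,1,\dots,q,\pi$, which is precisely the claimed condition. I do not expect a genuine obstacle here: the result is a clean corollary, and the only points demanding care are the structural reduction to the strongly connected setting and the vanishing of the essential phase, both of which are forced by the symmetry of $L$.
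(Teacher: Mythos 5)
Your proposal is correct and follows essentially the same route as the paper, which derives this corollary from Corollary~\ref{cor: strongly connected} together with the observation that a connected undirected graph is strongly connected and that the symmetry of $L$ forces $\phi_{\ess}(L)=0$. Your two-sided argument pinning down $\phi_{\ess}(L)=0$ (upper bound via $D=I$, lower bound via the positive eigenvalues lying in $W(D^{-1}LD)$) just makes explicit what the paper asserts without proof, so there is nothing to add.
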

One can see that conditions in Theorem~\ref{thm:component} only require the residue information of semi-stable modes. The stable part of each agent does not appear in the theorem, showing that the proposed controller design technique can tolerate large heterogeneity among the agents. The method is robust against the perturbations. Furthermore, the synchronizability condition only depends on the phase information. The gain of each agent can be arbitrarily large. Therefore, the synchronization problem is likely solvable if the agents have vastly different sizes but similar shapes. The design of synchronizing controllers also suggests the use of low gain controller, indicating that the coordination among the agents does not need strong action. Instead
	it is more critical to have the right directions of the action.

	The consensus problem is a special case of synchronization problem, where all the agents share only one common pole, i.e.,
	\[
\bar P_i(z)=\frac{N_{0i}}{z-1}.
    \]
	\begin{corollary}
		The multi-agent system is consensusable with component-wise controllers if
\[
\mathrm{div}(\mathcal{N}_{0j})+\phi_{\mathrm{ess}}(L_{jj})<\frac{\pi}{2}
\]
for $j=1,\dots,\mu.$ 
	\end{corollary}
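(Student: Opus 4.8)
The plan is to obtain this corollary as an immediate specialization of Theorem~\ref{thm:component}, since the consensus problem is nothing but the synchronization problem restricted to a single common persistent mode. First I would identify the consensus setting within the general framework. The agents share the common oscillatory modes $e^{j\Omega}=\{e^{j0},e^{\pm j\omega_1},\dots,e^{\pm j\omega_q},e^{j\pi}\}$, and consensus corresponds to the degenerate choice $\Omega=\{0\}$, i.e.\ $q=0$ with no mode at $e^{j\pi}=-1$. Under this choice the partial fraction expansion \eqref{eq: dtagents} collapses to the single term $\bar P_i(z)=N_{0i}/(z-1)$, which is precisely the stated consensus agent dynamics. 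All the standing assumptions of Section~\ref{sec:synformulation}---controllability of $(A_i,B_i)$, observability of $(C_i,A_i)$, semi-stability of $\bar P_i$ with $\bm\Delta_i$ stable, and the common simple unit-modulus pole of geometric multiplicity $m$---carry over verbatim, so the hypotheses of Theorem~\ref{thm:component} are in force.

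Next I would specialize the solvability condition. In Theorem~\ref{thm:component} the index $l$ ranges over $\{0,1,\dots,q,\pi\}$, one value for each common pole. With $\Omega=\{0\}$ the only surviving pole is $z=1$, so the residue sets $\mathcal N_{lj}$ for $l\in\{1,\dots,q,\pi\}$ are empty and impose no constraint. Consequently the family of inequalities from Theorem~\ref{thm:component} reduces to the single family
\[
\mathrm{div}(\mathcal N_{0j})+\phi_{\mathrm{ess}}(L_{jj})<\frac{\pi}{2},\quad j=1,\dots,\mu,
\]
which is exactly the hypothesis of the corollary. Invoking Theorem~\ref{thm:component} then yields solvability of Problem~\ref{problem1} with component-wise controllers, and output synchronization to a common (here constant) trajectory is precisely consensus.

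I expect no genuine obstacle here; the only point worth verifying is that the controller construction degenerates cleanly. The interpolation conditions $K_j(e^{j\omega_l})=e^{j\omega_l}K_{lj}$ collapse to the single requirement $K_j(1)=K_{0j}$ at $z=1$, which is met by the constant gain $K_j(z)\equiv K_{0j}$. Hence the component-wise consensus controllers may be taken static, $C_j(z)=\epsilon K_{0j}$ for $\epsilon\in(0,\epsilon^*)$, recovering the familiar low-gain static consensus protocol as a special case of the design in Theorem~\ref{thm:component}.
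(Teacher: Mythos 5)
Your proposal is correct and matches the paper's treatment: the paper gives no separate proof of this corollary, treating it exactly as you do---an immediate specialization of Theorem~\ref{thm:component} to the single persistent pole at $z=1$, where only the $l=0$ residue sets $\mathcal{N}_{0j}$ remain and the interpolation-based controller collapses to the static low gain $C_j(z)=\epsilon K_{0j}$. Your closing observation about recovering the static low-gain consensus protocol is likewise consistent with the paper's subsequent remark on discrete-time integrators.
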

 
	The very early studies assume that the agents are simply all identical integrators $x_i(t+1)=x_i(t)+u_i(t)$. While the integrator in continuous-time is passive, the integrator in discrete-time is not passive resulting from the sampling process. The controllers $C_i(z)=K, i=1,\dots,n,$ with a sufficient small positive $K$ solves the problem, which is consistent with the result in the literature. 
	
      While the primary focus of this subsection has been on the design and advantages of component-wise controllers, it is also worth considering scenarios where a uniform controller might be applicable. The uniform controller, though less flexible than component-wise designs, offers simplicity by assigning the same controller to all agents. Here we briefly discuss the conditions under which synchronization can be achieved using a uniform controller. Let 
 $
\mathcal{N}_{l}=\{N_{lj}:j=1,\dots,n\}.
 $ 
\begin{theorem}\label{thm: uniform}
Problem \ref{problem1} is solvable with a uniform controller if 
 \[
\mathrm{div}(\mathcal{N}_{l})+\max_{1\leq i\leq \kappa}\{\phi_{\mathrm{ess}}(L_{ii})\}<\frac{\pi}{2},
\]
for $l=0,1,\dots,q,\pi$.
 \end{theorem}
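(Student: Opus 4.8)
The plan is to obtain Theorem~\ref{thm: uniform} as a specialization of Theorem~\ref{thm:component}, exploiting the fact that a uniform controller is precisely a component-wise controller in which all component controllers are forced to coincide, $\bm C_1=\cdots=\bm C_\mu=\bm C$. Recall that the construction behind Theorem~\ref{thm:component} feeds, for each mode $l$ and each component $j$, an aligning matrix $K_{lj}$ satisfying $\Phi(N_{li}K_{lj})\subset[-\alpha_{lj},\alpha_{lj}]$ for every $\bm P_i\in\mathcal P_j$ with $\alpha_{lj}+\phi_\ess(L_{jj})<\frac{\pi}{2}$, into the Lagrange interpolant $K_j(z)$ with $K_j(e^{j\omega_l})=e^{j\omega_l}K_{lj}$, and then sets $\bm C_j=\epsilon K_j(z)$. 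To make this controller uniform across components it suffices to produce, for each $l$, a single matrix $K_l$ that can simultaneously play the role of every $K_{lj}$; the resulting interpolants then coincide, $K_1(z)=\cdots=K_\mu(z)=:K(z)$, and $\bm C=\epsilon K(z)$ is the desired uniform controller.

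The key steps are as follows. First, fixing a mode $l$, I would pick $\alpha$ with $\mathrm{div}(\mathcal N_l)<\alpha<\frac{\pi}{2}-\max_{1\le i\le\kappa}\phi_\ess(L_{ii})$, which is possible exactly because of the hypothesis (note that the hypothesis forces $\mathrm{div}(\mathcal N_l)<\frac{\pi}{2}$, so $\alpha(\mathcal N_l)$ is nonempty and the infimum is a genuine alignability threshold). Second, since $\alpha>\mathrm{div}(\mathcal N_l)$, the $\alpha$-alignability property gives a single $K_l$ with $\mathrm{rank}(N_{li}K_l)=\mathrm{rank}(N_{li})$ and $\Phi(N_{li}K_l)\subset[-\alpha,\alpha]$ for all $i=1,\dots,n$, whose feasibility is checkable by Proposition~\ref{prop:LMI}. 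Third, because $\mathcal N_{lj}\subset\mathcal N_l$, this same $K_l$ aligns every sub-collection $\mathcal N_{lj}$ into $[-\alpha,\alpha]$ trivially (each $N_{li}K_l$ with $\bm P_i\in\mathcal P_j$ is one of the already-aligned matrices), and by the choice of $\alpha$ we have $\alpha+\phi_\ess(L_{jj})\le\alpha+\max_{1\le i\le\kappa}\phi_\ess(L_{ii})<\frac{\pi}{2}$ for every component $j$. Hence the common choice $K_{lj}:=K_l$ meets all the per-component phase requirements of Theorem~\ref{thm:component}. Repeating this for $l=0,1,\dots,q,\pi$ and assembling the uniform interpolant $K(z)$ with $K(e^{j\omega_l})=e^{j\omega_l}K_l$, the solvability condition of Theorem~\ref{thm:component} holds with identical aligning data in every component, so its Lyapunov low-gain construction delivers the uniform controller $\bm C=\epsilon K(z)$, $\epsilon\in(0,\epsilon^*)$.

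I expect the only real subtlety, rather than a deep obstacle, to be the bookkeeping that a single aligning matrix must serve all components at once while the network phase budget is dictated by the worst component. These two effects are accounted for precisely by enlarging the diversity set from the per-component $\mathcal N_{lj}$ to the global $\mathcal N_l$ and by replacing $\phi_\ess(L_{jj})$ with $\max_{1\le i\le\kappa}\phi_\ess(L_{ii})$; the subset monotonicity of diversity, $\mathrm{div}(\mathcal N_{lj})\le\mathrm{div}(\mathcal N_l)$, guarantees that the uniform condition is the correct and generally more demanding analogue of the component-wise one. Once the common $K_l$ is in hand, no new analysis is required beyond that already developed for Theorem~\ref{thm:component}.
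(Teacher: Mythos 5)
Your proposal is correct and matches the paper's intended argument: the paper itself omits the proof, stating only that it ``follows a similar approach to Theorem~\ref{thm:component}'' with a single global aligning matrix $K_l$ for $\mathcal{N}_l$ and a uniform interpolant $C(z)=\epsilon K(z)$, which is precisely your specialization $K_{lj}:=K_l$ with the phase budget $\alpha+\phi_{\mathrm{ess}}(L_{jj})\le\alpha+\max_i\phi_{\mathrm{ess}}(L_{ii})<\frac{\pi}{2}$ feeding into the existing Lyapunov low-gain construction. Your extra care about nonemptiness of $\alpha(\mathcal{N}_l)$ under the hypothesis and the subset monotonicity $\mathrm{div}(\mathcal{N}_{lj})\le\mathrm{div}(\mathcal{N}_l)$ is sound and fills in exactly the bookkeeping the paper leaves implicit.
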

 The proof follows a similar approach to Theorem~\ref{thm:component} and is therefore omitted for brevity.
 Denote the aligning matrix in $\mathcal{N}_{l}$ by $K_{l}.$ 
 A uniform controllers is given by $C(z)=\epsilon K(z)$, for all $\epsilon\in(0,\epsilon^*)$, where $K(z)\in\mathcal{RH}_\infty^{m\times m}$ satisfies the interpolating conditions $K(e^{j\omega_l})=e^{j\omega_l}K_{l}$ and $\epsilon^*>0$ can be estimated from given data. The condition here is more conservative than that in Theorem~\ref{thm:component}. However, the aligning matrix $K_l$ in this case can synchronize all agents, whereas in Theorem~\ref{thm:component}, the aligning matrices depend on the individual components.

 \subsection{Synchronization under component-cluster combined controller architecture}
Although the component-wise controller scheme offers improved scalability compared to the agent-specific controller scheme, it may still impose constraints that make the synchronization problem unsolvable if the diversity of any of the agent components is too large. If such limitations arise, adopting a fully agent-specific controller can solve the problem but sacrifices scalability. Therefore, we aim at identifying a solution that balances conservatism and scalability effectively.

To address this challenge, we analyze the limitations of the component-wise controller. As stated in Theorem~\ref{thm:component}, achieving synchronization with this controller structure requires that the diversity of the residue matrices of all agents at each persistent mode and in each network component remains within a specific range. However, as the diversity of these matrices increases, meeting this condition becomes increasingly difficult. Consequently, greater variation in the residue matrices significantly complicates the controller design.

This observation motivates a divide and conquer strategy. By grouping agents with similar physical properties into clusters, we expect that the diversity of each cluster is relatively small and then we tailor the componentwise controller design to each cluster, ensuring a more flexible and scalable solution. Specifically, a common controller can be assigned to agents in each intersection of an agent cluster and an agent component. This approach reduces conservatism by distributing design complexity across clusters.

To formalize this approach, we assume the clustering is done based on physical understanding of the agents or based on a machine-learning process from the data, such as clustering in UAV networks \cite{luo2024star} or using saddle-point analysis \cite{burger2012hierarchical}. These methods generate clusters that align with the system’s characteristics. Denote the set of agents by $\mathcal P$. From the network strongly connected component structure that was used earlier, $\mathcal P$ is partitioned into $\mu$ components, $\{\mathcal{P}_1,\dots,\mathcal{P}_\mu\}$, each forming a strongly connected component. Now $\mathcal P$ is also partitioned into $\nu$ clusters, denoted by $\{\mathcal G_1, \dots, \mathcal G_\nu\}$. Each agent belongs to both a component $\mathcal P_j$ and a cluster $\mathcal G_k$. Let $\mathcal S_{jk} =\mathcal{P}_j\cap \mathcal{G}_k,  j=1,\dots,\mu, k=1,\dots,\nu$. Then $\{S_{jk}: j=1,\dots,\mu, k=1,\dots,\nu\}$ is also a partition of $\mathcal P$. We intend to assign a common controller to all agents in each $\mathcal{S}_{jk}$. Recall that the set of partitions forms a partially ordered set under the finer/coarser partial order. This partially ordered set happens to be a lattice \cite{mac2023algebra}. In this language, the partition $\{\mathcal{S}_{jk}: j=1,\dots,\mu, k=1,\dots,\nu\}$ is exactly the meet of the partitions $\{\mathcal{P}_j: j=1,\dots,\mu\}$ and $\{\mathcal{G}_k: k=1,\dots,\nu\}$. We assume that each $\mathcal S_{jk}$ is equipped with a common controller. Such a controller architecture is called a component-cluster combined controller (4C) architecture. Define
$
\mathcal N_{ljk}=\{N_{li}: P_i\in\mathcal S_{jk}\}.
$
\begin{theorem}\label{thm: cluster}
    Problem (\ref{problem1}) is solvable under 4C architecture if 
    \[
    \mathrm{div}(\mathcal N_{ljk})+\phi_\ess(L_{jj})<\frac{\pi}{2},
    \]
    for $l=0,1,\dots,q,\pi$,  $j=1,\dots,\mu$, $k=1,\dots,\nu$.
\end{theorem}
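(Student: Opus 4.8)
The plan is to reduce Theorem~\ref{thm: cluster} to the machinery already developed for the component-wise design of Theorem~\ref{thm:component}, exploiting the fact that the 4C partition $\{\mathcal S_{jk}\}$ is a refinement of the component partition $\{\mathcal P_j\}$, since $\mathcal S_{jk}=\mathcal P_j\cap\mathcal G_k\subset\mathcal P_j$. First I would recall that, via the block lower-triangular (Frobenius) form \eqref{eq: laplacian} of $L$ and the reduction $\bm S=(I+\tilde{\bm P}\tilde L)^{-1}$, the synchronization problem decouples hierarchically across the components $\mathcal P_1,\dots,\mathcal P_\mu$: the steering component synchronizes autonomously, and each follower component $\mathcal P_j$ requires only internal stability governed by its own diagonal block $L_{jj}$ together with the persistent-mode residues of the agents it contains. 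This hierarchical decoupling is exactly the one established in the proof of Theorem~\ref{thm:component}, so it can be invoked directly; the only thing that changes under the 4C architecture is the within-component controller assignment, now determined per intersection $\mathcal S_{jk}$.

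The crux is therefore to verify the per-component, per-mode phase condition when the controllers are allowed to differ across clusters inside a single component. At each persistent mode $e^{j\omega_l}$ I would assign $C_{jk}(z)=\epsilon K_{jk}(z)$ with $K_{jk}(e^{j\omega_l})=e^{j\omega_l}K_{ljk}$, where $K_{ljk}$ simultaneously aligns the smaller set $\mathcal N_{ljk}$, so that $\Phi(N_{li}K_{ljk})\subset[-\alpha_{jk},\alpha_{jk}]$ for every $\bm P_i\in\mathcal S_{jk}$, with $\alpha_{jk}$ chosen slightly above $\mathrm{div}(\mathcal N_{ljk})$; finitely many indices let these margins be taken uniformly small. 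Restricted to $\mathcal P_j$ and evaluated at the mode, the relevant loop matrix takes the form $\epsilon e^{j\omega_l}\,\mathrm{diag}\{N_{li}K_{ljk(i)}\}(L_{jj}\otimes I_m)$, where $k(i)$ denotes the cluster of agent $i$. The key observation is that $\mathrm{diag}\{N_{li}K_{ljk(i)}\}$ remains block diagonal even though its aligning matrices now vary with the cluster, so it still commutes with the diagonal similarity $D_{j0}\otimes I_m$ that renders $L_{jj}$ semi-sectorial. Performing this similarity leaves the aligned-residue factor untouched and converts the Laplacian factor into $D_{j0}^{-1}L_{jj}D_{j0}\otimes I_m$, whose phases lie in $[-\phi_{\ess}(L_{jj}),\phi_{\ess}(L_{jj})]$.

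With this structure in place, applying the matrix-product phase bound of Lemma~\ref{lem:major} to the sectorial block-diagonal residue factor, whose phase interval is the union of the block intervals and hence contained in $[-\max_k\alpha_{jk},\max_k\alpha_{jk}]$, together with the semi-sectorial factor $D_{j0}^{-1}L_{jj}D_{j0}\otimes I_m$, shows that every nonzero eigenvalue phase of the product lies in $[-(\max_k\alpha_{jk}+\phi_{\ess}(L_{jj})),\,\max_k\alpha_{jk}+\phi_{\ess}(L_{jj})]$. The hypothesis $\mathrm{div}(\mathcal N_{ljk})+\phi_{\ess}(L_{jj})<\pi/2$ for all $k$ is exactly what guarantees $\max_k\alpha_{jk}+\phi_{\ess}(L_{jj})<\pi/2$, so these phases stay strictly inside $(-\pi/2,\pi/2)$, which is the condition needed to close the stability argument at the persistent modes. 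From here the remainder is identical to Theorem~\ref{thm:component}: the low-gain parameter $\epsilon\in(0,\epsilon^*)$ makes the contribution of the stable modes $\bm\Delta_i$ negligible through the nonlinear small gain theorem, the phase condition yields invertibility and stability of $I+\tilde{\bm P}\tilde L$ via the associated Lyapunov/interpolation construction, and the triangular structure of $L$ supplies the tracking of the follower components.

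I expect the main obstacle to be precisely the justification that within-component heterogeneity does not destroy the phase argument, that is, confirming that the cluster-dependent aligning matrices $K_{ljk}$ still yield a block-diagonal residue factor commuting with the diagonal symmetrizer of $L_{jj}$, so that the intrinsic quantity $\phi_{\ess}(L_{jj})$, which depends only on the graph component and not on the controllers, continues to govern the Laplacian side. Once this commuting block-diagonal structure is isolated, replacing the single component controller of Theorem~\ref{thm:component} by the family $\{C_{jk}\}$ and the single diversity $\mathrm{div}(\mathcal N_{lj})$ by $\max_k\mathrm{div}(\mathcal N_{ljk})$ is routine, and the proof closes by citing the appendix argument for the remaining stability and tracking estimates.
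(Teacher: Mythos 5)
Your proposal is correct and follows essentially the same route as the paper, which omits the proof of Theorem~\ref{thm: cluster} precisely because it parallels the appendix proof of Theorem~\ref{thm:component}: you correctly identify that the only new ingredient is replacing $N_{ll}(I\otimes K_{lj})$ by the block-diagonal factor $\mathrm{diag}\{N_{li}K_{ljk(i)}\}$, whose phase interval is still contained in $[-\max_k\alpha_{jk},\max_k\alpha_{jk}]$ and which still commutes with the scalar-block diagonal scaling, so Lemma~\ref{lem:major}, the Lyapunov construction, and the low-gain/small-gain and tracking arguments carry over unchanged. Your flagged ``main obstacle'' is exactly the right point to check, and your resolution of it is sound.
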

The proof follows a similar approach to Theorem~\ref{thm:component} and is therefore omitted for brevity. Denote the aligning matrix in $\mathcal{N}_{ljk}$ by $K_{ljk}.$ The component-cluster combined controllers are given by $C_{jk}(z)=\epsilon K_{jk}(z), j=1,\dots,\mu, k=1,\dots,\mu$, for all $\epsilon\in(0,\epsilon^*)$, where $K_{jk}(z)\in\mathcal{RH}_\infty^{m\times m}$ satisfies the interpolating conditions $K_{jk}(e^{j\omega_l})=e^{j\omega_l}K_{ljk}$ and $\epsilon^*>0$ can be estimated from given data. Theorem~\ref{thm: cluster}
provides the solvability condition for synchronization synthesis problem with component-cluster combined controllers. Verifying the conditions in Theorem~\ref{thm: cluster} amounts to solving an LMI feasibility problem. Once the LMIs are satisfied, the synchornizing controller for each component-cluster intersection can be designed analogously to the component-wise controller.

If all agents share only one common persistent pole at $1$, Theorem~\ref{thm: cluster} simplifies to the following corollary.
\begin{corollary}
  The multi-agent system is consensusable under 4C architecture if 
    \[
    \mathrm{div}(\mathcal N_{jk})+\phi_\ess(L_{jj})<\frac{\pi}{2}, 
    \]
    for $j=1,\dots,\mu$, $k=1,\dots,\nu$.
\end{corollary}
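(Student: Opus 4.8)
The plan is to obtain this corollary as a direct specialization of Theorem~\ref{thm: cluster}, since consensus is precisely the synchronization problem in which the agents share a single common persistent mode located at $z=1$. First I would record that in the consensus setting $\bar P_i(z)=N_{0i}/(z-1)$, so the common mode set $e^{j\Omega}$ collapses to $\{1\}$; equivalently $q=0$ and there is no pole at $-1$. Consequently the index $l$ in Theorem~\ref{thm: cluster}, which ranges over $0,1,\dots,q,\pi$, takes only the value $l=0$, and the family $\mathcal N_{0jk}$ is exactly the set written $\mathcal N_{jk}$ in the corollary's statement.

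With this reduction in hand, the hypothesis $\mathrm{div}(\mathcal N_{ljk})+\phi_\ess(L_{jj})<\pi/2$ for all admissible $l,j,k$ in Theorem~\ref{thm: cluster} becomes $\mathrm{div}(\mathcal N_{jk})+\phi_\ess(L_{jj})<\pi/2$ for $j=1,\dots,\mu$ and $k=1,\dots,\nu$, which is precisely the hypothesis of the corollary. Invoking Theorem~\ref{thm: cluster} then yields solvability of Problem~\ref{problem1} under the 4C architecture, i.e.\ consensusability, and the controllers $C_{jk}(z)=\epsilon K_{jk}(z)$ are constructed from the aligning matrices $K_{jk}$ of $\mathcal N_{jk}$ exactly as in the theorem, now interpolating only at the single node $z=1$.

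Since the argument is purely a matter of specializing the mode set, I expect no genuine obstacle here; the entire technical content lives in Theorem~\ref{thm: cluster}, whose proof mirrors that of Theorem~\ref{thm:component}. If instead a self-contained proof were desired rather than a citation, I would replay that argument with a single interpolation node: the loop transfer matrix on each intersection $\mathcal S_{jk}$ must embed the internal model $1/(z-1)$, and the low-gain controller $C_{jk}(z)=\epsilon K_{jk}(z)$ with $K_{jk}(1)=K_{jk}$ would render the relevant compression of the product $\bar{\bm P}\bm C(L\otimes I_m)$ sectorial with phase interval inside $(-\pi/2,\pi/2)$, after which a small-gain/Lyapunov stability argument closes the loop. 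The one point worth double-checking is that discarding all modes except $z=1$ does not disturb the block-lower-triangular component structure of $L$ exploited in Theorem~\ref{thm: cluster}; this is immediate, since that structure depends only on the graph and not on the agent dynamics.
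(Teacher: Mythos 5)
Your proposal is correct and follows exactly the paper's route: the paper treats this corollary as a direct specialization of Theorem~\ref{thm: cluster} to the case $\bar P_i(z)=N_{0i}/(z-1)$, where the persistent mode set collapses to $\{1\}$ so that only the index $l=0$ survives and $\mathcal N_{0jk}$ is written as $\mathcal N_{jk}$. Your additional observations (single interpolation node $z=1$ for $K_{jk}(z)$, and that the component structure of $L$ is unaffected by the mode set) are consistent with the paper and require no further justification.
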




	\section{Simulation}\label{sec:simulation}
In this section, we use a numerical example with five agents to illustrate our theoretical result in Theorem~\ref{thm:component}.
	
		Consider a group of agents $\bm P_1,\dots,\bm P_5$ with network shown in Fig. \ref{fig:directedgraph}.
		\begin{figure}[htbp]
			\begin{center}
				\includegraphics[width=3cm]{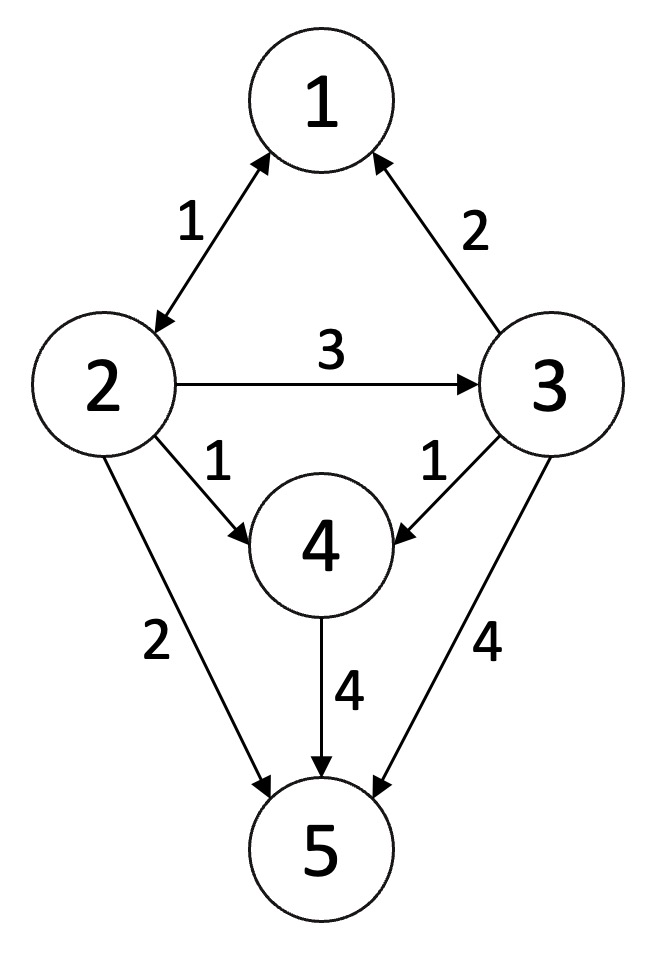}    
				\caption{\label{fig:directedgraph}A directed graph.}
			\end{center}
		\end{figure}
		Assume the agent dynamics are given by 
		\begin{align*}
			\bm P_i&=\bar {\bm P}_i+\bm\Delta_i, i=1,\dots,5,
		\end{align*}
		where the transfer functions of $\bar {\bm P}_i,i=1,\dots,5$ are given by
		\begin{align*}
			\bar P_1(z)=&\frac{\begin{bmatrix} 3.4 &2.8\\1.1 &-0.3\end{bmatrix}}{z-1}+\frac{\begin{bmatrix} -8z-23.5 &  -3.8z-7.1\\
					-21z-4.8  &-10z-8.5\end{bmatrix}}{z^2-\sqrt 2z+1},
		\end{align*}
		\begin{align*}
			\bar P_2(z)=&\frac{\begin{bmatrix}0.8 & 6.6\\2.2 & -3.6\end{bmatrix}}{z-1}+\frac{\begin{bmatrix} 4z-12.7 &  12z-13.6\\
					22z-8.9  & 8.2z-4.7\end{bmatrix}}{z^2-\sqrt 2z+1},
		\end{align*}
		\begin{align*}
			\bar P_3(z)=&\frac{\begin{bmatrix}3.4 & 3.8\\0.9 & 0.3\end{bmatrix}}{z-1}+\frac{\begin{bmatrix} -2.2z+10.1 &  18z-16.3\\40z-23.1  & -1.6z-11.1\end{bmatrix}}{z^2-\sqrt 2z+1},
		\end{align*}
		\begin{align*}
			\bar P_4(z)=&\frac{\begin{bmatrix}1.5 & 1.5\\0.1 & -4.3\end{bmatrix}}{z-1}+\frac{\begin{bmatrix} -36.2z + 3.6 &  -22.5 z + 11.8\\-6z - 8.9 & -0.3 z - 12.4\end{bmatrix}}{z^2-\sqrt 2z+1},
		\end{align*}
		\begin{align*}
			\bar P_5(z)=&\frac{\begin{bmatrix}1.5 & 1.8\\1 &-4.3\end{bmatrix}}{z-1}+\frac{\begin{bmatrix} 7.5z-18.7 &  8.7z-4.4\\50.2z-67.0  & 8.7z-29.2\end{bmatrix}}{z^2-\sqrt 2z+1},
		\end{align*}
	and each $\bm\Delta_i$ for $i=1,\dots,5$, represents a nonlinear stable dynamic term. These terms model a variety of nonlinear behaviors, including saturation, dead zone, limit cycle oscillations, nonlinear damping, and logistic map dynamics.
 
		The network topology is a directed graph which has a spanning tree. The corresponding Laplacian matrix is given by 
		\begin{align*}
			L= \begin{bmatrix}
				3 & -1& -2 & 0 & 0\\
				-1 & 1 & 0 & 0 & 0\\
				0 & -3 & 3 & 0 & 0\\
				0 & -1 & -1 & 2 & 0\\
				0 & -2 & -4 & -4 & 10
			\end{bmatrix}.
		\end{align*}
		Here 
		\[
		L_{11}=\begin{bmatrix}3 & -1& -2 \\
			-1 & 1 & 0 \\
			0 & -3 & 3 \end{bmatrix}, L_{22}=2, L_{33}=10.
		\] It follows that $\phi_{\mathrm{ess}}(L_{11})=0.3614$, $\phi_{\mathrm{ess}}(L_{22})=0$ and $\phi_{\mathrm{ess}}(L_{33})=0$. The agents can be divided into three strongly connected components accordingly. 
		
		Checking the LMIs condition yields
		\[
K_0=\begin{bmatrix}-1.8& 7.0\\
    5.3& -3.4\end{bmatrix},K_1=\begin{bmatrix} -2.6 - 5.5i & 1.1 + 3.5i\\
   7.1 + 1.4i & -0.1-11.0i\end{bmatrix}.
		\]
		Here, a uniform controller can be obtained, which is given by \[
\begin{bmatrix} \displaystyle \frac{-12.9z^2+21.1z-9.2}{z^2}  &\displaystyle \frac{13.7z^2-21.7z+12.2}{z^2}\\
  \displaystyle \frac{12.7z^2-12.3z+2.7}{z^2} & \displaystyle \frac{-22.2z^2+42.2 z-22.1z}{z^2}
  \end{bmatrix}.
  \] This controller can indeed enforce synchronization as confirmed in Fig~\ref{fig:designtrajectory1} and Fig~\ref{fig:designtrajectory2}.

	\begin{figure*}[htbp]
		\begin{center}
			\includegraphics[width=11cm]{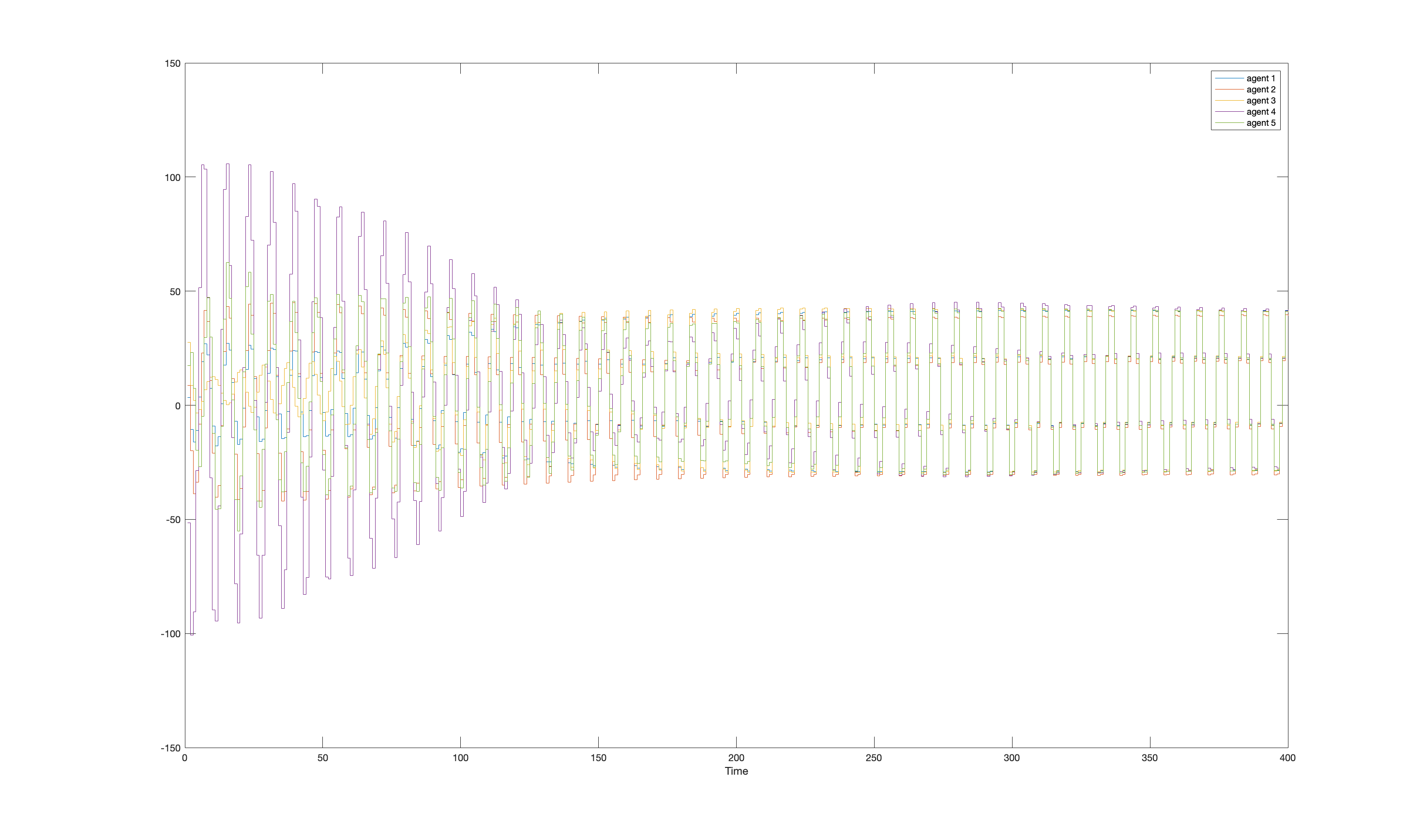}    
			\caption{\label{fig:designtrajectory1}Trajectories of first output.}
		\end{center}
		\begin{center}
			\includegraphics[width=11cm]{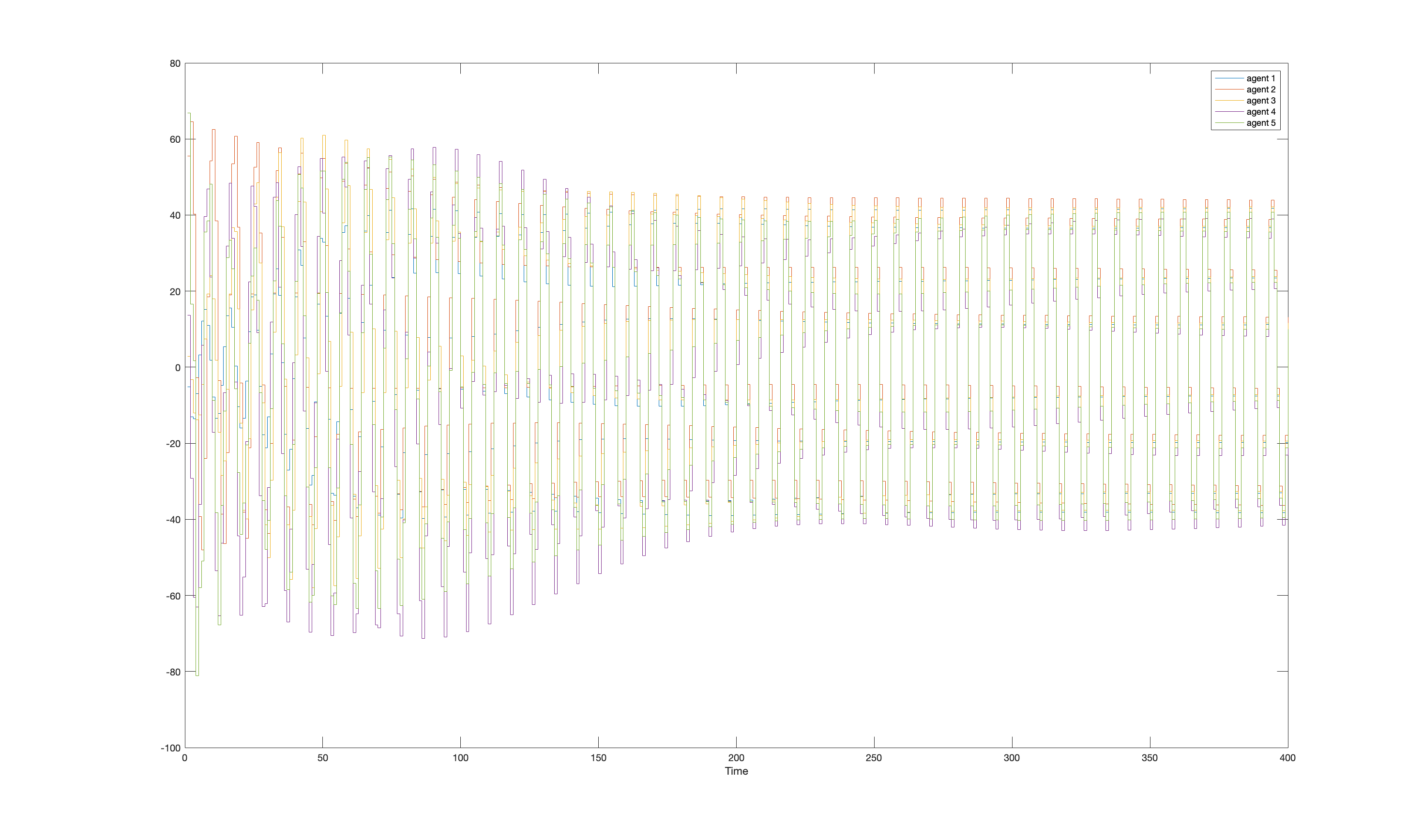}    
			\caption{\label{fig:designtrajectory2}Trajectories of second output.}
		\end{center}
	\end{figure*}

\section{Conclusion}\label{sec:conclusion}
In this paper, we investigated synchronizing controller architecture and design for a heterogeneous discrete-time nonlinear dynamical network from a novel phase-based perspective. We characterized the maximum diversity among agents that can be accommodated by the network topology and obtained solvability conditions based on this diversity. Additionally, we developed design algorithms for synchronizing controllers under component-wise and 4C architecture scenarios. Through these results, we aim to highlight the unique insights that matrix phase offers in uncovering synchronization properties that remain hidden in traditional approaches.

A technical challenge that we face is how tight the condition in Theorem~\ref{thm:component} is. More specifically, is the condition in any sense necessary? We believe that the answer is yes. The investigation towards answering this question precisely is intensively on going.

In future work, we aim to develop clustering algorithms with diversity requirements to improve network synchronizability and control effectiveness. Additionally, we plan to explore the graph design problem, focusing on how network topology can be structured or modified to ensure robust synchronization. This includes examining conditions under which specific network configurations promote stability and scalability, as well as understanding how structural changes impact overall synchronizability.

 \section*{References}
 \bibliographystyle{IEEEtran}
\bibliography{network}
	
\appendices
\section{Useful lemmas}
\begin{lemma}[\cite{meng2010distributed}]\label{lem: laplaciancomb} 
		For the partitioned Laplacian matrix in (9), it holds that $-L_{jj}^{-1}\begin{bmatrix}L_{j1} &\dots&L_{j(j-1)}\end{bmatrix}$ is nonnegative, and each of its rows has unity sum for $j=2,\dots,\mu$.
	\end{lemma}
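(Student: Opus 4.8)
The plan is to exploit two structural facts about the partitioned Laplacian in (\ref{eq: laplacian}): that every row of $L$ sums to zero, and that each diagonal block $L_{jj}$, $j=2,\dots,\mu$, is a nonsingular M-matrix (as already recorded in the excerpt), so that its inverse is entrywise nonnegative. The unity row-sum claim comes purely from the zero row-sum property, while the nonnegativity claim comes from the M-matrix structure together with the sign pattern of the off-diagonal blocks.

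First I would address the unity row-sum assertion. Because $L$ is block lower triangular, all blocks strictly to the right of $L_{jj}$ vanish, so the vanishing of the $j$-th block row of $L$ applied to the all-ones vector reads
\[
[L_{j1}\ \cdots\ L_{j(j-1)}]\mathbf{1} + L_{jj}\mathbf{1} = 0,
\]
with the all-ones vectors taken of the appropriate dimensions. Since $L_{jj}$ is nonsingular, multiplying on the left by $-L_{jj}^{-1}$ yields
\[
-L_{jj}^{-1}[L_{j1}\ \cdots\ L_{j(j-1)}]\mathbf{1} = \mathbf{1},
\]
which is precisely the statement that every row of $-L_{jj}^{-1}[L_{j1}\ \cdots\ L_{j(j-1)}]$ has unity sum.

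Next I would handle nonnegativity. The off-diagonal blocks $L_{jk}$ with $k<j$ consist only of off-diagonal entries of $L$, namely $-a_{pq}\le 0$, so $-[L_{j1}\ \cdots\ L_{j(j-1)}]\ge 0$ entrywise. Invoking the nonnegative-inverse property of nonsingular M-matrices gives $L_{jj}^{-1}\ge 0$. Therefore
\[
-L_{jj}^{-1}[L_{j1}\ \cdots\ L_{j(j-1)}] = L_{jj}^{-1}\bigl(-[L_{j1}\ \cdots\ L_{j(j-1)}]\bigr)
\]
is a product of two entrywise nonnegative matrices and is hence nonnegative, completing the argument.

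This proof is structural and involves no delicate estimates, so I do not expect a genuine obstacle. The one step meriting care is the appeal to $L_{jj}^{-1}\ge 0$: this rests on $L_{jj}$ being a \emph{nonsingular} M-matrix, which in turn follows from its irreducibility combined with weak diagonal dominance that is strict in at least one row (the ``positive row sum'' hypothesis). I would therefore state this M-matrix characterization cleanly before using the nonnegative-inverse property, rather than taking it for granted.
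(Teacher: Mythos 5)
Your proof is correct, and since the paper imports this lemma by citation from \cite{meng2010distributed} without reproving it, your argument is exactly the standard one underlying the cited result: the unity row sums follow from the zero row sums of $L$ together with the block lower triangular structure, and nonnegativity follows from $L_{jj}^{-1}\geq 0$ (as $L_{jj}$ is an irreducible, weakly diagonally dominant Z-matrix with at least one strictly dominant row, hence a nonsingular M-matrix) multiplied against the entrywise nonpositive off-diagonal blocks. Your explicit flagging of the nonsingular M-matrix characterization is the right point to be careful about, and nothing further is needed.
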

\begin{lemma}[\cite{rantzer1996kalman}]\label{lem:brl}
		Let $G(z)={\mbox{$\left[
				\begin{array}{c|c}
					A & B \\ \hline \\[-4mm]
					C & D
				\end{array}
				\right]$}}$. Assume that $A$ is stable. Then $\|G\|_\infty<\gamma$ if and only if there exists a matrix $X>0$ such that 
		\[
		\begin{bmatrix}
			A^*XA-X+C^*C & A^*XB+C^*D\\
			B^*XA+D^*C & B^*XB-\gamma^2 I+D^*D
		\end{bmatrix}<0.\]
	\end{lemma}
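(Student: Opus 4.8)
The plan is to prove the two implications of the equivalence separately, reading the block matrix as a dissipation (storage-function) certificate in the sufficiency direction and as the output of a frequency-domain-to-LMI conversion in the necessity direction. Throughout I abbreviate the inequality in the statement as $M+\mathcal{L}(X)<0$, where
\begin{align*}
M&=\begin{bmatrix}C^*C & C^*D\\ D^*C & D^*D-\gamma^2 I\end{bmatrix},\\
\mathcal{L}(X)&=\begin{bmatrix}A^*XA-X & A^*XB\\ B^*XA & B^*XB\end{bmatrix}.
\end{align*}
The bridge between the two directions is the algebraic identity that, along the realization $x(t+1)=Ax(t)+Bu(t)$, $y(t)=Cx(t)+Du(t)$ with $V(x)=x^*Xx$, one has $\begin{bmatrix}x\\u\end{bmatrix}^*(M+\mathcal{L}(X))\begin{bmatrix}x\\u\end{bmatrix}=V(x(t+1))-V(x(t))+\|y(t)\|_2^2-\gamma^2\|u(t)\|_2^2$, since the $\mathcal{L}(X)$ term evaluates to $V(x(t+1))-V(x(t))$ and the $M$ term evaluates to $\|y(t)\|_2^2-\gamma^2\|u(t)\|_2^2$.

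For sufficiency, suppose $X>0$ satisfies the strict LMI, so that $M+\mathcal{L}(X)\leq-\epsilon I$ for $\epsilon=-\lambda_{\max}(M+\mathcal{L}(X))>0$. Drive the system from $x(0)=0$ with an arbitrary $u\in\ell_2(\mathbb{Z}_+)$. The identity above then gives $\|y(t)\|_2^2-\gamma^2\|u(t)\|_2^2\leq-(V(x(t+1))-V(x(t)))-\epsilon(\|x(t)\|_2^2+\|u(t)\|_2^2)$. Summing over $t\geq 0$ and telescoping the $V$-terms, the boundary contributions vanish because $V(x(0))=0$ and, since $A$ is Schur and $u\in\ell_2$, the state is in $\ell_2$ so that $V(x(t))\to 0$. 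This yields $\|y\|_2^2\leq(\gamma^2-\epsilon)\|u\|_2^2$, and taking the supremum over $u$ in the definition of the $\mathcal{H}_\infty$ norm gives $\|G\|_\infty\leq\sqrt{\gamma^2-\epsilon}<\gamma$.

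For necessity, suppose $\|G\|_\infty<\gamma$. By Parseval the norm equals $\sup_\omega\sigma_{\max}(G(e^{j\omega}))$, so the bound is the frequency-domain inequality $\gamma^2 I-G(e^{j\omega})^*G(e^{j\omega})>0$ for all $\omega$. Substituting $G(e^{j\omega})=C(e^{j\omega}I-A)^{-1}B+D$ and factoring through the generating matrix $\begin{bmatrix}(e^{j\omega}I-A)^{-1}B\\ I\end{bmatrix}$ (well defined since $A$ Schur places no eigenvalue on the unit circle), a direct expansion shows this is exactly $\begin{bmatrix}(e^{j\omega}I-A)^{-1}B\\ I\end{bmatrix}^*M\begin{bmatrix}(e^{j\omega}I-A)^{-1}B\\ I\end{bmatrix}<0$ for all $\omega$. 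The discrete-time Kalman--Yakubovich--Popov lemma of \cite{rantzer1996kalman} converts this strict frequency-domain inequality into the existence of a Hermitian $X$ with $M+\mathcal{L}(X)<0$, which is the claimed LMI; should one wish to sidestep the controllability hypothesis in the strict KYP statement, the same $X$ arises as the stabilizing solution of the discrete algebraic Riccati equation attached to $M$, whose existence is guaranteed precisely by the strict inequality above.

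It then remains to upgrade this Hermitian $X$ to a positive definite one. The $(1,1)$ block of $M+\mathcal{L}(X)<0$ reads $A^*XA-X+C^*C<0$, i.e. $W:=X-A^*XA>C^*C\geq 0$, so $W>0$. Because $A$ is Schur, the Stein equation $X-A^*XA=W$ has the unique solution $X=\sum_{k=0}^{\infty}(A^*)^kWA^k$; its $k=0$ term $W$ is positive definite and every remaining term is positive semidefinite, whence $X\geq W>0$. Thus positivity is automatic once the LMI holds, completing the equivalence. The main obstacle is the necessity direction: certifying a frequency-domain bound by a single constant matrix $X$ is exactly the content of the KYP lemma, whose own proof rests on a state-space spectral-factorization/Riccati construction or on a convex-duality argument over achievable quadratic forms; by contrast the dissipation identity, the telescoping sum, and the Stein-series positivity step are routine.
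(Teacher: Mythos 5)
The paper does not prove this lemma at all: it is stated in the appendix as a ``useful lemma'' with a bare citation to the KYP literature, so there is no in-paper argument to match yours against. Your proof is correct and is the standard dissipativity-based derivation of the discrete-time bounded real lemma. The quadratic identity relating $M+\mathcal{L}(X)$ to $V(x(t+1))-V(x(t))+\|y(t)\|_2^2-\gamma^2\|u(t)\|_2^2$ checks out, the telescoping argument for sufficiency is sound (in fact, since $X>0$ you only need $-V(x(T+1))\leq 0$, not state convergence, to close the sum), and the factorization of $\gamma^2 I-G(e^{j\omega})^*G(e^{j\omega})$ through $\bigl[\begin{smallmatrix}(e^{j\omega}I-A)^{-1}B\\ I\end{smallmatrix}\bigr]$ is exactly right. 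The one place where you lean on an external result is the necessity direction, where the strict KYP lemma converts the frequency-domain inequality into a Hermitian $X$ satisfying the LMI; this is essentially the content of the cited reference, so your proof is best read as reducing the bounded real lemma to KYP plus two genuinely supplied pieces: the dissipation argument for sufficiency and the upgrade from Hermitian to positive definite $X$ via the $(1,1)$ block and the Stein series $X=\sum_{k\geq 0}(A^*)^kWA^k$ with $W=X-A^*XA>C^*C\geq 0$. That positivity step is a real addition, since generic KYP statements do not deliver $X>0$, and it is handled correctly using uniqueness of the Stein solution under the Schur assumption on $A$. Your parenthetical about sidestepping controllability is also apt: the strict-inequality form of KYP (as in the cited work) needs no controllability hypothesis, so the Riccati fallback is optional. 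In short, nothing is missing relative to the paper, which proves nothing here; your argument supplies a clean proof modulo the same citation the paper relies on.
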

 
\section{Proof of Theorem~\ref{thm:component}}
\begin{proof}
If the condition
\[
\mathrm{div}(\mathcal{N}_{lj})+\phi_{\mathrm{ess}}(L_{jj})<\frac{\pi}{2}
\]
holds for $l=0,1,\dots,q,\pi, j=1,\dots,\mu$, then aligning matrices $K_{lj}$ can be obtained so that 
\begin{align}\label{eq:phasebound}
\Phi(N_{li}K_{lj}) \subset \left[-\frac{\pi}{2}+\phi_{\mathrm{ess}}(L_{jj}),\frac{\pi}{2}-\phi_{\mathrm{ess}}(L_{jj})\right]
\end{align}
for all $i$ with $\bm P_i\in\mathcal P_j$.
Moreover, a $K_j(z)\in\mathcal{RH}_\infty^{m\times m}$ satisfying the interpolating conditions 
\begin{align*}
K_j(e^{j0})&=K_{0j},\\
K_j(e^{j\pi})&=-K_{\pi j},\\
K_j(e^{j\omega_l})&=e^{j\omega_l}K_{lj},\\
K_j(e^{-j\omega_l})&=e^{-j\omega_l}\bar{K}_{lj}, l=1,\dots,q
\end{align*}
can be constructed by exploiting the Lagrange polynomial.
We will show that there exists an $\epsilon^*>0$ such that the multi-agent system with component-wise controllers, given by $C_j(z)=\epsilon K_j(z),\epsilon\in(0,\epsilon^*)$, can achieve synchronization. 

Let $r_j$ be the number of agents in the $j$-th components. Without loss of generality, assume the agents are ordered and grouped in the same manner as in $L_{jj}$, denoted by $\bm P_{jj}$. The corresponding signals and subsystems are denoted by $u_{jj}, y_{jj}, v_{jj}, w_{jj}, e_{jj}, \bm\Delta_{jj}, \bar{\bm P}_{jj}$ for $j=1,\dots,\mu$.
	The controllers can be grouped accordingly as follows:
	\begin{align*}
    \bm C_{jj}=\epsilon I_{r_j}\otimes \bm K_j, j=1,\dots,\mu.
	\end{align*}
        Together with \eqref{haty} and \eqref{eq: laplacian}, we have the following dynamics of each component:
		\begin{align}
			y_{11}=&(I+\epsilon\bm P_{11}(L_{11}\otimes \bm K_1))^{-1}\bm P_{11}w_{11}, \label{eq: y11}\\
			y_{jj}=&(I+\epsilon\bm P_{jj}(L_{jj}\otimes \bm K_{j}))^{-1}\bm P_{jj}   \nonumber\\ 
			&\times \!\left(w_{jj} -\epsilon(I_{r_j}\otimes \bm K_j)\sum_{i=1}^{j-1}(L_{ji}\otimes I_m)y_{ii}\right), \label{eq: yi*}
            \end{align}
		for $j=2,\dots,\mu.$ 
        
        We will first prove the synchronization of the component consisting of $\bm P_{11}$. Note that $L_{11}$ is a Laplacian matrix of a strongly connected subgraph and $y_{11}$ is independent of the other components.        
        Let $Q_1$ be an isometry matrix whose columns form a basis of $\mathrm{span}\{\mathbf{1}_{r_1}\}^\perp$ and
		\[\bm S_{11}\!=\left(I_{r_1m-m}+\epsilon(Q'_1\otimes I_m)\bm P_{11}(L_{11}Q_1\otimes \bm K_1)\right)^{-1}.\]
        As discussed in Section \ref{sec:synformulation}, the first component achieves synchronization if $\bm S_{11}$ is stable.		
		Let $v$ be a positive left eigenvector of $L_{11}$ corresponding to $0$ eigenvalue and $D_1=\mathrm{diag}\{v\}$. 
        Let $$\tilde{L}_{11}=Q_1'D_1L_{11}Q_1.$$
		Since $Q_1'Q_1\!=\!I_{r_1-1}$, $Q_1Q_1'\!=\!I_{r_1}-\frac{1}{r_1}\mathbf{1}_{r_1} \mathbf{1}_{r_1}'$, $\mathbf{1}_{r_1}'D_1L_{11}\!=\!0$, it holds that $L_{11}Q_1=D_1^{-1}Q_1Q_1'D_1L_{11}Q_1$. Thus, 
        {\small
        \begin{multline*}
        \bm S_{11}=\left( I_{(r_1-1)m}+\epsilon (Q_1'\otimes I_m)\bm P_{11}(D_1^{-1}Q_1\tilde{ L}_{11}\otimes \bm K_1)\right)^{-1},   
        \end{multline*}}
        which is the sensitivity function of the feedback system shown in Fig. \ref{fig: block diagram of S11}.          
	In order to separate the linear and nonlinear terms in $\bm P_{11}$, we apply a loop transformation shown in Fig.~\ref{fig: looptranformation}.
    Let $$\tilde{\bm \Delta}_{11}=(Q_1'\otimes I_m)\bm\Delta_{11}.$$ 
    Then $\bm S_{11}$ is stable if and only if the feedback connection of nonlinear system $\tilde{\bm \Delta}_{11}$ and LTI system $\bm M_{11}$ is stable, where
		\begin{multline*}
            \bm M_{11} =\epsilon(D_1^{-1}Q_1\tilde{L}_{11}\otimes \bm K_1) \\ \times \left(I_{(r_1-1)m} +\epsilon (Q_1'\otimes I_m) \bar{\bm P}_{11}(D_1^{-1}Q_1\tilde{L}_{11} \otimes \bm K_1)\right)^{-1}. 
		\end{multline*}
        
	\begin{figure}[htb]
			\begin{center}
            \includegraphics[width=6.5cm]{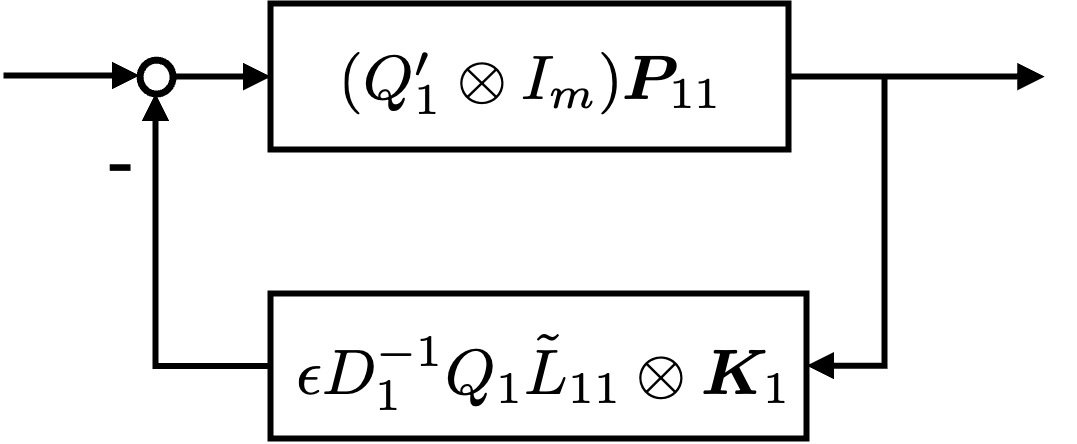}    
            \vspace{-5pt}
            \caption{\label{fig: block diagram of S11}Feedback system after diagonal scaling.}
			\end{center}
		\end{figure}	
		\begin{figure}[htb]
			\begin{center}
				\includegraphics[width=7cm]{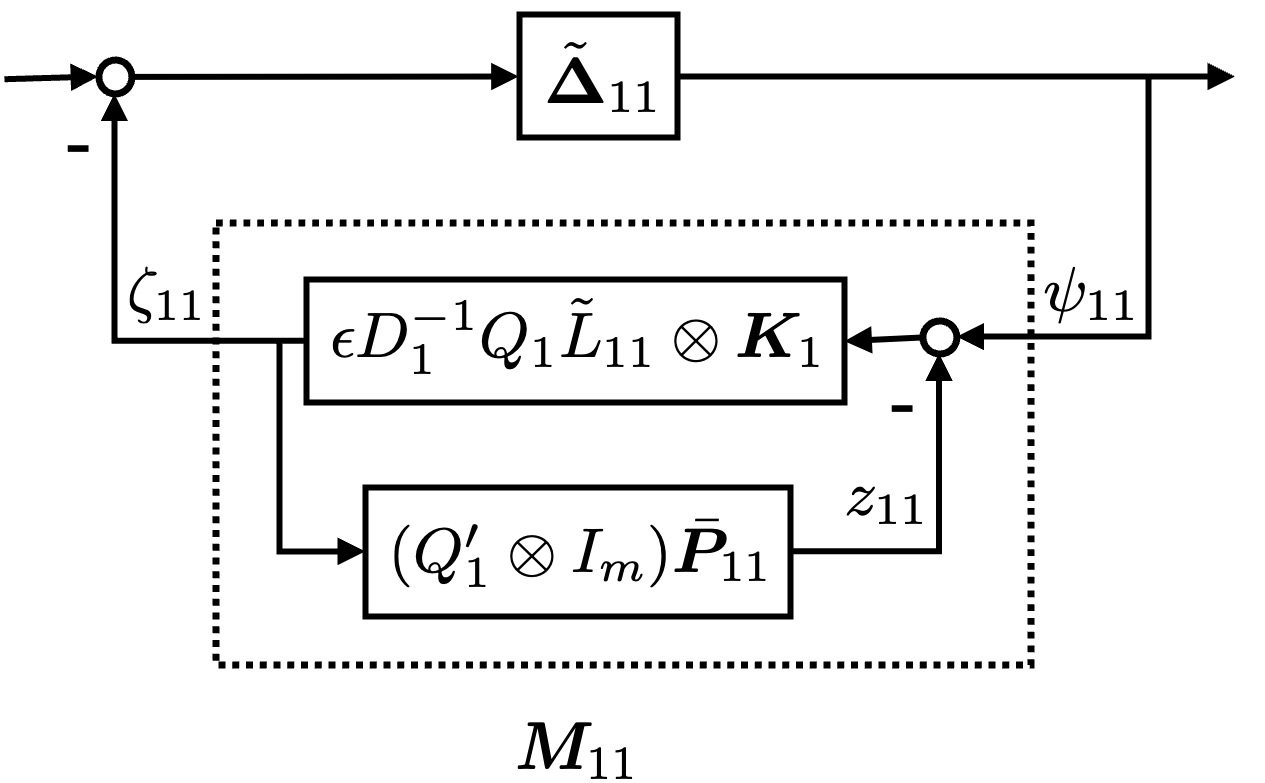} 
                \caption{\label{fig: looptranformation}Loop transformation.}
			\end{center}
		\end{figure}	
        
Next we will show that $\bm M_{11}$ is stable and bounded using state-space method. To this end, we first derive the state-space representation of $\bm M_{11}$. For the sake of notation brevity, we assume that $q=1$, i.e., $e^{j\Omega}=\{1, -1, e^{\omega_1}, e^{-\omega_1}\}$. The proof can be easily extended to the case when $q>1$. 
 Let 
  \begin{align*}
  N_{00}&=\mathrm{diag}\{N_{01},\dots,N_{0r_1}\}, \\
   N_{\pi\pi}&=\mathrm{diag}\{N_{\pi 1},\dots,N_{\pi r_1}\},\\
  N_{11}&=\mathrm{diag}\{N_{11},\dots,N_{1r_1}\}.
  \end{align*}
A minimal realization of $(Q'_1\otimes I_m)\bar{\bm P}_{11}$ is given by $\left[\begin{array}{c|c}
		A_{11}&B_{11}\\\hline
		C_{11}&0
	\end{array}\right]$,  
		where
		\begin{align*}
			A_{11}&=\text{diag}\{1,-1,e^{j\omega_1},e^{-j\omega_1} \}\otimes I_{(r_1-1)m},\\
			B_{11}&=
			\begin{bmatrix}
				(Q'_1\otimes I_m)N_{00} \\
                (Q'_1\otimes I_m)N_{\pi\pi}\\(Q'_1\otimes I_m)N_{11}\\(Q'_1\otimes I_m)
		\bar{N}_{11}	\end{bmatrix}, \\
			C_{11}&=\begin{bmatrix}
				1 & 1 & 1& 1 
			\end{bmatrix} \otimes I_{(r_1-1)m}.
		\end{align*}  
Let $\left[\begin{array}{c|c}
		E_1&F_1\\\hline
		G_1&H_1
	\end{array}\right]$ be a minimal realization of $\bm K_1$, where $E_1\in\mathbb R^{p\times p}$ is stable. Then a minimal realization of $\epsilon D_1^{-1}Q_1\tilde L_{11}\otimes\bm K_1$ is given by $$\left[\begin{array}{c|c}
		I_{r_1-1}\otimes E_1&\epsilon \tilde L_{11}\otimes F_1\\\hline \\[-3mm]D_1^{-1}Q_1\otimes G_1& \epsilon D_1^{-1}Q_1\tilde L_{11}\otimes H_1
	\end{array}\right].$$
        Thus, a minimal realization of $\epsilon(Q'_1\otimes I_m)\bar{\bm P}_{11}(D_1^{-1}Q_1\tilde L_{11}\otimes \bm K_1)$ is given by
        {\small
        {\begin{equation} \label{eq:stateM}
\left[
\begin{array}{c|c}
	\begin{matrix}
		A_{11} & B_{11}(D_1^{-1}Q_1\!\otimes\! G_1)\!\\ 
		0 & I_{r_1-1}\otimes E_1
	\end{matrix} &
	\!\begin{matrix}
		\epsilon B_{11}(D_1^{-1}Q_1\tilde L_{11}\!\otimes\! H_1)\!\\
		\epsilon \tilde L_{11}\otimes F_1
	\end{matrix} 
	\\ \hline \\[-2mm]
	\begin{matrix}
	\hspace{-30pt}	C_{11} \quad\quad\quad & 0
	\end{matrix} &  0
\end{array}
\right].
\end{equation}}}
	
    To transform the state matrix in \eqref{eq:stateM} into a block diagonal matrix, we apply a similarity transformation using the matrix
		\begin{align*}
	\begin{bmatrix}
				I_{4(r_1-1)m} & T     \\
				0 &I_{(r_1-1)p}
			\end{bmatrix},
		\end{align*} where $T$ satisfies the Sylvester equation $$A_{11}T-T(I_{r_1-1}\otimes E_1)=-B_{11}(D_1^{-1}Q_1\otimes G_1).$$ An explicit solution for $T$ is given by
        $$T\!=\!\begin{bmatrix}
				-(Q'_1\otimes I_m)N_{00}(D_1^{-1}Q_1\otimes (G_1(I_p-E_1)^{-1})) \\
                -(Q'_1\otimes I_m)N_{\pi\pi}(D_1^{-1}Q_1\otimes (G_1(-I_p-E_1)^{-1})) \\-(Q'_1\otimes I_m)N_{11}( D_1^{-1}Q_1\otimes (G_1(e^{j\omega_1}I_p-E_1)^{-1}))\\-(Q'_1\otimes I_m)\bar N_{11}(D_1^{-1}Q_1\otimes\! (G_1(e^{-j\omega_1}I_p-E_1)^{-1}))	\end{bmatrix}.$$
From the interpolation condition in constructing $K_1(z)$, we obtain
\begin{align*}
 K_{01}&=H_1+G_1(I_p-E_1)^{-1}F_1,\\
        K_{\pi 1}&=-(H_1+G_1(-I_p-E_1)^{-1}F_1),\\
        K_{11}&=e^{-j\omega_1}(H_1+G_1(e^{j\omega_1}I-E_1)^{-1}F_1).
\end{align*}Thus, after applying a similarity transformation, we obtain an alternative minimal realization of \eqref{eq:stateM}:
      {
		$${\mbox{$\left[
				\begin{array}{c|c}
					\begin{matrix}
						A_{11}     &0\\
						0&I_{r_1-1}\otimes E_1
					\end{matrix}&
					\!\begin{matrix}
						\epsilon \tilde B_{11}\\
			       \epsilon \tilde L_{11}\otimes F_1
					\end{matrix} \!\\ \hline \\[-2mm]
					\begin{matrix}
					C_{11} &\quad C_{11}T
					\end{matrix}&  0
				\end{array}
				\right]$}},$$ } 
                where 
                \begin{align*}\tilde B_{11} =& B_{11}(D_1^{-1}Q_1\tilde L_{11}\otimes H_1)- T(\tilde L_{11}\otimes F_1)\\
                =&\begin{bmatrix}F_{00}\\F_{\pi\pi
                }\\F_{11}\\\bar F_{11}\end{bmatrix}(\tilde L_{11}\otimes I_m),
                \end{align*}
                with
                \begin{align*}
			F_{00}&=(Q_1'\otimes I_m)N_{00}(D_1^{-1}Q_1\otimes K_{01}),\\
            F_{\pi\pi}&=-(Q_1'\otimes I_m)N_{\pi\pi}(D_1^{-1}Q_1\otimes K_{\pi1}),\\
			F_{11}&=(Q_1'\otimes I_m)N_{11}(D_1^{-1}Q_1\otimes e^{j\omega_1}K_{11}).
		\end{align*}
       Denote the output of $(Q_1'\otimes I_m)\bar{\bm P}_{11}$ by $z_{11}$, and the output signal of $\tilde{\bm\Delta}_{11}$ by $\psi_{11}$. Since the input to the block  $\epsilon D_1^{-1}Q_1\tilde L_{11}\otimes\bm K_1$ is equal to $\psi_{11}-z_{11}$, it follows that a minimal realization of $\bm M_{11}$ is given by
\begin{align*}
		\eta(t+1)&=\tilde{A}\eta(t)+\tilde B\psi_{11}(t),\\
			\zeta_{11}(t)&=\tilde C\eta(t)+\tilde D\psi_{11}(t)\nonumber
		\end{align*}
		where $\eta(t)$ and $\zeta_{11}(t)$ are the state and output vector of $\bm M_{11}$ respectively, and
		\begin{align*}
			\tilde{A}=&\!
			\begin{bmatrix}
		A_{11}&0\\
				0&\!\!I_{r_1-1}\otimes E_1
		\end{bmatrix}-\!\epsilon\!		\begin{bmatrix}
				\tilde B_{11}\\
				\tilde{L}_{11}\otimes F_1
			\end{bmatrix}\begin{bmatrix}
				C_{11}& C_{11}T
		\end{bmatrix},\\
\tilde{B}=&\ \epsilon\begin{bmatrix}
				\tilde B_{11}\\
				\tilde{L}_{11}\otimes F_1
			\end{bmatrix},\\
            \tilde{C}=&
			\begin{bmatrix}
				-\epsilon X_cC_{11} &D_1^{-1}Q_1\otimes G_1\!-\!\epsilon X_cC_{11}T
			\end{bmatrix}\!,\\
\tilde{D}=&\ \epsilon D_1^{-1}Q_1\tilde L_{11}\otimes H_1,
	\end{align*}
with 
$X_c=D_1^{-1}Q_1\tilde L_{11}\otimes H_1$.
We will show the stability of $\bm M_{11}$ by finding a Lyapunov function.
 According to \eqref{eq:phasebound}
		and Lemma~\ref{lem:major}, it holds that
		\begin{align*}
			\angle\lambda(F_{00}(\tilde{L}_{11}\otimes I_m))\subset(-\pi/2,\pi/2), \\
            \angle\lambda(-F_{\pi\pi}(\tilde{L}_{11}\otimes I_m))\subset(-\pi/2,\pi/2), \\
			\angle\lambda(e^{-j\omega_1}F_{11}(\tilde{L}_{11}\otimes I_m))\subset(-\pi/2,\pi/2).
		\end{align*}
		Hence there exists $X_0, X_1, X_\pi>0$ such that
		\begin{align*}
			(F_{00}(\tilde{L}_{11}\otimes I_m))^*X_0+X_0(F_{00}(\tilde{L}_{11}\otimes I_m))=I,\\
            -(F_{\pi\pi}(\tilde{L}_{11}\otimes I_m))^*X_\pi-X_\pi(F_{\pi\pi}(\tilde{L}_{11}\otimes I_m))=I,\\
			(e^{-j\omega_1}F_{11}(\tilde{L}_{11}\otimes I_m))^*X_1+X_1(e^{-j\omega_1}F_{11}(\tilde{L}_{11}\otimes I_m))=I.
		\end{align*}
		It follows that
		$$(e^{j\omega_1}\bar{F}_{11}(\tilde{L}_{11}\otimes I_m))^*\bar X_1+\bar X_1(e^{j\omega_1}\bar{F}_{11}(\tilde{L}_{11}\otimes I_m))=I.$$
		Since $I_{r_1-1}\otimes E_1$ is stable, there exists $X_2>0$ such that $$(I_{r_1-1}\otimes E_1)^*X_2(I_{r_1-1}\otimes E_1)-X_2=-Z_1,$$ where $Z_1$ is a positive definite matrix.
  Let
  \begin{align}\label{eq:X}
		X=\begin{bmatrix}
			X_0 &0& \epsilon Y_0& \epsilon Y_1& 0\\
            0 & X_\pi& \epsilon Y_3 & \epsilon Y_4&0\\
			\epsilon Y_0^*& \epsilon Y_3^* & X_1 &\epsilon Y_2& 0\\
			\epsilon Y_1^*& \epsilon Y_4^* &\epsilon Y_2^*&\bar X_1 &0\\
			0& 0 & 0 & 0 &X_2
		\end{bmatrix},
		\end{align}
		with
		\begin{align*}
			Y_0&=\frac{X_0F_{00}(\tilde L_{11}\otimes I_m)+e^{j\omega_1}(F_{11}(\tilde L_{11}\otimes I_m))^*X_1}{e^{j\omega_1}-1},\\
			Y_1&=\frac{X_0F_{00}(\tilde L_{11}\otimes I_m)+e^{-j\omega_1}(\bar{F}_{11}(\tilde L_{11}\otimes I_m))^*\bar X_1}{e^{-j\omega_1}-1},\\
			Y_2&=\frac{X_1F_{11}(\tilde L_{11}\otimes I_m)+(\bar{F}_{11}(\tilde L_{11}\otimes I_m))^*\bar X_1}{e^{-j\omega_1}-e^{j\omega_1}},\\
            Y_3&=\frac{-X_\pi F_{\pi\pi}(\tilde L_{11}\otimes I_m)+e^{j\omega_1}(F_{11}(\tilde L_{11}\otimes I_m))^*X_1}{-e^{j\omega_1}-1},\\
			Y_4&=\frac{-X_\pi F_{\pi\pi}(\tilde L_{11}\otimes I_m)+e^{-j\omega_1}(\bar{F}_{11}(\tilde L_{11}\otimes I_m))^*\bar X_1}{-e^{-j\omega_1}-1}.
		\end{align*}
		Then we have
        {
		\begin{align*}
			\tilde{A}^*X\tilde{A}\!-\!X=\!-\epsilon\left(\begin{bmatrix}
				I_{4(r_1-1)m} & S_1\\
				\!\!S_1^* & \!\!\displaystyle\frac{1}{\epsilon}Z_1+R_1
			\end{bmatrix}\!-\!\epsilon T_1\!-\!\epsilon^2T_2\right),		
		\end{align*}}
		where
		$R_1,S_1,T_1,T_2$ are matrices independent of $\epsilon$. Thus there exists $\epsilon_1\!>\!0$ such that for $\epsilon\!\in\!(0,\epsilon_1)$, it holds that
		\[
		\begin{bmatrix}
			I_{4(r_1-1)m} & S_1\\
			S_1^* & \displaystyle\frac{1}{\epsilon}Z_1+R_1
		\end{bmatrix}>0.
		\]
		Consequently, $\tilde{A}^*X\tilde{A}-X<0$, implying that $\bm M_{11}$ is stable.
		
Next we will prove $\bm M_{11}$ is bounded by showing that $X$, defined in \eqref{eq:X}, satisfies the LMI in the bounded real lemma, i.e., Lemma \ref{lem:brl}.
  Let $\gamma=\|\tilde{\bm \Delta}_{11}\|^{-1}_\infty.$  Suppose $Z_1$ satisfies $$
           Z_1-(I_{r_1-1}\otimes G_1)^*(I_{r_1-1}\otimes G_1)=Z_2>0.$$
 By computation, we obtain
		{
        \begin{align*}
			&\begin{bmatrix}
				\tilde A^*X\tilde A-X+\tilde C^*\tilde C & \tilde A^*X\tilde B+\tilde C^*\tilde D\\
				\tilde B^*X\tilde A+\tilde D^*\tilde C & \tilde B^*X\tilde B-\gamma^2 I+\tilde D^*\tilde D
			\end{bmatrix}=\\
		&-\epsilon\begin{bmatrix}
				\!\begin{bmatrix}
					I & S_2\\
					S_2^* & \!\displaystyle\frac{1}{\epsilon}Z_2\!+\!R_2
		\end{bmatrix}\!\!-\!\epsilon T_3\!-\!\epsilon^2T_4\! 	& \!\!S_3+\epsilon S_4+\epsilon^2S_5\\
			S_3^*+\epsilon S_4^*+\epsilon^2S_5^* & \!\!\displaystyle \frac{\gamma^2}{\epsilon}I\!+\!\epsilon(R_3\!+\!\epsilon^2R_4)
	\end{bmatrix}\!\!,
		\end{align*}}
  where $S_2, S_3, S_4, S_5, R_2, R_3, R_4, T_3, T_4$ are matrices independent of $\epsilon$. For $\epsilon\in(0,\epsilon_1)$, the top-left block of the matrix is positive definite. Moreover, for a fixed gain $\gamma$, there exists $\epsilon_2>0$ such that for $\epsilon\in(0,\epsilon_2)$, the bottom-right block is also positive definite. By analyzing the Schur complement, we conclude that there exists $\epsilon^*\in\text{min}\{\epsilon_1,\epsilon_2\}$ such that the entire block matrix remains positive definite for $\epsilon\in(0,\epsilon^*)$. Consequently, we obtain $\|\bm M_{11}\|_\infty<\gamma$ for $\epsilon\in(0,\epsilon^*)$. 
  
  The stability of $(I+\bm M_{11}\tilde{\bm\Delta}_{11})^{-1}$ follows from the small gain theorem, ensuring the synchronization of the first component. Furthermore, it holds that 
		\begin{align}
			 y_{11}=e_{11}+\mathbf{1}_{r_1} y_0,\label{eq: y1decomp}
		\end{align}
		where $y_0$  represents the synchronized output trajectory in the first component.
		
		We then prove the synchronization in the $j$-th component for $j=2,\dots,\mu$. Without loss of generality, we analyze (\ref{eq: yi*}) for $j=2$, noting that the analysis applies similarly to agents in other components. From Lemma~\ref{lem: laplaciancomb}, we have $-(L_{22}^{-1}L_{21})\mathbf{1}_{r_1}=\mathbf{1}_{r_2}$.
		Substituting (\ref{eq: y1decomp}) into (\ref{eq: yi*}) yields
		\begin{align*}
			 y_{22}=&(I+\epsilon\bm P_{22}(L_{22}\otimes \bm K_2))^{-1}\bm P_{22}(w_{22}\\
			& -\epsilon (I_{r_2}\otimes \bm K_2) (L_{22}L_{22}^{-1}L_{21}\otimes I_m)(e_{11}+\mathbf{1}_{r_1}y_0))\\
			=&(I+\epsilon\bm P_{22}(L_{22}\otimes \bm K_2))^{-1}\bm P_{22}(w_{22}-\epsilon (I_{r_2}\otimes \bm K_2)\\
			&\times \!(L_{21}\!\otimes\! I_m){e}_{11}
			\!+\!\epsilon (I_{r_2}\!\otimes\! \bm K_2)(L_{22}\!\otimes\! I_m)(\mathbf{1}_{r_2}\!\otimes\! {y}_0)).
		\end{align*}
		It suffices to show that the agents in $\mathcal P_2$ can achieve synchronization with the output trajectory ${y}_{0}$, which is equivalent to demonstrating that $(I+\epsilon\bm P_{22}(L_{22}\otimes \bm K_2))^{-1}$ is internally stable, the semi-stable poles of $\bm P_{22}$ are cancelled by the zeros of $(I+\epsilon\bm P_{22}(L_{22}\otimes \bm K_2))^{-1}$ and the output signal $ y_{22}$ can track the signal $\mathbf{1}_{r_2}\otimes  y_{0}$, as illustrate in Fig.~\ref{fig: tracking}.
		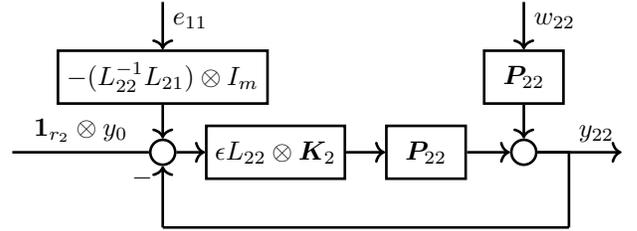
\begin{figure}[htbp]
			\centering
			\tikzstyle{block} = [draw, rectangle,
			minimum height=2em, minimum width=3em]
			\tikzstyle{sum} = [draw, circle, node distance=2.2cm]
			\tikzstyle{input} = [coordinate]
			\tikzstyle{output} = [coordinate]
			\tikzstyle{divide} = [coordinate]
			\tikzstyle{pinstyle} = [pin edge={to-,thin,black}]
			
			\begin{tikzpicture}[auto]
				\node [divide, name=R, node distance=1cm] (R) {};
				\node [sum,line width=0.35mm, above of=R,node distance=1cm] (divide1) {};
				\node [divide, left of=divide1,node distance=2cm] (input) {};
				\node [block, line width=0.35mm,right of=divide1, node distance=1.5cm] (H) {$\epsilon L_{22} \otimes \bm K_2$};
				\node [block,line width=0.35mm, right of=H, node distance=2cm] (G) {$\bm P_{22}$};
				\node [sum, line width=0.35mm, right of=G,node distance=1.3cm] (sum1) {};
				\node [block, line width=0.35mm, above of=sum1,node distance=1cm] (K) {$\bm P_{22}$};
				\node [block, line width=0.35mm, above of=divide1,node distance=1cm] (L21) {$-(L_{22}^{-1}L_{21})\otimes I_m$};
				\node[divide, above of=K,node distance=1cm](x){};
				\node[divide, above of=L21,node distance=1cm](ydis){};
				\node [divide, right of=R, node distance=5.4cm] (T) {};
				\node[divide, right of=sum1,node distance=0.8cm](divide3){};
				\node[divide, right of=divide3,node distance=0.5cm](end){};
				\draw [line width=0.35mm,->] (R) -- node[pos=0.8]{$-$}node {} (divide1);
				\draw [line width=0.35mm,-] (input) -- node[pos=0.5]{$\mathbf{1}_{r_2}\otimes{y}_0$} node{}(divide1);
				\draw [line width=0.35mm,->] (divide1) -- node {} (H);
				\draw [line width=0.35mm,->] node {} (H)-- node {} (G);
				\draw [line width=0.35mm,-] (sum1) -|  node{}(T);
				\draw [line width=0.35mm,->] (G) -- node {} (sum1);
				\draw [line width=0.35mm,-]  (T) |-(R);
				\draw [line width=0.35mm,->] (x) -- node[pos=0.4]{$w_{22}$} node{}(K);
				\draw [line width=0.35mm,->] (ydis) -- node[pos=0.4]{$e_{11}$} node{}(L21);
				\draw [line width=0.35mm,->] (K) -- node {} (sum1);
				\draw [line width=0.35mm,->] (L21) -- node {} (divide1);
				\draw [line width=0.35mm,->] (sum1) --  node [pos=0.7]{$ y_{22}$}(end);
			\end{tikzpicture}
			\caption{\label{fig: tracking}Block diagram of a feedback stability and tracking problem.}
		\end{figure}
		
		Assume that $D_2$ is an optimal diagonal scaling matrix corresponding to $\phi_{\mathrm{ess}}(L_{22})$. That $(I+\epsilon\bm P_{22}(L_{22}\otimes \bm K_2))^{-1}$ is stable is equivalent to $(I+\epsilon\tilde{\bm P}_{22}(D_2^{-1}L_{22}D_2)\otimes I_m)^{-1}$ is stable,
		where
		$$\tilde{\bm P}_{22}=(D_2^{-1}\otimes I_m)\bm P_{22}(I_{r_2}\otimes \bm K_2)(D_2\otimes I_m).$$ The stability of $(I+\epsilon\tilde{\bm P}_{22}(D_2^{-1}L_{22}D_2)\otimes I_m)^{-1}$ can then be established using a similar approach as for $\bm S_{11}$. Since the agents dynamic $\bm P_{22}$ inherently includes its internal model \cite{bengtsson1977output}, the semi-stable poles of $\bm P_{22}$ are cancelled by the zeros of $(I+\epsilon\bm P_{22}(L_{22}\otimes \bm K_2))^{-1}$.
		Moreover, the agents in $\bm P_{22}$ share the same poles on the unit circle $e^{j\Omega}$ with the input signal $\mathbf{1}_{r_2}\otimes y_0$, according to the internal model principle \cite{francis1976internal}, the outputs signal $y_{22}$ is able to track $\mathbf{1}_{r_2}\otimes y_0$ without steady state error. The theorem is proved.		
	\end{proof}

\end{document}